   \newcommand{\InSubmitVer}[1]{}%
   \newcommand{\InNotSubmitVer}[1]{#1}%
   \newcommand{\InSubmitVer}[1]{#1}%
   \newcommand{\InNotSubmitVer}[1]{}%
\numberwithin{figure}{section}
\numberwithin{table}{section}
\numberwithin{equation}{section}
\newtheorem{theorem}{Theorem}[section] % section
\newtheorem{lemma}[theorem]{Lemma}
   \newtheorem{defn}[theorem]{Definition}%
   \newtheorem{observation}[theorem]{Observation}
\newlength{\savedparindent}
\renewcommand{\Re}{{\rm I\!\hspace{-0.025em} R}}
\newcommand{\figlab}[1]{\label{fig:#1}}
\newcommand{\figref}[1]{Figure~\ref{fig:#1}}
\newcommand{\eqlab}[1]{\label{equation:#1}}
\newcommand{\Eqref}[1]{Eq.~(\ref{equation:#1})}
\newcommand{\Eqrefpage}[1]{Eq.~(\ref{equation:#1})%
   $_\text{p\pageref{equation:#1}}$}
\newcommand{\apndlab}[1]{\label{apnd:#1}}
\newcommand{\apndref}[1]{Appendix~\ref{apnd:#1}}
\newcommand{\lemlab}[1]{\label{lemma:#1}}
\newcommand{\lemref}[1]{Lemma~\ref{lemma:#1}}
\newcommand{\obslab}[1]{\label{observation:#1}}
\newcommand{\obsref}[1]{Observation~\ref{observation:#1}}
\newcommand{\seclab}[1]{\label{section:#1}}
\newcommand{\secref}[1]{Section~\ref{section:#1}}
\newcommand{\thmlab}[1]{\label{theorem:#1}}
\newcommand{\thmref}[1]{Theorem~\ref{theorem:#1}}
\newcommand{\thmrefpage}[1]{Theorem~\ref{theorem:#1}%
   $_\text{p\pageref{theorem:#1}}$}
\providecommand{\deflab}[1]{\label{def:#1}}
\newcommand{\defref}[1]{Definition~\ref{def:#1}}
\newcommand{\blockerY}[2]{\mathrm{blocker{}Set}\pth{#1, #2}}
\newcommand{\CapSet}{\mathcal{C}}
\newcommand{\coneY}[2]{\mathrm{cone}\pth{#1, #2}}
\newcommand{\Term}[1]{\textsf{#1}}
\newcommand{\PTAS}{\Term{PTAS}\xspace}
\newcommand{\AlgorithmI}[1]{{\textcolor[named]{RedViolet}{\texttt{\bf{#1}}}}}
\newcommand{\Algorithm}[1]{{\AlgorithmI{#1}\index{algorithm!#1@{\AlgorithmI{#1}}}}}
\newcommand{\GreedySeparator}{\Algorithm{GreedySeparator}\xspace}
\newcommand{\etal}{\textit{et~al.}\xspace}
\newcommand{\inducedBall}[1]{\mathrm{ball}_{\mathrm{in}}\pth{#1}}
\newcommand{\Ring}{\mathcal{R}}
\definecolor{blue25}{rgb}{0,0,0.85}%
\newcommand{\emphic}[2]{%
   \textcolor{blue25}{%
      \textbf{\emph{#1}}}%
   \index{#2}}
\newcommand{\emphi}[1]{\emphic{#1}{#1}}
\newcommand{\pth}[2][\!]{#1\left({#2}\right)}%
\newcommand{\VorX}[1]{\mathcal{V} \pth{#1}}%
\newcommand{\VorCell}[2]{\mathcal{C}_{#2} \pth{#1}}%
\newcommand{\brc}[1]{\left\{ {#1} \right\}}%
\newcommand{\sep}[1]{\,\left|\, {#1} \MakeBig\right.}%
\newcommand{\dist}[2]{\left\| {#1} - {#2}  \right\|}%
\newcommand{\distChar}{\mathsf{d}}%
\newcommand{\distSet}[2]{\distChar\pth{#1, #2}}
\newcommand{\MakeBig} {\rule[-.2cm]{0cm}{0.4cm}}
\newcommand{\MakeSBig}{\rule[0.0cm]{0.0cm}{0.39cm}} % really small
\newcommand{\MakesBig}{\rule[0.0cm]{0.0cm}{0.385cm}} % really small
\newcommand{\bisectorX}[2]{\beta_{#1, #2}}
\newcommand{\pencilX}[2]{\mathrm{pencil}\pth{#1, #2}}
\newcommand{\pencilF}[1]{\mathrm{pencil}\pth{#1}}
\newcommand{\pencilPnt}[2]{\mathrm{pencil}_{#1}\pth{#2}}
\newcommand{\pencilSet}[2]{\mathrm{pencil}_{#1}\pth{#2}}
\newcommand{\trailX}[1]{\mathrm{trail}\pth{#1}}
\newcommand{\mbX}[1]{\mathrm{mb}\pth{#1}}
\newcommand{\SarielThanks}[1]{%
   \thanks{%
      Department of Computer Science; %
      University of Illinois; %
      201 N. Goodwin Avenue; %
      Urbana, IL, 61801, USA; %
%      {\tt sariel\atgen{}illinois.edu}; %
      {\tt \url{http://sarielhp.org}.} %
      #1%
   }
}
\newcommand{\VijayThanks}[1]{%
   \thanks{%
      Department of Computer Science; %
      University of Illinois; %
      201 N. Goodwin Avenue; %
      Urbana, IL, 61801, USA; %
%      {\tt bhattip2\atgen{}illinois.edu}; %
      {\tt \url{http://vspvijay.com}.} %
      #1
   }
}
\newcommand{\Ex}[2][\!]{\mathop{\mathbf{E}}#1\pbrcx{#2}}
\newcommand{\Prob}[1]{\mathop{\mathbf{Pr}}\!\pbrcx{#1}}
\newcommand{\cardin}[1]{\left| {#1} \right|}%
\newcommand{\si}[1]{#1}
\newcommand{\CHX}[1]{\mathcal{CH}\pth{#1}}
\newcommand{\eps}{{\varepsilon}}%
\newcommand{\PntSet}{\mathsf{P}}%
\newcommand{\PntSetB}{\mathsf{Q}}%
\newcommand{\SetL}{\mathsf{L}}%
\newcommand{\SetO}{\mathsf{O}}%
\newcommand{\SetX}{\mathsf{X}}%
\newcommand{\SetY}{\mathsf{Y}}%
\newcommand{\SepSet}{\mathcal{Z}}%
\newcommand{\pnt}{\mathsf{p}}%
\newcommand{\pntA}{\mathsf{q}}%
\newcommand{\pntB}{\mathsf{s}}%
\newcommand{\pntC}{\mathsf{t}}%
\newcommand{\setA}{\mathsf{X}}
\newcommand{\setB}{\mathsf{Y}}
\newcommand{\ballX}[2]{\mathrm{ball}\pth{#1, #2}}
\newcommand{\ballA}{b}
\newcommand{\ballB}{b'}
\newcommand{\leftover}{f}
\newcommand{\Leftover}{F}
\newcommand{\cenA}{\psi}
\newcommand{\sphereB}{\ensuremath{\mathbb{S}}}
\newcommand{\radA}{{r}}
\newcommand{\radB}{{r}'}
\providecommand{\pbrc}[2][\!\!]{#1\left[ {#2} \MakeBig \right]}
\providecommand{\pbrcS}[2][]{\left[ {#2} \right]}
\providecommand{\pbrcx}[1]{\left[ {#1} \right]}
\providecommand{\ceil}[1]{\left\lceil {#1} \right\rceil}
\newcommand{\radN}{\ell}
\newcommand{\dblCd}{c^d_{\mathrm{dbl}}}
\newcommand{\constSep}{c_{\mathrm{sep}}}
\newcommand{\nGuard}{\mathrm{g}}
\newcommand{\maxGuard}{\mathsf{b}}
\newcommand{\HH}{\mathcal{H}}%
\newcommand{\HHB}{\mathcal{G}}%
\newcommand{\PP}{\mathcal{P}}
\newcommand{\grd}{\mathsf{g}}
\newcommand{\BadBallsA}{\mathcal{B}}
\newcommand{\BadBalls}[3]{\mathcal{B}\pth{#1, #2, #3}}
\newcommand{\violatorX}[3]{\mathcal{BP}\pth{#1, #2, #3}}
\newcommand{\bFeaturesX}[4][\!]{\mathcal{F}_{\mathrm{bad}}\pth[#1]{#2, #3,
      #4}}
\newcommand{\Grid}{\mathsf{G}}
\newcommand{\Cell}{C}
\newcommand{\projY}[2]{\mathsf{nn}\pth{#1,#2}}
\newcommand{\lengthX}[1]{\mathrm{len}\pth{#1}}
\newcommand{\feature}{\mathsf{f}}
\newcommand{\sitesX}[1]{\mathrm{sites}\pth{#1}}
\newcommand{\spanF}[1]{\mathrm{span}\pth{#1}}
\newcommand{\halfspaceX}[2]{\mathcal{H}\pth{#1,#2}}
\newcommand{\shellF}[1]{\mathrm{shell}\pth{#1}}
\newcommand{\indet}[2]{\mathcal{ID}\pth{#1,#2}}
\newcommand{\annihilateF}[2]{\mathcal{A}_{#1}\pth{#2}}
\newcommand{\Flat}{\mathsf{F}}
\newcommand{\annihilateX}[1]{\mathcal{A}\pth{#1}}
\newcommand{\Graph}{\mathsf{G}}%
\newcommand{\Vertices}{\mathsf{V}}%
\newcommand{\Edges}{\mathsf{E}}
\newcommand{\hitSetSizeX}[1]{\nu\pth{#1}}
\renewcommand{\th}{\si{th}\xspace}
\newcommand{\nnX}[2]{\mathsf{nn}\pth{#1, #2}}
\providecommand{\ComplexityClass}[1]{{{\textcolor[named]{OliveGreen}{%
      \textsc{#1}}}}}
\providecommand{\NPHard}{{\ComplexityClass{NP-Hard}}%
   \index{NP!hard}\xspace}
\newcommand{\remove}[1]{}
\newcommand{\oSepSetB}{T}%
\newcommand{\oSepSet}{\SepSet^{\sphereB}}%
\newcommand{\SetXX}{\Xi}
\newcommand{\pitemlab}[1]{\label{p:item:#1}}
\newcommand{\pitemref}[1]{(P\ref{p:item:#1})}
\newcommand{\pitemrefpage}[1]{(P\ref{p:item:#1})%
   $_\text{p\pageref{p:item:#1}}$}
\newcommand{\lSize}{\ell}
\newcommand{\Lopt}{\mathcal{L}}
\newcommand{\Opt}{\mathcal{O}}
\newcommand{\SetBalls}{\EuScript{B}}%
\newcommand{\ArrX}[1]{\mathcal{A}\pth{#1}}%
\newcommand{\Assign}{:=}
\providecommand{\CodeComment}[1]{\textcolor{blue}{\texttt{#1}}}
\newenvironment{myprogram}{
   \begin{minipage}{4.0in}
   \begin{tabbing}
       \ \ \ \ \= \ \ \ \= \ \ \ \ \= \ \ \ \ \= \ \ \ \ \=
      \ \ \ \ \= \ \ \ \ \= \ \ \ \ \= \ \ \ \ \=
      \ \ \ \ \= \ \ \ \ \= \ \ \ \ \= \ \ \ \ \= \kill
}{
   \end{tabbing}
   \end{minipage}
}
\newcommand{\AlgorithmBox}[1]{
   \fbox{\begin{myprogram} 
          #1
          \end{myprogram}
       }
   }
\newcommand{\Do}{{\small\bf do}\ }
\newcommand{\While}{{\small\bf while}\ }
\newcommand{\algClusterLocal}{\Algorithm{ClusterLocalOpt}\xspace}
\newcommand{\lChuckSize}{\lambda}
\newcommand{\Iterations}{I}
\begin{document}

\InSubmitVer{\linenumbers}

\title{Separating a Voronoi Diagram via Local Search%
   \footnote{%
      Work on this paper by the second author was partially supported
      by NSF AF award CCF-0915984, and NSF AF award CCF-1217462.}%
}%

\author{%
   Vijay V. S. P.  Bhattiprolu%
%   V. S. P. Vijay Bhattiprolu%
   \VijayThanks{}%
   \and%
   Sariel Har-Peled%
   \SarielThanks{}%
}%
\date{\today}

\maketitle

\begin{abstract}
    Given a set $\PntSet$ of $n$ points in $\Re^d$, we show how to
    insert a set $\SetX$ of $O\pth{ n^{1-1/d} }$ additional points,
    such that $\PntSet$ can be broken into two sets $\PntSet_1$ and
    $\PntSet_2$, of roughly equal size, such that in the Voronoi
    diagram $\VorX{\PntSet \cup \SetX}$, the cells of $\PntSet_1$ do
    not touch the cells of $\PntSet_2$; that is, $\SetX$ separates
    $\PntSet_1$ from $\PntSet_2$ in the Voronoi diagram (or in the
    dual Delaunay triangulation).  Given such a partition
    $(\PntSet_1,\PntSet_2)$ of $\PntSet$, we present approximation
    algorithms to compute the minimum size separator realizing this
    partition.

    Finally, we present a simple local search algorithm that is a
    \PTAS for geometric hitting set of fat objects (which can also be
    used to approximate the optimal Voronoi partition).
\end{abstract}

\InSubmitVer{%
   \thispagestyle{empty}%
   \newpage%
   \setcounter{page}{1}%
}

\section{Introduction}

\paragraph{Divide and conquer.}
Many algorithms work by partitioning the input into a small number of
pieces, of roughly equal size, with little interaction between the
different pieces, and then recursing on these pieces. One natural way
to compute such partitions for graphs is via the usage of separators.

\paragraph{Separators.}
A (vertex) separator of a graph $\Graph = (\Vertices,\Edges)$,
informally, is a ``small'' set $\SepSet \subseteq \Vertices$ whose
removal breaks the graph into two or more connected subgraphs, each of
which is of size at most $n/c$, where $c$ is some constant larger than
one. As a concrete example, any tree with $n$ vertices has a single
vertex, which can be computed in linear time, such that its removal
breaks the tree into subtrees, each with at most $n/2$ vertices.

\paragraph{Separators in planar graphs.}

In 1977, Lipton and Tarjan \cite{lt-stpg-77, lt-stpg-79} proved that
any planar graph with $n$ vertices contains a separator of size
$O\pth{\sqrt{n}}$, and it can be computed in linear time.
Specifically, there exists a separator of size $O(\sqrt{n})$ that
partitions the graph into two disjoint subgraphs each containing at
most $2n/3$ vertices.

There has been a substantial amount of work on planar separators in
the last four decades, and they are widely used in data-structures and
algorithms for planar graphs, including
\begin{inparaenum}[(i)]
    \item shortest paths \cite{fr-pgnwe-06},
    \item distance oracles \cite{svy-dosg-09},
    \item max flow \cite{ek-ltamf-13}, and
    \item approximation algorithms for TSP \cite{k-ltast-08}.
\end{inparaenum}
This list is a far cry from being exhaustive, and is a somewhat
arbitrary selection of some recent results on the topic.

\paragraph{Planar separators via geometry.}

Any planar graph can be realized as a set of interior disjoint disks,
where a pair of disks touch each other, if and only if the
corresponding vertices have an edge between them. This is known as the
circle packing theorem \cite{pa-cg-95}, sometimes referred to in the
literature as Koebe-Andreev-Thurston theorem. Its original proof goes
back to Koebe's work in 1936 \cite{k-kdka-36}.

Surprisingly, the existence of a planar separator is an easy
consequence of the circle packing theorem. This was proved by Miller
\etal \cite{mttv-sspnng-97}, and their proof was recently simplified
by Har-Peled \cite{h-speps-11}. Among other things, Miller \etal
showed that given a set of $n$ balls in $\Re^d$, such that no point is
covered more than $k$ times, the intersection graph of the balls has a
separator of size $O\pth{k^{1/d} n^{1-1/d}}$. This in turn implies
that the $k$-nearest neighbor graph of a set of points in $\Re^d$, has
a small separator \cite{mttv-sspnng-97, h-speps-11}.  Various
extensions of this technique were described by Smith and Wormald
\cite{sw-gsta-98}.

\paragraph{Other separators.}

Small separators are known to exist for many other families of
graphs. These include graphs
\begin{inparaenum}[(i)]
    \item with bounded tree width \cite{bptw-betsu-10},
    \item with bounded genus \cite{ght-stgbg-84},
    \item that are minor free \cite{ast-stgem-90}, and
    \item that are grids.
\end{inparaenum}

\begin{figure}
    \centerline{%
       \begin{tabular}{\si{ccccc}}
           \begin{minipage}{0.5cm}
               \vspace{-6cm} (A) \vspace{3cm}
           \end{minipage}
           & \includegraphics[page=1,scale=0.8]{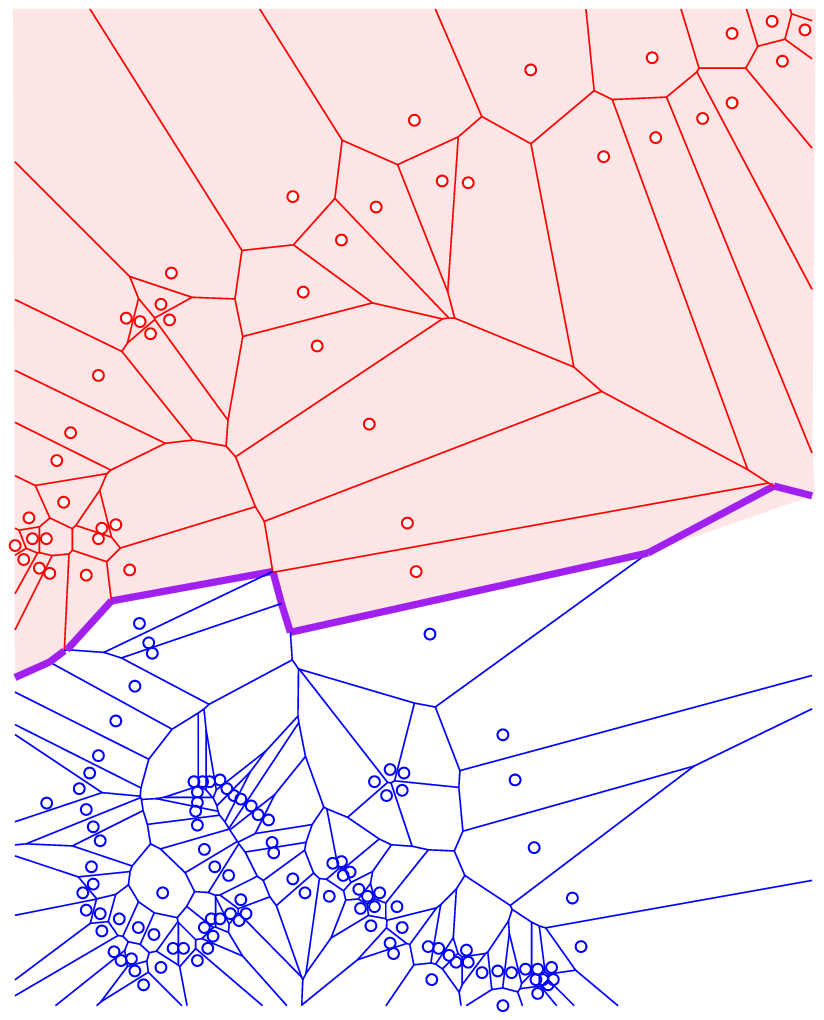}
           &
           &
           \begin{minipage}{0.5cm}
               \vspace{-6cm} (B) \vspace{3cm}
           \end{minipage}%
           &\includegraphics[page=2,scale=0.8]{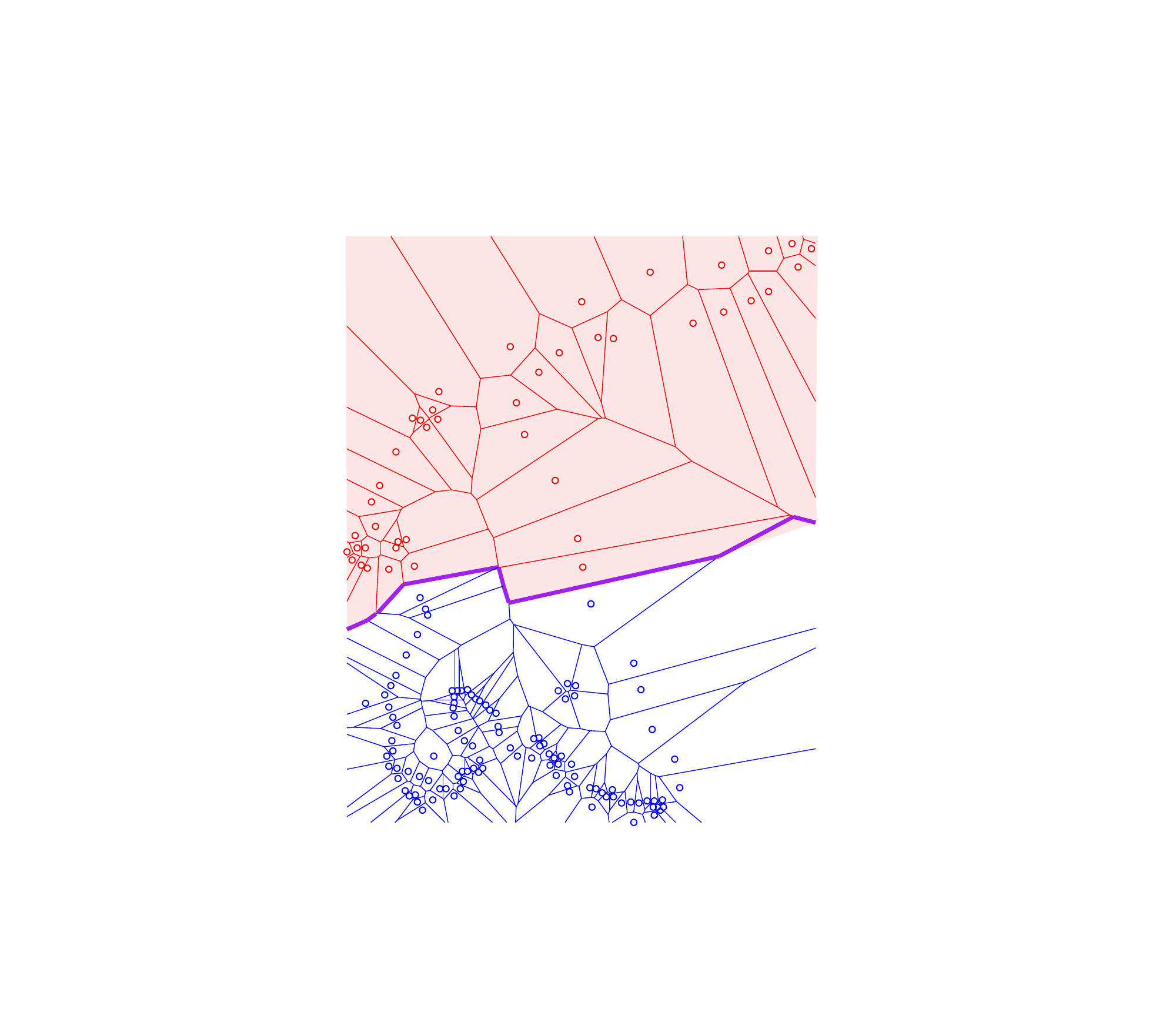}
       \end{tabular}%
    }%
    \vspace{-0.5cm}%
    \caption{(A) A Voronoi partition. (B) A separator realizing it.}%
    \figlab{v:partition}
\end{figure}

\paragraph{Voronoi separators.}

In this paper, we are interested in geometric separation in a Voronoi
diagram. Voronoi diagrams are fundamental geometric structure, see
\cite{akl-vddt-13}. Specifically, given a set $\PntSet$ of points in
$\Re^d$, we are interested in inserting a small set of new points
$\SetX$, such that there is a balanced partition of $\PntSet$ into two
sets $\PntSet_1, \PntSet_2$, such that no cell of $\PntSet_1$ touches
a cell of $\PntSet_2$ in the Voronoi diagram $\VorX{\PntSet \cup
   \SetX}$. Note, that such a set $\SetX$ also separates $\PntSet_1$
and $\PntSet_2$ in the Delaunay triangulation of $\PntSet \cup \SetX$.

\paragraph{Why Voronoi separators are interesting?}

Some meshing algorithms rely on computing a Delaunay triangulation of
geometric models to get good triangulations that describe solid
bodies. Such meshes in turn are fed into numerical solvers to simulate
various physical processes. To get good triangulations, one performs a
Delaunay refinement that involves inserting new points into the
triangulations, to guarantee that the resulting elements are well
behaved. Since the underlying geometric models can be quite
complicated and these refinement processes can be computationally
intensive, it is natural to try and break up the data in a balanced
way, and Voronoi separators provide one way to do so.

More generally, small Voronoi separators provide a way to break up a
point set in such a way that there is limited interaction between two
pieces of the data.

\paragraph{Geometric hitting set.}

Given a set of objects in $\Re^d$, the problem of finding a small
number of points that stab all the objects is an instance of geometric
hitting set. There is quite a bit of research on this problem. In
particular, the problem is \NPHard for almost any natural instance,
but a polynomial time $(1+\eps)$-approximation algorithm is known for
the case of balls in $\Re^d$ \cite{c-ptasp-03}, where one is allowed
to place the stabbing points anywhere. The discrete variant of this
problem, where there is a set of allowable locations to place the
stabbing points, seems to be significantly harder and only weaker
results are known \cite{hl-wgscp-12}.

One of the more interesting versions of the geometric hitting set
problem, is the art gallery problem, where one is given a simple
polygon in the plane, and one has to select a set of points (inside or
on the boundary of the polygon) that ``see'' the whole polygon.  While
much research has gone into variants of this problem \cite{o-agta-87},
nothing is known as far as an approximation algorithm (for the general
problem). The difficulty arises from the underlying set system being
infinite, see \cite{eh-ggt-06} for some efforts in better
understanding this problem.

\paragraph{Geometric local search.}

Relatively little is known regarding local search methods for
geometric approximation problems. Arya \etal \cite{agkmp-lshkm-01}
gave a local search method for approximating $k$-median clustering by
a constant factor, and this was recently simplified by Gupta and
Tangwongsan \cite{gt-salsa-08}.

Closer to our settings, Mustafa and Ray \cite{mr-pghsp-09} gave a
local search algorithm for the discrete hitting set problem over
pseudo disks and $r$-admissible regions in the plane, which yields a 
\PTAS.  Chan and Har-Peled \cite{ch-aamis-12} gave a local search \PTAS 
for the independent set problem over fat objects, and for pseudodisks 
in the plane. Both works use separators in proving the quality of 
approximation.

%\SaveIndent

\subsection{Our Results}

In this paper we give algorithms for the following:

\smallskip

\begin{enumerate}[(A)]%[(A)]%
    % \RestoreIndent%
    \item \textbf{Computing a small Voronoi separator.} %
    Given a set $\PntSet$ of $n$ points in $\Re^d$, we show how to
    compute, in expected linear time, a balanced Voronoi separator of
    size $O\pth{n^{1-1/d}}$. This is described in
    \secref{balanced:separator}. The existence of such a separator was
    not known before, and our proof is relatively simple and elegant.
    
    Such a separator can be used to break a large data-set into a
    small number of chunks, such that each chunk can be handled
    independently -- potentially in parallel on different computers.

    \item \textbf{Exact algorithm for computing the smallest Voronoi
       separator realizing a given partition.} %
    In \secref{exact:separator}, given a partition $(\PntSet_1,
    \PntSet_2)$ of a point set $\PntSet$ in $\Re^d$, we describe an
    algorithm that computes the minimum size Voronoi separator
    realizing this separation.  The running time of the algorithm is
    $n^{O\pth[]{\maxGuard}}$, where $\maxGuard$ is the cardinality of
    the optimal separating sets (the $O$ notation is hiding a constant
    that depends on $d$).

    \item \textbf{Constant approximation algorithm for the smallest
       Voronoi separator realizing a given partition.} %
    In \secref{const:factor}, we describe how to compute a constant
    factor approximation to the size of the minimal Voronoi separator
    for a given partition of a set in $\Re^d$. This is the natural
    extension of the greedy algorithm for geometric hitting set of
    balls, except that in this case, the set of balls is infinite and
    is encoded implicitly, which somewhat complicates things.

    \item \textbf{A \PTAS for the smallest Voronoi separator realizing
       a given partition.} %
    In \secref{PTAS}, we present a polynomial time approximation
    scheme to compute a Voronoi separator, realizing a given
    partition, whose size is a $(1+\eps)$-approximation to the size of
    the minimal Voronoi separator for a given partition of a set in
    $\Re^d$. The running time is $n^{O(1/\eps^d)}$.
    
    Interestingly, the new algorithm provides a \PTAS for the
    geometric hitting set problem (for balls), that unlike previous
    approaches that worked top-down \cite{c-ptasp-03, ejs-ptasg-05},
    works more in a bottom-up approach. Note, that since our set of
    balls that needs to be pierced is infinite, and is defined
    implicitly, it is not obvious a priori how to use the previous
    algorithms in this case. 
    % It is however, possible to obtain a \PTAS through a top-down
    % approach, similar to the approach in \cite{c-ptasp-03}.

    \medskip%
    \textbf{Sketch of algorithm.} %
    The new algorithm works by first computing a ``dirty'' constant
    approximation hitting set using a greedy approach (this is
    relatively standard). Somewhat oversimplifying, the algorithm next
    clusters this large hitting set into tight clusters of size $k =
    O(1/\eps^d)$ each. It then replaces each such cluster of the weak
    hitting set, by the optimal hitting set that can pierce the same
    set of balls, computed by using the exact algorithm -- which is
    ``fast'' since the number of piercing points is at most
    $O(1/\eps^d)$. In the end of this process the resulting set of
    points is the desired hitting set.  Namely, the new approximation
    algorithm reduces the given geometric hitting set instance, into
    $O(m/k)$ smaller instances where $m$ is the size of the overall
    optimal hitting set and each of the smaller instances has an
    optimal hitting set of size $O(k)$.
    
    % \medskip%

    For the analysis of this algorithm, we need a strengthened version
    of the separator theorem. See \thmrefpage{ball:separator} for
    details.
    
    \item \textbf{Local search \PTAS for continuous geometric hitting
       set problems.} %
    An interesting consequence of the new bottom-up \PTAS, is that it
    leads to a simple local search algorithm for geometric hitting set
    problems for fat objects. Specifically, in \secref{local:search},
    we show that the algorithm starts with any hitting set (of the
    given objects) and continues to make local improvements via
    exchanges of size at most $O\pth{1/\eps^d}$, until no such
    improvement is possible, yielding a \PTAS.  The analysis of the
    local search algorithm is subtle requiring to cluster
    simultaneously the locally optimal solution, and the optimal
    solution, and matching these clusters to each other.

\end{enumerate}
%\end{compactenum}

\paragraph{Significance of Results.}

Our separator result provides a new way to perform geometric divide
and conquer for Voronoi diagrams (or Delaunay triangulations).  The
\PTAS for the Voronoi partition problem makes progress on a geometric
hitting set problem where the ranges to be hit are defined implicitly,
and their number is infinite, thus pushing further the envelope of
what geometric hitting set problems can be solved efficiently. Our
local search algorithm is to our knowledge the first local search
algorithm for geometric hitting set -- it is simple, easy to
implement, and might perform well in practice (this remains to be
verified experimentally, naturally). More importantly, it shows that
local search algorithms are potentially more widely applicable in
geometric settings.

\paragraph{How our results relate to known results?}

Our separator result is similar in spirit (but not in details!) to the
work of Miller \etal \cite{mttv-sspnng-97} on a separator for a
$k$-ply set of balls -- the main difference being that Voronoi cells
behave very differently than balls do. Arguably, our proof is
significantly simpler and more elegant. Our bottom-up \PTAS approach
seems to be new, and should be applicable to other problems. Having
said that, it seems like the top-down approaches \cite{c-ptasp-03,
   ejs-ptasg-05} potentially can be modified to work in this case, but
the low level details seem to be significantly more complicated, and
the difficulty in making them work was the main motivation for
developing the new approach. The analysis of our local search
algorithm seems to be new -- in particular, the idea of incrementally
clustering in sync optimal and local solutions.  Of course, the basic
idea of using separators in analyzing local search algorithms appear
in the work of Mustafa and Ray \cite{mr-pghsp-09} and Chan and
Har-Peled \cite{ch-aamis-12}.

% Our separator result follows from extending the proof technique of
% Har-Peled \cite{h-speps-11} to this new setting.

% Our approximation algorithms work by reducing the problem to a
% hitting set problem. Unfortunately, the set system is infinite (as
% in the art gallery problem), but it has sufficient structure, such
% that it can be efficiently approximated.

% Also, we believe that our \PTAS approach is potentially more
% practical than previous algorithms.

\section{Preliminaries}

For a point set $\PntSet \subseteq \Re^d$, the \emphi{Voronoi diagram}
of $\PntSet$, denoted by $\VorX{\PntSet}$ is the partition of space
into convex cells, where the \emphi{Voronoi cell} of $\pnt \in
\PntSet$ is
\begin{equation*}
    \VorCell{\pnt}{\PntSet} = \brc{\pntA \in \Re^d \sep{
          \dist{\pntA}{\pnt} \leq \distSet{\pntA}{\PntSet}}},
\end{equation*}
where $\distSet{\pntA}{\PntSet} = \min_{\pntB \in \PntSet}
\dist{\pntA}{\pntB}$ is the distance of $\pntA$ to the set $\PntSet$.
Voronoi diagrams are a staple topic in Computational Geometry, see
\cite{bcko-cgaa-08}, and we include the definitions here for the sake
of completeness. In the plane, the Voronoi diagram has linear
descriptive complexity. For a point set $\PntSet$, and points $\pnt,
\pntA \in \PntSet$, the geometric loci of all points in $\Re^d$ that
have both $\pnt$ and $\pntA$ as nearest neighbor, is the
\emphi{bisector} of $\pnt$ and $\pntA$ -- it is denoted by
$\bisectorX{\pnt}{\pntA} = \brc{\pntB \in \Re^d
   \sep{\dist{\pntB}{\pnt} = \dist{\pntB}{\pntA} =
      \distSet{\pntB}{\PntSet}}}$.  A point $\pntB \in
\bisectorX{\pnt}{\pntA}$ is the center of a ball whose interior does
not contain any point of $\PntSet$ and that has $\pnt$ and $\pntA$ on
its boundary. The set of all such balls induced by
$\bisectorX{\pnt}{\pntA}$ is the \emphi{pencil} of $\pnt$ and $\pntA$,
denoted by $\pencilX{\pnt}{\pntA}$.

\begin{defn}
    Let $\PntSet$ be a set of points in $\Re^d$, and $\PntSet_1$ and
    $\PntSet_2$ be two disjoint subsets of $\PntSet$. The sets
    $\PntSet_1$ and $\PntSet_2$ are \emphi{Voronoi separated} in
    $\PntSet$ if for all $\pnt_1 \in \PntSet_1$ and $\pnt_2 \in
    \PntSet_2$, we have that their Voronoi cells are disjoint; that
    is, $\VorCell{\pnt_1}{\PntSet} \cap \VorCell{\pnt_2}{\PntSet} =
    \emptyset$.
\end{defn}

\begin{defn}
    For a set $\PntSet$, a \emphi{partition} of $\PntSet$ is a pair of
    sets $\pth[]{\PntSet_1, \PntSet_2}$, such that $\PntSet_1
    \subseteq \PntSet$, and $\PntSet_2 = \PntSet \setminus \PntSet_1$.
    A set $\SepSet$ is a \emphi{Voronoi separator} for a partition
    $\pth[]{\PntSet_1, \PntSet_2}$ of $\PntSet \subseteq \Re^d$, if
    $\PntSet_1$ and $\PntSet_2$ are Voronoi separated in $\PntSet \cup
    \SepSet$; that is, the Voronoi cells of $\PntSet_1$ in
    $\VorX{\PntSet \cup \SepSet}$ do not intersect the Voronoi cells
    of $\PntSet_2$. We will refer to the points of the separator
    $\SepSet$ as \emphi{guards}.
\end{defn}

See \figref{v:partition} for an example of the above definitions.

\begin{defn}%
    \deflab{dbl:constant}%
    For a ball $\ballA$, its \emphi{covering number} is the minimum
    number of (closed) balls of half the radius that are needed to
    cover it.  The \emphi{doubling constant} of a metric space is the
    maximum cover number over all possible balls.  Let $\dblCd$ be the
    doubling constant for $\Re^d$.
\end{defn}

The constant $\dblCd$ is exponential in $d$, and $\dblCd \leq
\ceil{2\sqrt{d}}^d$ -- indeed, cover a ball (say, of unit radius) by a
grid with sidelength $1/\sqrt{d}$, and observe that each grid cell has
diameter $1$, and as such can be covered by a ball of radius $1/2$.

\begin{figure}[t]%
    \begin{tabular}{\si{cc}}
        \begin{minipage}{0.4\linewidth}
            \centerline{\includegraphics{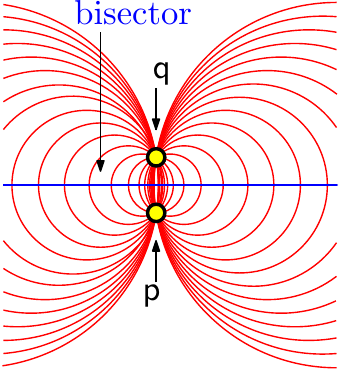}}%
        \end{minipage}
        &
        \begin{minipage}{0.4\linewidth}
            \centerline{\includegraphics[page=2]{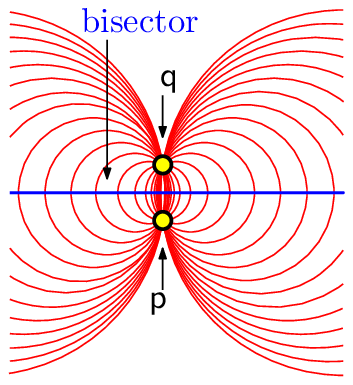}}%
        \end{minipage}
        \\
        (A) & (B)
    \end{tabular}
    \captionof{figure}{(A) The unbounded bisector induced by $\pnt$
       and $\pntA$. %
       (B) The pencil of $\pnt$ and $\pntA$. }%
    \figlab{hypercube}
\end{figure}%

\begin{defn}
    \deflab{projection}%
    For a closed set $\setA \subseteq \Re^d$, and a point $\pnt \in
    \Re^d$, the \emphi{projection} of $\pnt$ into $\setA$ is the
    closest point on $\setA$ to $\pnt$. We denote the projected point
    by $\projY{\pnt}{\setA}$.
\end{defn}

%% ------------------------------------------------------------------

\section{Computing a small Voronoi separator}
\seclab{balanced:separator}

\subsection{Preliminaries, and how to block a sphere}

Given a set $\PntSet$ of $n$ points in $\Re^d$, we show how to compute
a balanced Voronoi separator for $\PntSet$ of size $O\pth{n^{1-1/d}}$.

\begin{defn}
    A set $\setB \subseteq \setA \subseteq \Re^d$ is
    \emphi{$\radN$-dense} in $\setA$, if for any point $\pnt \in
    \setA$, there exists a point $\pntB \in \setB$, such that
    $\dist{\pnt}{\pntB} \leq \radN$.
\end{defn}

\begin{lemma}
    \lemlab{blocker}%
    Consider an arbitrary sphere $\sphereB$, and a point $\pnt \in
    \Re^d \setminus \sphereB$. Then one can compute, in constant time,
    a set of points $\PntSetB \subseteq \sphereB$, such that the
    Voronoi cell $\VorCell{\pnt}{\PntSetB \cup \brc{\pnt}}$ does not
    intersect $\sphereB$, and $\cardin{\PntSetB} = O(1)$. We denote
    the set $\PntSetB$ by $\blockerY{\pnt}{\sphereB}$.
\end{lemma}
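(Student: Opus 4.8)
The plan is to block the sphere with a \emph{single} guard, namely the point of $\sphereB$ nearest to $\pnt$. Write $B$ for the (closed) ball bounded by $\sphereB$, with center $\cenA$ and radius $\radA$; we treat $\pnt$ as lying outside $B$, which is the case of interest (and necessary: a point inside $B$ lies in its own Voronoi cell). Put $u=(\pnt-\cenA)/\dist{\pnt}{\cenA}$, and let $\pnt'=\projY{\pnt}{\sphereB}=\cenA+\radA\,u\in\sphereB$ be the projection of $\pnt$ onto $\sphereB$ (see \defref{projection}), which is computed in constant time. I will take $\PntSetB:=\blockerY{\pnt}{\sphereB}:=\brc{\pnt'}$, so $\cardin{\PntSetB}=1=O(1)$; if a larger constant-size guard set is wanted one may append arbitrary further points of $\sphereB$, since adding sites only shrinks $\VorCell{\pnt}{\PntSetB\cup\brc{\pnt}}$.

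It then remains to check that $\VorCell{\pnt}{\brc{\pnt,\pnt'}}\cap\sphereB=\emptyset$. That cell is the closed halfspace $H=\brc{x:\dist{x}{\pnt}\le\dist{x}{\pnt'}}$, bounded by the perpendicular bisector of the segment $\linesegX{\pnt}{\pnt'}$. Since $\cenA,\pnt',\pnt$ are collinear with $\pnt'$ lying on $\sphereB$ between $\cenA$ and $\pnt$, this segment points along $u$, and its midpoint $m=\cenA+\bigl(\radA+\tfrac{1}{2}\dist{\pnt}{\pnt'}\bigr)u$ already lies strictly outside $B$; the bisecting hyperplane is normal to $u$ through $m$. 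Projecting onto the line through $\cenA$ in direction $u$: every point of $B$ has coordinate at most $\langle\cenA,u\rangle+\radA$, whereas $m$ has the strictly larger coordinate $\langle\cenA,u\rangle+\radA+\tfrac{1}{2}\dist{\pnt}{\pnt'}$, and $\pnt$ sits on the large side. Hence all of $B$ lies strictly on the side of the bisector away from $\pnt$, so $H\cap B=\emptyset$ and a fortiori $H\cap\sphereB=\emptyset$.

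I do not expect a genuine obstacle here: the whole content is the single observation that the nearest point of $\sphereB$ already suffices, plus the one-line halfspace/ball separation above. The one thing to keep in mind is the hypothesis --- the argument really uses that $\pnt$ lies outside $B$, which is what drives the bisecting hyperplane entirely past $\sphereB$. If $\pnt$ were instead inside $B$ and close to $\sphereB$, no constant number of points of $\sphereB$ could separate $\pnt$'s cell from $\sphereB$ (the number needed grows roughly like $\log(\radA/\distSet{\pnt}{\sphereB})$), so the restriction $\pnt\in\Re^d\setminus\sphereB$ is precisely what makes the $O(1)$ bound attainable.
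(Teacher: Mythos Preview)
Your argument for the case that $\pnt$ lies \emph{outside} the closed ball $B$ bounded by $\sphereB$ is fine and in fact coincides with the paper's treatment of that case: the single projected point $\pnt'=\projY{\pnt}{\sphereB}$ suffices, and your bisecting-hyperplane computation is correct.

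The gap is that you have misread the hypothesis. Here $\sphereB$ denotes the $(d-1)$-dimensional sphere, not the solid ball, so $\pnt\in\Re^d\setminus\sphereB$ allows $\pnt$ to sit strictly \emph{inside} $B$. Your dismissal of that case (``a point inside $B$ lies in its own Voronoi cell'') conflates $B$ with $\sphereB$: the lemma only asks that the Voronoi cell of $\pnt$ avoid the boundary sphere $\sphereB$, which is perfectly compatible with the cell being a small region around $\pnt$ contained in the interior of $B$. This interior case is in fact the interesting one, and it is used in the algorithm of \secref{balanced:separator}, which calls $\blockerY{\pnt}{\sphereB}$ for points of $\PntSet$ on either side of $\sphereB$.

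Your further claim---that for $\pnt$ inside $B$ close to $\sphereB$ one would need roughly $\log\bigl(\radA/\distSet{\pnt}{\sphereB}\bigr)$ guards---is incorrect. The paper handles this case with a fixed constant number of points depending only on $d$: cover the unit sphere of directions at $\pnt$ by $O(1)$ caps of angular diameter at most $\pi/3$, and for each cap take the nearest point of $\sphereB$ inside the corresponding cone. For any sphere point $x$ in such a cone, the triangle $\pnt\,\pntB\,x$ has angle at $\pnt$ at most $\pi/3$ and $\dist{\pnt}{\pntB}\le\dist{\pnt}{x}$ (since $\pntB$ is nearest in the cone); elementary triangle geometry then gives $\dist{x}{\pntB}\le\dist{x}{\pnt}$, so the bisector of $\pnt$ and $\pntB$ already separates $\pnt$ from $\sphereB\cap\coneY{\pnt}{C}$. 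No dependence on $\distSet{\pnt}{\sphereB}$ enters. You need to supply this (or an equivalent) argument for the interior case to complete the proof.
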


\begin{proof}
    If $\pnt$ is outside the sphere $\sphereB$, then $\PntSetB = \brc{
       \projY{\pnt}{\sphereB}}$ provides the desired separation.
    
    \parpic[r]{\includegraphics{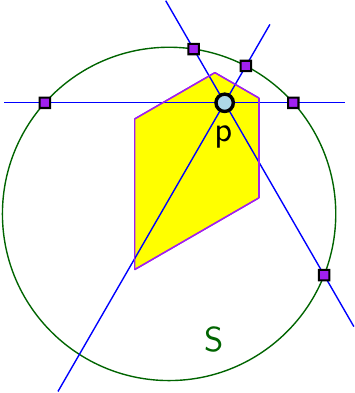}}
    
    If $\pnt$ is inside $\sphereB$, then consider the unit sphere
    centered at $\pnt$, cover it with the minimum number of spherical
    caps having diameter $\leq \pi/3$, and let $\CapSet$ be the
    resulting set of caps. Every such cap of directions defines a
    natural cone centered at $\pnt$. Formally, for such a cap $C$,
    consider the set $\coneY{\pnt}{C} = \brc{ \pnt + t \pntB \sep{
          \pntB \in C, t \geq 0}}$. Compute the closest point of
    $\sphereB$ inside this cone, and add the point to
    $\PntSetB$. Repeat this process for all the caps of $\CapSet$.
    
    We claim that $\PntSetB$ is the desired blocker. To this end,
    consider any cap $C\in \CapSet$, and observe that
    $\coneY{\pnt}{C}$ contains $\pntB \in \PntSetB$, and this is the
    closest point on $\sphereB \cap \coneY{\pnt}{C}$ to $\pnt$. In
    particular, since the cone angle is $\leq \pi/3$, it is
    straightforward to verify that the bisector of $\pnt$ and $\pntB$
    separates $\sphereB \cap \coneY{\pnt}{C}$ from $\pnt$, implying
    that $\VorCell{\pnt}{\PntSetB \cup \brc{\pnt}}$ can not intersect
    the portion of $\sphereB$ inside $\coneY{\pnt}{C}$, see figure
    above for an example.
\end{proof}

\subsection{The Algorithm}

The input is a set $\PntSet$ of $n$ points in $\Re^d$. The algorithms
works as follows:

\noindent
\begin{minipage}{0.68\linewidth}
    \begin{compactenum}[\noindent \;\; (A)]
        \item Let $c_d = \dblCd + 1$, see \defref{dbl:constant}.  Let
        $\ballX{\cenA}{\radA}$ be the smallest (closed) ball that
        contains $n/c_d$ points of $\PntSet$ where $\cenA\in\Re^d$.
        
        \noindent
        \item Pick a number $\radB$ uniformly at random from the range
        $\pbrcS{\radA, 2\radA}$.
        
        \item Let $\ballB = \ballX{\cenA}{\radB}$.
        
        \item Let $\PntSet_1 = \PntSet \cap \ballB$ and $\PntSet_2 =
        \PntSet \setminus \ballB$.
        
        \item Let $\radN=\radB / n^{1/d}$.  Compute an $\radN$-dense
        set $\SepSet$, of size $O\pth{\pth{ \radB / \radN }^{d-1}} =
        O\pth{ n^{1-1/d}}$, on the sphere $\sphereB = \partial \ballB$
        using the algorithm of \lemref{dense:set} described below.

        \item If a point $\pnt \in \PntSet$ is in distance smaller
        than $\radN$ from $\sphereB$, we insert
        $\blockerY{\pnt}{\sphereB}$ into the separating set $\SepSet$,
        see \lemref{blocker}.
        
    \end{compactenum}%
\end{minipage}%
\hfill%
% \parpic[r]
\begin{minipage}{0.3\linewidth}
    \centerline{\includegraphics{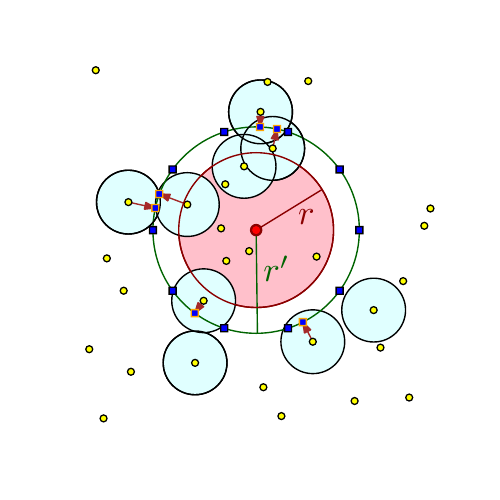}}
    \captionof{figure}{A slightly inaccurate depiction of how the
       algorithm works.}
\end{minipage}

\medskip

\noindent%
We claim that the resulting set $\SepSet$ is the desired separator.

\paragraph{Efficient implementation.}

One can find a $2$-approximation (in the radius) to the smallest ball
containing $n / c_d$ points in linear time, see \cite{h-gaa-11}. This
would slightly deteriorate the constants used above, but we ignore
this minor technicality for the sake of simplicity of exposition. If
the resulting separator is too large (i.e., larger than $\Omega\pth{
   n^{1-1/d}}$ see below for details), we rerun the algorithm.

\subsubsection{Computing a dense set}

The following is well known, and we include it only for the sake of
completeness.
\begin{lemma}
    \lemlab{dense:set}%
    Given a sphere $\sphereB$ of radius $\radB$ in $\Re^d$, and given
    a number $\radN>0$, one can compute a $\radN$-dense set $\setA$ on
    $\sphereB$ of size $O\pth{ \pth{\radB/\radN}^{d-1}}$. This set can
    be computed in $O\pth{ \cardin{\setA}}$ time.
\end{lemma}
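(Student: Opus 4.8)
The plan is to reduce the problem to covering the sphere by a bounded-degree grid and picking one representative per nonempty cell. First I would handle the trivial case: if $\radN \geq 2\radB$, a single point of $\sphereB$ is $\radN$-dense (the diameter of $\sphereB$ is $2\radB \leq \radN$), so we may assume $\radN < 2\radB$. Now set $\delta = \radN / (2\sqrt{d})$, and consider the axis-parallel grid of side length $\delta$ covering the bounding box of $\sphereB$ (a cube of side $2\radB$ centered at the center of $\sphereB$). For each grid cell $\Cell$ that intersects $\sphereB$, pick an arbitrary point $\pnt_\Cell \in \Cell \cap \sphereB$, and let $\setA$ be the collection of all such points. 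I claim $\setA$ is $\radN$-dense in $\sphereB$: given any $\pnt \in \sphereB$, it lies in some grid cell $\Cell$, which therefore intersects $\sphereB$ and so contributed a point $\pnt_\Cell \in \setA$; since both $\pnt$ and $\pnt_\Cell$ lie in $\Cell$, which has diameter $\delta\sqrt{d} = \radN/2 \leq \radN$, we get $\dist{\pnt}{\pnt_\Cell} \leq \radN$.

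Next I would bound $\cardin{\setA}$. It suffices to bound the number of grid cells that intersect $\sphereB$. Every such cell is contained in the annulus of points within distance $\delta\sqrt{d}$ of $\sphereB$, i.e., in the shell $\brc{\pntA \sep \radB - \delta\sqrt{d} \leq \dist{\pntA}{\text{center}} \leq \radB + \delta\sqrt{d}}$, whose volume is $O\pth{\radB^{d-1} \cdot \delta\sqrt{d}}$ (using $\delta\sqrt{d} = \radN/2 < \radB$, so the shell thickness is comparable to $\radB$ only up to constants and the leading term is $\radB^{d-1}$ times the thickness). Since distinct grid cells have disjoint interiors and each has volume $\delta^d$, the number of intersecting cells is $O\pth{\radB^{d-1} \delta / \delta^d} = O\pth{\pth{\radB/\delta}^{d-1}} = O\pth{\pth{\radB/\radN}^{d-1}}$, where the constant hidden in the $O$ depends only on $d$ (absorbing the $\sqrt{d}$ factors). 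This gives the claimed size bound.

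Finally, for the running time: we do not enumerate all grid cells in the bounding box (there are $\Theta\pth{(\radB/\delta)^d}$ of those, too many). Instead, for each of the $2d$ "axis directions," I would march along the sphere cell-by-cell, or more simply: for each choice of $d-1$ of the coordinates fixed to a grid line, the sphere intersected with that grid line is at most two points (or a short interval's worth of at most $O(1)$ cells along the last coordinate near the "equator" of that slice); a standard rasterization of the sphere visits exactly the $O\pth{\cardin{\setA}}$ boundary cells in $O(1)$ time per cell. Thus the total time is $O\pth{\cardin{\setA}}$. The only mildly delicate point is arguing that such an output-sensitive traversal of exactly the boundary cells is possible in constant time per cell; since the lemma is stated as "well known," I would either cite a standard reference for rasterizing a sphere (e.g., a midpoint-circle-style algorithm generalized to $\Re^d$) or observe that one can afford to also visit the $O\pth{\cardin{\setA}}$ cells strictly inside a thin shell and a constant-factor neighborhood, keeping the count asymptotically the same. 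I expect this output-sensitivity bookkeeping to be the only real obstacle; everything else is routine volume estimation.
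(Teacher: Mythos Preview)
Your proposal is correct and close in spirit to the paper's argument, but the construction differs in a useful way. The paper also lays down a grid of sidelength $\radN/\sqrt{d}$, but instead of picking one representative per cell, it takes $\setA$ to be the set of intersection points of the grid \emph{lines} (one coordinate free, the other $d-1$ fixed to grid values) with $\sphereB$. Density then follows by an intermediate-value argument: for any $\pnt\in\sphereB$, the cell $\Cell$ containing $\pnt$ sits inside $\ballX{\pnt}{\radN}$, has at least one vertex inside and one outside $\sphereB$, hence some edge of $\Cell$ crosses $\sphereB$ and contributes a point of $\setA$ on $\partial\Cell$. The size bound is obtained by counting grid lines through the bounding cube (at most $O\pth{(\radB/\radN)^{d-1}}$ per $(d-1)$-face, $2d$ faces), rather than by your annulus-volume estimate.

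The practical payoff of the paper's variant is exactly the point you flagged as delicate: the running-time claim becomes immediate, since one simply enumerates the $\Theta\pth{(\radB/\radN)^{d-1}}$ grid lines through the bounding box and solves a quadratic for each. Your construction needs, for each boundary cell, to exhibit a point of $\Cell\cap\sphereB$ and to traverse only the boundary cells; you essentially rediscover the line-intersection idea in your running-time sketch. Either route works, but the paper's choice of representatives sidesteps the rasterization bookkeeping entirely.
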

\begin{proof}
    Consider the grid $\Grid$ of sidelength $\radN /\sqrt{d}$, and let
    $\setA$ be the set of intersection points of the lines of $\Grid$
    with $\sphereB$. Observe that every $(d-1)$-face of the bounding
    cube of $\sphereB$ intersects $O\pth{ \pth{\radB/\radN}^{d-1}}$
    lines of the grid, and since there $2d$ such faces, the claim on
    the size of $\setA$ follows.
    
    As for the density property, observe that for any point $\pnt \in
    \sphereB$, let $\Cell$ be the grid cell of $\Grid$ that contains
    it. Observe, that $\ballX{\pnt}{\radN}$ contains $\Cell$
    completely, one of the vertices of $\Cell$ must be inside the
    sphere, and at least one of them must be outside the sphere. Since
    the edges of the boundary of $\Cell$ are connected, it follows
    that one of the points of $\setA$ is on the boundary of $\Cell$,
    which in turn implies that there is a point of $\setA$ contained
    in $\ballX{\pnt}{\radN}$, implying the desired property.
\end{proof}

\subsection{Correctness}

\begin{lemma}
    We have $\cardin{\PntSet_1} \geq n/c_d$ and $\cardin{\PntSet_2}
    \geq n/c_d$.
\end{lemma}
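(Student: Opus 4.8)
The claim has two parts that look symmetric but call for slightly different (both short) arguments, so I would treat them one at a time. The bound $\cardin{\PntSet_1} \ge n/c_d$ is pure monotonicity: by construction $\ballX{\cenA}{\radA}$ contains at least $n/c_d$ points of $\PntSet$, and since $\radB \ge \radA$ we have $\ballX{\cenA}{\radA} \subseteq \ballB$; hence $\PntSet_1 = \PntSet \cap \ballB \supseteq \PntSet \cap \ballX{\cenA}{\radA}$, and the bound follows immediately.

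For $\cardin{\PntSet_2} \ge n/c_d$, note that $\PntSet_2 = \PntSet \setminus \ballB$, so the claim is equivalent to $\cardin{\PntSet \cap \ballB} \le n - n/c_d = \frac{\dblCd}{\dblCd+1}\,n$ (recall $c_d = \dblCd+1$). I would cover $\ballB = \ballX{\cenA}{\radB}$ by $\dblCd$ closed balls of radius $\radB/2$, which is possible by the definition of the doubling constant (\defref{dbl:constant}). Each covering ball has radius $\radB/2 < \radA$ (here we use $\radB < 2\radA$), and therefore contains strictly fewer than $n/c_d$ points of $\PntSet$ --- otherwise it would be a closed ball of radius strictly less than $\radA$ containing $n/c_d$ points, contradicting the fact that $\ballX{\cenA}{\radA}$ is the \emph{smallest} such ball. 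Summing this per-ball bound over the $\dblCd$ covering balls, and using that $\cardin{\PntSet\cap\cdot}$ is subadditive (the covering balls may overlap), we obtain $\cardin{\PntSet\cap\ballB} < \dblCd\cdot (n/c_d) = n - n/c_d$; this in fact yields the slightly stronger $\cardin{\PntSet_2} > n/c_d$.

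The one delicate point is the strict inequality $\radB/2 < \radA$, which needs $\radB < 2\radA$; this holds with probability one since $\radB$ is drawn uniformly from $[\radA,2\radA]$ (equivalently, one may read the sampling interval as half-open). At the extreme $\radB = 2\radA$ the covering balls would have radius exactly $\radA$ and could a priori be heavy, so there one would instead cover $\ballB$ by a constant number (depending on $d$) of balls of radius slightly below $\radA$ --- but this case is irrelevant to the algorithm. Apart from this, the argument is routine; the only thing to be careful about is not to confuse ``smallest radius capturing $n/c_d$ points'' with an upper bound on how many points an arbitrary radius-$\radA$ ball may contain --- only radii \emph{strictly} below $\radA$ give the strict count we need, which is precisely why one should cover $\ballB$ (whose radius is below $2\radA$) rather than the circumscribing ball $\ballX{\cenA}{2\radA}$.
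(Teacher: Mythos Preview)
Your argument is correct and follows the same doubling-cover approach as the paper: cover $\ballB$ by $\dblCd$ balls of half its radius and use the minimality of $\radA$ to bound the point count in each, then subtract. You are in fact more careful than the paper's own proof, which simply asserts that each covering ball ``of radius $\radA$'' contains at most $n/c_d$ points without addressing the boundary case $\radB = 2\radA$; your observation that $\radB/2 < \radA$ holds almost surely (so each covering ball has radius strictly below $\radA$ and hence strictly fewer than $n/c_d$ points) is exactly the clean way to close that minor gap.
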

\begin{proof}
    By \defref{dbl:constant}, the ball $\ballB = \ballX{\cenA}{\radB}$
    can be covered by $\dblCd$ balls of radius $\radA$, each one of
    them contains at most $n/ c_d$ points, as $\ballX{\cenA}{\radA}$
    is the smallest ball containing $n/ c_d$ points of $\PntSet$.
    
    As such $\ballB$ contains at most $\dblCd n/c_d$ points of
    $\PntSet$. In particular, as $c_d = \dblCd + 1$, we have that
    $\ballB$ has at least $n/c_d$ points of $\PntSet$, inside it, and
    at least $n(1-\dblCd/c_d)=n/c_d$ points outside it.
\end{proof}

\begin{lemma}
    The sets $\PntSet_1$ and $\PntSet_2$ are Voronoi separated in
    $\VorX{ \PntSet \cup \SepSet}$.
\end{lemma}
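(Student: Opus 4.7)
The plan is to argue by contradiction. Suppose some point $\pntC \in \Re^d$ lies in both $\VorCell{\pnt_1}{\PntSet \cup \SepSet}$ and $\VorCell{\pnt_2}{\PntSet \cup \SepSet}$ for some $\pnt_1 \in \PntSet_1$ and $\pnt_2 \in \PntSet_2$. Let $r = \dist{\pntC}{\pnt_1} = \dist{\pntC}{\pnt_2} = \distSet{\pntC}{\PntSet \cup \SepSet}$, so the open ball $\ballX{\pntC}{r}$ is disjoint from $\PntSet \cup \SepSet$. Since $\pnt_1$ lies inside $\ballB$ and $\pnt_2$ lies outside, we may assume without loss of generality that $\pntC \in \ballB$ (the case $\pntC \notin \ballB$ is fully symmetric, using $\pnt_1$ in place of $\pnt_2$ and the blocker for $\pnt_1$ when applicable). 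Then the segment from $\pntC$ to $\pnt_2$ exits $\ballB$ at a unique point $q \in \sphereB$, satisfying $\dist{\pntC}{q} + \dist{q}{\pnt_2} = r$; in particular $q$ lies in the interior of $\ballX{\pntC}{r}$.

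The strategy is now to exhibit a witness point $w \in \SepSet$ inside $\ballX{\pntC}{r}$, contradicting emptiness. I split on the size of $\dist{q}{\pnt_2}$.

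If $\dist{q}{\pnt_2} > \radN$, then the $\radN$-dense set $\SepSet \cap \sphereB$ contains a point $s$ with $\dist{q}{s} \leq \radN$, and the triangle inequality gives
\[
\dist{\pntC}{s} \;\leq\; \dist{\pntC}{q} + \radN \;<\; \dist{\pntC}{q} + \dist{q}{\pnt_2} \;=\; r,
\]
so $s$ sits inside $\ballX{\pntC}{r}$, a contradiction. Otherwise $\dist{q}{\pnt_2} \leq \radN$, so $\pnt_2$ lies within distance $\radN$ of $\sphereB$ and step (F) of the algorithm inserted $\blockerY{\pnt_2}{\sphereB}$ into $\SepSet$. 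By \lemref{blocker} the cell $\VorCell{\pnt_2}{\brc{\pnt_2} \cup \blockerY{\pnt_2}{\sphereB}}$ does not meet $\sphereB$, so there exists $b \in \blockerY{\pnt_2}{\sphereB} \subseteq \SepSet$ with $\dist{q}{b} < \dist{q}{\pnt_2}$; then
\[
\dist{\pntC}{b} \;\leq\; \dist{\pntC}{q} + \dist{q}{b} \;<\; \dist{\pntC}{q} + \dist{q}{\pnt_2} \;=\; r,
\]
again contradicting that $\ballX{\pntC}{r}$ avoids $\SepSet$.

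The only subtle point is the boundary situation $\pntC \in \sphereB$: here one takes $q = \pntC$ and the same dichotomy applies, with the blocker argument used for whichever of $\pnt_1, \pnt_2$ happens to lie within $\radN$ of $\sphereB$ (at least one of the two inequalities $\distSet{\pnt_i}{\sphereB} \leq \radN$ or $> \radN$ must kick in for each $i$, and density suffices in the latter case since $r \geq \distSet{\pnt_i}{\sphereB} > \radN$). I expect no serious obstacle; the heart of the argument is simply that the segment from $\pntC$ to whichever of $\pnt_1, \pnt_2$ is on the opposite side of $\sphereB$ must pierce the sphere, and that pierce point is either close enough to the dense net or close enough to $\pnt_2$ to trigger the blocker guarantee.
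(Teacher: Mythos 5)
Your proof is correct and essentially takes the same approach as the paper's. The paper proves the equivalent domination claim that $\distSet{\pntB}{\SepSet} \leq \distSet{\pntB}{\PntSet}$ for every $\pntB \in \sphereB$ (via the same dense-set versus blocker dichotomy) and then asserts this ``clearly implies'' separation; you simply make that final implication explicit, by tracing the segment from the hypothesized common Voronoi point to the far site until it crosses $\sphereB$ and exhibiting a separator witness strictly inside the empty circumscribing ball.
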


\begin{proof}
    We claim that all the points on $\sphereB$ are dominated by
    $\SepSet$. Formally, for any $\pntB \in \sphereB$, we have that
    $\distSet{\pntB}{\SepSet} \leq \distSet{\pnt}{\PntSet}$, which
    clearly implies the claim.
    
    So, let $\pntA$ be the nearest neighbor to $\pntB$ in $\PntSet$.
    If $\dist{\pntB}{\pntA} \geq \radN$ then since $\SepSet$ is
    $\radN$-dense in $\sphereB$, it follows that there exists $\pntC
    \in \SepSet$ such that $\distSet{\pntB}{\SepSet} \leq \dist{
       \pntB}{\pntC} \leq \radN \leq \dist{\pntB}{\pntA} =
    \distSet{\pntB}{\PntSet}$, as desired.
    
    If $\dist{\pntB}{\pntA} < \radN$ then the addition of
    $\blockerY{\pntA}{\sphereB}$ to $\SepSet$, during the
    construction, guarantees that the nearest point in $\SepSet$ to
    $\pntB$, is closer to $\pntB$ than $\pntA$ is, see
    \lemref{blocker}.
\end{proof}

\begin{lemma}
    \lemlab{small:separator}%
    Let $Y = \cardin{\SepSet}$. We have that $\Ex{Y} \leq \constSep
    n^{1-1/d}$, where $\constSep$ is some constant.
\end{lemma}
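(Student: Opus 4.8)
The plan is to split $\SepSet$ into its two constituent parts and bound each one, the only randomness being the choice of $\radB \in \pbrcS{\radA,2\radA}$. The first part is the $\radN$-dense set on $\sphereB$ produced in step~(E). Since $\radN = \radB/n^{1/d}$, we have $\radB/\radN = n^{1/d}$ \emph{regardless} of the value of $\radB$, so \lemref{dense:set} guarantees this set has size $O\pth{(\radB/\radN)^{d-1}} = O\pth{n^{(d-1)/d}} = O\pth{n^{1-1/d}}$ deterministically. The second part is the union of the blocker sets $\blockerY{\pnt}{\sphereB}$ added in step~(F); by \lemref{blocker} each such set has size $O(1)$ (with a constant depending only on $d$), so this part contributes $O(1)\cdot N$ points, where $N$ is the number of points of $\PntSet$ lying within distance $\radN$ of $\sphereB$. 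Thus it suffices to show $\Ex{N} = O\pth{n^{1-1/d}}$, and then combine.

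To bound $\Ex{N}$ I would use linearity of expectation over the points of $\PntSet$. Fix $\pnt \in \PntSet$ and let $\rho = \dist{\pnt}{\cenA}$. Since $\sphereB$ is the sphere of radius $\radB$ centered at $\cenA$, the distance from $\pnt$ to $\sphereB$ is $\distSet{\pnt}{\sphereB} = \cardin{\rho - \radB}$, so $\pnt$ contributes to $N$ exactly when $\cardin{\rho - \radB} \le \radN$. The one mild annoyance is that $\radN$ itself depends on $\radB$; this is handled by the crude bound $\radN = \radB/n^{1/d} \le 2\radA/n^{1/d}$, valid since $\radB \le 2\radA$. Hence the event $\cardin{\rho - \radB}\le\radN$ forces $\radB$ into an interval of length at most $4\radA/n^{1/d}$ (centered at $\rho$), and since $\radB$ is uniform over an interval of length $\radA$, this event has probability at most $4/n^{1/d}$. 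Summing over the $n$ points of $\PntSet$ gives $\Ex{N} \le 4\,n^{1-1/d}$.

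Combining, $\Ex{Y} \le O\pth{n^{1-1/d}} + O(1)\cdot\Ex{N} = O\pth{n^{1-1/d}}$, and taking $\constSep$ to be the resulting constant (depending only on $d$) completes the argument; the implied constant also controls the rerun step mentioned in the efficient-implementation paragraph, via Markov's inequality. I do not anticipate a genuine obstacle here: all the geometry has already been isolated into \lemref{dense:set} and \lemref{blocker}, so what remains is a short probabilistic amortization, and the only subtlety worth flagging is the self-reference $\radN = \radN(\radB)$, which is dispatched by the inequality $\radN \le 2\radA/n^{1/d}$ before any union bound is taken.
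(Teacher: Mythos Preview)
Your proposal is correct and follows essentially the same approach as the paper: split $\SepSet$ into the deterministic $\radN$-dense set (size $O(n^{1-1/d})$ since $\radB/\radN=n^{1/d}$) and the blocker contributions, then bound the latter via linearity of expectation by showing each $\pnt\in\PntSet$ lies within $\radN$ of $\sphereB$ with probability $O(1/n^{1/d})$. Your explicit handling of the self-reference $\radN=\radN(\radB)$ via the bound $\radN\le 2\radA/n^{1/d}$ is in fact a bit cleaner than the paper's write-up, which performs the same substitution but slightly more implicitly.
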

\begin{proof}
    Let $Z$ be the number of points of $\PntSet$, whose projections
    were added to $\SepSet$. We claim that $\Ex{Z} =
    O\pth{n^{1-1/d}}$.  To this end, for a point $\pnt \in \PntSet$,
    let $X_\pnt$ be the indicator variable that is one if and only if
    $\pnt$ is in distance at most $\radN$ from $\sphereB$. The
    algorithm picked the radius $\radB$ uniformly at random in the
    interval $[\radA, 2\radA]$. Furthermore, $X_\pnt= 1$ if and only
    if $\dist{\pnt}{\cenA} - \radN \leq \radB \leq \dist{\pnt}{\cenA}
    + \radN$. This happens only if $\radB$ falls into an interval
    $I_\pnt$ that is of length at most $2\radN$ centered at
    $\dist{\pnt}{\cenA}$.  As such, we have
    \begin{equation*}
        \Prob{\MakeBig X_{\pnt}=1} %
        =%
        \frac{ \lengthX{ \MakeSBig  I_\pnt \cap [r,2r] }}{\lengthX{
              \MakeSBig [r,2r]}} %
        \leq%
        \frac{2\radN}{\radA}%
        = %
        \frac{2\pth{\radB / n^{1/d}}}{\radA} \leq \frac{4 }{n^{1/d}}.
    \end{equation*}
    Now, by linearity of expectation, we have that $\Ex{Z} =
    \sum_{\pnt \in \PntSet} \Ex{c X_\pnt } \leq O( n^{1-1/d})$, where
    $c$ is the constant of \lemref{blocker}. This implies the claim,
    as $Y = Z + O\pth{n^{1-1/d} }$.
\end{proof}

\subsection{The result}

% We summarize the result:

\begin{theorem}
    \thmlab{separator:main}%
    Let $\PntSet$ be a set of $n$ points in $\Re^d$. One can compute,
    in expected linear time, a sphere $\sphereB$, and a set $\SepSet
    \subseteq \sphereB$, such that
    \begin{compactenum}[\quad (i)]
        \item $\cardin{\SepSet} = O\pth{n^{1-1/d}}$,
        \item $\sphereB$ contains $\geq c n$ points of $\PntSet$
        inside it,
        \item there are $\geq c n$ points of $\PntSet$ outside
        $\sphereB$, and
        \item $\SepSet$ is a Voronoi separator of the points of
        $\PntSet$ inside $\sphereB$ from the points of $\PntSet$
        outside $\sphereB$.
    \end{compactenum}
    Here $c > 0$ is a constant that depends only on the dimension $d$.
\end{theorem}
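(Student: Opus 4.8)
The plan is to run the algorithm of the previous subsection and assemble the four conclusions from the lemmas already proved. Properties (ii) and (iii) are precisely the balance lemma: $\ballB$ contains at least $n/c_d$ points of $\PntSet$ and at least $n/c_d$ points lie outside $\ballB$, so the sphere $\sphereB = \partial \ballB$ works with $c = 1/c_d$. Property (iv) is the separation lemma, which shows that $\PntSet \cap \ballB$ and $\PntSet \setminus \ballB$ are Voronoi separated in $\VorX{\PntSet \cup \SepSet}$; and $\SepSet \subseteq \sphereB$ since the dense points lie on $\sphereB$ by construction, and each blocker point lies on $\sphereB$ by \lemref{blocker}.

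For property (i), \lemref{small:separator} gives $\Ex{\cardin{\SepSet}} \leq \constSep n^{1-1/d}$, so by Markov's inequality $\Prob{\cardin{\SepSet} > 2\constSep n^{1-1/d}} \leq 1/2$. We therefore run the algorithm as a Las Vegas procedure: keeping $\cenA$ and $\radA$ fixed (they do not depend on the random choice), we repeatedly re-pick $\radB$ uniformly in $[\radA, 2\radA]$ and rebuild $\SepSet$, stopping the first time $\cardin{\SepSet} \leq 2\constSep n^{1-1/d}$. Since the attempts are independent, their number is dominated by a geometric random variable of constant mean, and upon termination $\cardin{\SepSet} = O(n^{1-1/d})$, as required.

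Finally, a single attempt runs in $O(n)$ time: a $2$-approximation to the smallest ball containing $n/c_d$ points of $\PntSet$ is computed in $O(n)$ time \cite{h-gaa-11}; picking $\radB$, forming $\ballB$, and splitting $\PntSet$ take $O(n)$; the $\radN$-dense set is computed in $O(\cardin{\SepSet}) = O(n^{1-1/d})$ time by \lemref{dense:set}; and the blocker step scans the $n$ points of $\PntSet$, spending $O(1)$ per point to test proximity to $\sphereB$ and, when needed, to compute the $O(1)$-size set $\blockerY{\pnt}{\sphereB}$ via \lemref{blocker}. With an expected constant number of attempts, the total expected time is $O(n)$. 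The only delicate point in the whole argument --- everything else being routine bookkeeping --- is that using a $2$-approximate rather than exact smallest enclosing ball inflates $\radA$ by a constant factor; this propagates as constant-factor changes into the covering-number count of the balance lemma and the probability estimate of \lemref{small:separator}, and hence leaves the asymptotics, and the constant $c$, unaffected.
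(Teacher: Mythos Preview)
Your proof is correct and follows essentially the same approach as the paper: invoke the balance and separation lemmas for (ii)--(iv), apply Markov's inequality to \lemref{small:separator} and rerun until $\cardin{\SepSet} \leq 2\constSep n^{1-1/d}$, yielding expected $O(1)$ rounds at $O(n)$ each. Your write-up is in fact more explicit than the paper's (which compresses all of this into three sentences), including the harmless optimization of fixing $\cenA,\radA$ across retries and the remark about the $2$-approximation only affecting constants.
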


\begin{proof}
    Clearly, each round of the algorithm takes $O(n)$ time. By
    Markov's inequality the resulting separator set $\SepSet$ is of
    size at most $2\constSep n^{1-1/d}$, with probability at least
    $1/2$, see \lemref{small:separator}. As such, if the separator is
    larger than this threshold, then we rerun the algorithm. Clearly,
    in expectation, after a constant number of iterations the
    algorithm would succeed, and terminates. (It is not hard to
    derandomize this algorithm and get a linear running time.)
\end{proof}

%% -------------------------------------------------------------------

\section{Exact algorithm for computing the optimal %
   separation for a given partition}
\seclab{exact:separator}

Given a set $\PntSet$ of $n$ points in $\Re^d$, and a partition
$\pth[]{\PntSet_1,\PntSet_2}$ of $\PntSet$, we are interested in
computing the smallest Voronoi separating set realizing this
partition.

\subsection{Preliminaries and problem statement}
\seclab{euclidean:preliminaries}

\begin{defn}%
    \deflab{bad:pairs}%
    For a set $\PntSet\in \Re^d$ and a pair of disjoint subsets
    $(\PntSet_1,\PntSet_2)$, the set of \emphi{bad pairs} is
    \begin{math}
        \violatorX{\PntSet}{\PntSet_1}{\PntSet_2}%
        =%
        \brc{ (\pnt_1,\pnt_2) \in \PntSet_1 \times \PntSet_2 \sep{\,
              \VorCell{\pnt_1}{\PntSet} \cap \VorCell{\pnt_2}{\PntSet}
              \ne \emptyset}}.
    \end{math}
\end{defn}

\remove{%
   \begin{figure}[t]%
       \centerline{%
          \includegraphics{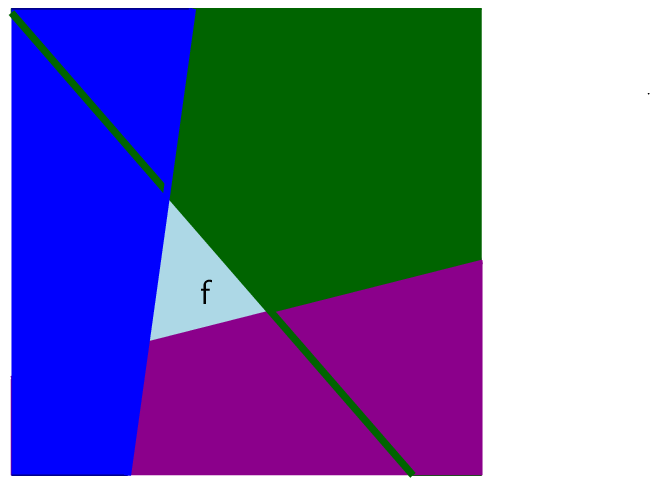}%
       }%
       \captionof{figure}{ A $2$-feature $\feature$ and its induced
          $2$-halfflats $h_1, h_2, h_3$. }
       \figlab{feature:halfspace}
   \end{figure}%
}

\parpic[r]{%
   \begin{minipage}{6cm}%
       \centerline{%
          {\includegraphics{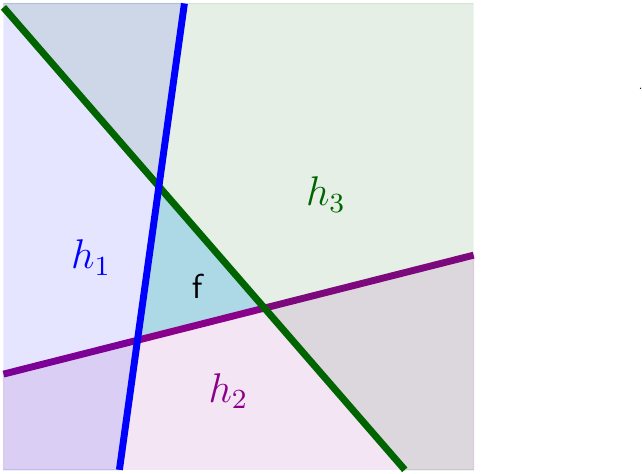}\hspace{-1.5cm}}%
       }%
       \captionof{figure}{ A $2$-feature $\feature$ and its induced
          $2$-halfflats $h_1, h_2, h_3$. }
       \figlab{feature:halfspace} \end{minipage}%
}

For a Voronoi diagram $\VorX{\PntSet}$, we can assume that all its
faces (of various dimensions) are all triangulated (say, using
bottom-vertex triangulation). This does not change the complexity of
the Voronoi diagram. For $k=0,1,\ldots, d$, such a $k$ dimensional
Voronoi simplex is a \emphi{$k$-feature}. Such a $k$-feature
$\feature$, is induced by $d-k+1$ sites, denoted by
$\sitesX{\feature}$; that is, any point $\pnt \in \feature$ is in
equal distance to all the points of $\sitesX{\feature}$ and these are
the nearest neighbor of $\pnt$ in $\PntSet$. Thus, a vertex $v$ of the
Voronoi diagram is a $0$-feature, and $\cardin{\sitesX{v}} = d+1$
(assuming general position, which we do).  The \emphi{span} of a
feature $\feature$, is the set of points in $\Re^d$ that are
equidistant to every site in $\sitesX{\feature}$; it is denoted by
$\spanF{\feature}$ and is the $k$ flat that contains $\feature$.  A
\emphi{$k$-halfflat} is the intersection of a halfspace with a
$k$-flat.

Consider any $k$-feature $\feature$. The complement set
$\spanF{\feature} \setminus \feature$ can be covered by $k+1$
$k$-halfflats of $\spanF{\feature}$. Specifically, each of these
halfflats is an open $k$-halfflat of $\spanF{\feature}$, whose
boundary contains a $(k-1)$-dimensional face of the boundary of
$\feature$. This set of halfflats of $\feature$, is the \emphi{shell}
of $\feature$, and is denoted by $\shellF{\feature}$, see
\figref{feature:halfspace}.

Once the Voronoi diagram is computed, it is easy to extract the ``bad
features''. Specifically, the set of \emphi{bad features} is
\begin{align*}
    \bFeaturesX{\PntSet}{\PntSet_1}{\PntSet_2}%
    =%
    \brc{ \feature \in \mathrm{features}\pth{\VorX{\PntSet}} \sep{
          \sitesX{\feature} \cap\PntSet_1 \neq \emptyset \text{\;
             and\; } \sitesX{\feature} \cap\PntSet_2 \neq \emptyset}}.
\end{align*}
Clearly, given a Voronoi diagram the set of bad features can be
computed in linear time in the size of the diagram.  \medskip
% \paragraph{Computing closest point on a feature.}

Given a $k$-feature $\feature$, it is the convex-hull of $k+1$ points;
that is, $\feature = \CHX{\setA}$, where $\setA = \brc{\pntA_1,
   \ldots, \pntA_{k+1}}$. We are interested in finding the closest
point in a feature to an arbitrary point $\pnt$. This is a constant
size problem for a fixed $d$, and can be solved in constant time.  We
denote this closest point by $\nnX{\pnt}{\feature} = \arg \min_{\pntA
   \in \feature}\distSet{\pnt}{\pntA}$.  For the feature $\feature$,
and any point $\pnt$, we denote by $\pencilPnt{\feature}{\pnt}$ the
ball $\ballX{\pnt}{\distSet{\pnt}{\sitesX{\feature}}}$ (if it is
uniquely defined).  Furthermore, for an arbitrary set $S$ of points,
$\pencilSet{\feature} {S}$ denote
$\brc{\ballX{\pnt}{\distSet{\pnt}{\sitesX{\feature}}}\sep{ \pnt\in
      S}}$. In particular, for any point $\pnt \in \feature$, consider
$\ballX{\pnt}{\distSet{\pnt}{\PntSet}}$ -- it contains the points of
$\sitesX{\feature}$ on its boundary. The set of all such balls is the
\emphi{pencil} of $\feature$, denoted by
\begin{align}
    \pencilF{\feature} = \brc{
       \ballX{\pnt}{\distSet{\pnt}{\sitesX{\feature}}} \sep{ \pnt \in
          \feature}}.%
    \eqlab{pencil}%
\end{align}
The \emphi{trail} of $\feature$ is the union of all these balls; that
is, $\trailX{\feature} = \bigcup_{\pnt \in \feature}
\ballX{\pnt}{\distSet{\pnt}{\PntSet}}$. Finally, let $\mbX{\feature}$
denote the smallest ball in the pencil of a feature
$\feature$. Clearly, the center of $\mbX{\feature}$ is the point
$\nnX{\pnt}{\feature}$, where $\pnt$ is some arbitrary point of
$\sitesX{\feature}$. As such, $\mbX{\feature}$ can be computed in
constant time.

\begin{lemma}%
    \lemlab{halfspace}%
    Let $\pnt$ be any point and let $\feature$ be any $k$-feature. The
    point $\pnt$ induces a halfflat of $\spanF{\feature}$ denoted by
    $\halfspaceX{\pnt} {\feature}$, such that
    $\pencilF{\halfspaceX{\pnt} {\feature}}$ is the set of all balls
    in $\pencilF{ \spanF{\feature}}$ that contain $\pnt$.
\end{lemma}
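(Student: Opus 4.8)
The plan is to describe $\halfspaceX{\pnt}{\feature}$ explicitly as the set of points $\pntA \in \spanF{\feature}$ whose pencil-ball $\ballX{\pntA}{\distSet{\pntA}{\sitesX{\feature}}}$ contains $\pnt$, and then show this set is a halfflat. Fix any site $\pntB \in \sitesX{\feature}$; for $\pntA \in \spanF{\feature}$ all sites of $\feature$ are equidistant from $\pntA$, so the pencil-ball at $\pntA$ has radius exactly $\dist{\pntA}{\pntB}$. Hence the condition ``the pencil-ball at $\pntA$ contains $\pnt$'' is precisely $\dist{\pntA}{\pnt} \leq \dist{\pntA}{\pntB}$.

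First I would rewrite this inequality. Squaring both sides, $\dist{\pntA}{\pnt}^2 \leq \dist{\pntA}{\pntB}^2$ becomes
\begin{equation*}
    \|\pntA\|^2 - 2\langle \pntA, \pnt\rangle + \|\pnt\|^2
    \;\leq\;
    \|\pntA\|^2 - 2\langle \pntA, \pntB\rangle + \|\pntB\|^2,
\end{equation*}
and the quadratic terms cancel, leaving the linear (in $\pntA$) inequality $2\langle \pntA, \pntB - \pnt\rangle \leq \|\pntB\|^2 - \|\pnt\|^2$. This defines a halfspace $H$ of $\Re^d$ (or all of $\Re^d$, in the degenerate case $\pnt = \pntB$, which we may treat separately or exclude by general position). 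Therefore the set of qualifying points is $H \cap \spanF{\feature}$, which is by definition a $k$-halfflat of $\spanF{\feature}$; we take this to be $\halfspaceX{\pnt}{\feature}$. Note the description is independent of which site $\pntB \in \sitesX{\feature}$ was chosen, since any two choices give the same set (both equal the set of $\pntA \in \spanF{\feature}$ with $\dist{\pntA}{\pnt} \le \distSet{\pntA}{\sitesX{\feature}}$).

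Finally I would verify the stated characterization. By definition $\pencilF{\spanF{\feature}}$ is the collection of balls $\ballX{\pntA}{\distSet{\pntA}{\sitesX{\feature}}}$ over $\pntA \in \spanF{\feature}$, and restricting to $\pntA \in \halfspaceX{\pnt}{\feature}$ selects exactly those balls that contain $\pnt$, by the computation above; conversely any such ball in $\pencilF{\spanF{\feature}}$ containing $\pnt$ has center in $\halfspaceX{\pnt}{\feature}$. Thus $\pencilF{\halfspaceX{\pnt}{\feature}}$ is precisely the set of balls of $\pencilF{\spanF{\feature}}$ containing $\pnt$, as claimed. There is no serious obstacle here; the only point requiring a word of care is the boundary case where $\pnt$ lies on $\spanF{\feature}$ at equal distance to the sites (so the ``ball'' degenerates or the inequality becomes trivial), which is ruled out by the general-position assumption already in force, or handled by noting the halfflat is then all of $\spanF{\feature}$.
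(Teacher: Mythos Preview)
Your proof is correct and follows essentially the same approach as the paper: fix a site $\pntB\in\sitesX{\feature}$, observe that the pencil ball at $\pntA\in\spanF{\feature}$ contains $\pnt$ \iffT $\dist{\pntA}{\pnt}\le\dist{\pntA}{\pntB}$, and note that this defines a halfspace of $\Re^d$ whose intersection with $\spanF{\feature}$ is the desired halfflat. Your version is slightly more explicit (carrying out the squaring to exhibit linearity and remarking on degenerate cases), but the argument is the same.
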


\begin{proof}
    Consider any arbitrary site $\pntB\in \sitesX{\feature}$.  The set
    of points whose ball in the pencil contains $\pnt$, is clearly the
    set of points in $\spanF {\feature}$ that are closer to $\pnt$
    than to $\pntB$.  This set of points, is a halfspace of $\Re^d$
    that is not parallel to the $k$-flat $\spanF{\feature}$.  Thus,
    $\halfspaceX{\pnt}{\feature}=\brc{\pntA\sep{
          \dist{\pntA}{\pnt}\leq \dist{\pntA}{\pntB}}}\cap
    \spanF{\feature}$ is the desired halfflat of $\spanF{\feature}$
    induced by $\pnt$, whose boundary is given by the set of points
    equidistant to $\sitesX {\feature}\cup \brc{\pnt}$.
\end{proof}

% \bigskip

We are now ready to restate our problem in a more familiar language.

\begin{lemma}[Restatement of problem]%
    \lemlab{hit:balls}%
    Given a set $\PntSet$ of $n$ points in $\Re^d$, and a pair of
    disjoint subsets $\pth[]{\PntSet_1,\PntSet_2}$, finding a minimum
    size Voronoi separator realizing separation of
    $\pth[]{\PntSet_1,\PntSet_2}$, is equivalent to finding a minimum
    size hitting set of points $\SepSet$, such that $\SepSet$ stabs
    (the interior) of all the balls in the set
    \begin{align}
        \BadBallsA%
        =%
        \BadBalls{\PntSet}{\PntSet_1}{\PntSet_2} = \bigcup_{\feature
           \in \bFeaturesX{\PntSet}{\PntSet_1}{\PntSet_2}}
        \pencilF{\feature}.%
        \eqlab{hitting:set}
    \end{align}
\end{lemma}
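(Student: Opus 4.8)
The plan is to show a two-way correspondence: a point set $\SepSet$ is a Voronoi separator for $\pth[]{\PntSet_1,\PntSet_2}$ in $\PntSet$ if and only if $\SepSet$ stabs the interior of every ball in $\BadBallsA$. The key observation is that two Voronoi cells $\VorCell{\pnt_1}{\PntSet}$ and $\VorCell{\pnt_2}{\PntSet}$ meet exactly when they share a common Voronoi feature, i.e. there is some feature $\feature$ with $\pnt_1,\pnt_2 \in \sitesX{\feature}$ — so $\feature$ is a bad feature — and this persists after adding $\SepSet$ precisely when the insertion of $\SepSet$ fails to ``cut off'' all of $\feature$. The ball language comes from the fact that a point $\pntB \in \feature$ survives in $\VorX{\PntSet \cup \SepSet}$ as a point equidistant to $\sitesX{\feature}$ (and closest to them) if and only if the ball $\ballX{\pntB}{\distSet{\pntB}{\sitesX{\feature}}} \in \pencilF{\feature}$ has empty intersection with $\SepSet$ in its interior.

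First I would prove the forward direction (separator $\Rightarrow$ hitting set) by contrapositive: suppose some ball $\ballA = \ballX{\pntB}{\distSet{\pntB}{\sitesX{\feature}}} \in \pencilF{\feature}$, $\feature$ a bad feature with $\pnt_1 \in \sitesX{\feature}\cap\PntSet_1$ and $\pnt_2 \in \sitesX{\feature}\cap\PntSet_2$, has interior disjoint from $\SepSet$. Since $\feature$ is a feature of $\VorX{\PntSet}$, the interior of $\ballA$ also contains no point of $\PntSet$, and $\pnt_1,\pnt_2$ lie on its boundary at distance exactly $\distSet{\pntB}{\PntSet} = \distSet{\pntB}{\sitesX{\feature}}$ from $\pntB$. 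Hence $\distSet{\pntB}{\PntSet \cup \SepSet} = \dist{\pntB}{\pnt_1} = \dist{\pntB}{\pnt_2}$, so $\pntB \in \VorCell{\pnt_1}{\PntSet \cup \SepSet} \cap \VorCell{\pnt_2}{\PntSet \cup \SepSet}$, and $\SepSet$ is not a separator.

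For the reverse direction (hitting set $\Rightarrow$ separator), again by contrapositive: suppose $\SepSet$ is not a separator, so there exist $\pnt_1 \in \PntSet_1$, $\pnt_2 \in \PntSet_2$ and a point $\pntB \in \VorCell{\pnt_1}{\PntSet \cup \SepSet}\cap \VorCell{\pnt_2}{\PntSet \cup \SepSet}$. Then $\ballA = \ballX{\pntB}{\dist{\pntB}{\pnt_1}}$ has empty interior with respect to $\PntSet \cup \SepSet$ (in particular w.r.t.\ $\SepSet$) and has $\pnt_1,\pnt_2$ on its boundary; so the center $\pntB$ lies on the bisector of $\pnt_1$ and $\pnt_2$ and has them as nearest neighbors in $\PntSet$, i.e.\ $\pntB$ lies in the Voronoi structure on a face whose site set includes both $\pnt_1$ and $\pnt_2$. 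After the bottom-vertex triangulation, $\pntB$ lies in (the closure of) some triangulated feature $\feature$, and one checks that the site set of the feature of $\VorX{\PntSet}$ containing $\pntB$ contains $\sitesX{\feature}$; taking a bad feature $\feature'$ among those incident to $\pntB$ with $\{\pnt_1,\pnt_2\}\subseteq\sitesX{\feature'}$ (using general position so that the incident features' site sets are nested appropriately), we get $\ballA \in \pencilF{\feature'} \subseteq \BadBallsA$ with interior unstabbed by $\SepSet$.

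The main obstacle I expect is the bookkeeping in the reverse direction around the triangulation and general-position assumptions: the witness point $\pntB$ may lie on a lower-dimensional face, and one must argue carefully that the ball $\ballA$ it certifies actually appears in $\pencilF{\feature}$ for some \emph{bad} feature $\feature$ — that is, that the feature we land on genuinely has both a $\PntSet_1$-site and a $\PntSet_2$-site and that $\ballX{\pntB}{\distSet{\pntB}{\sitesX{\feature}}}$ equals the ball we started with. This is where the identity $\distSet{\pntB}{\sitesX{\feature}} = \distSet{\pntB}{\PntSet}$ for $\pntB$ on a feature $\feature$ of the (triangulated) Voronoi diagram is used, together with \lemref{halfspace} / the definition of $\pencilF{\feature}$ to match things up. The rest — linearity of the ``hit the interior of all balls'' condition in $|\SepSet|$, and hence equivalence of the two minimization problems — is immediate once the set-level equivalence is established.
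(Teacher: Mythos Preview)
Your proof is correct and follows the same two-direction contrapositive argument as the paper: an unstabbed pencil ball witnesses a surviving bad adjacency, and conversely a surviving bad adjacency yields an empty ball lying in the pencil of the Voronoi feature induced by the points on its boundary. The paper dispatches your ``main obstacle'' in one line by simply taking the feature induced by $\PntSet \cap \partial\ballA$---since every simplex in the triangulation of a given Voronoi face carries the same site set, the triangulation bookkeeping you worry about is immaterial.
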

\begin{proof}
    Indeed, a Voronoi separating set $\SepSet$, must stab all the
    balls of $\BadBallsA$, otherwise, there would be Voronoi feature
    of $\VorX{ \PntSet \cup \SepSet}$ that has a generating site in
    both, $\PntSet_1$ and $\PntSet_2$.
    
    As for the other direction, consider a set $\setA$ that stabs all
    the balls of $\BadBallsA$, and observe that if $\PntSet_1$ and
    $\PntSet_2$ are not Voronoi separated in $\VorX{ \PntSet \cup
       \setA}$, then there exists a ball $\ballA$, that has no point
    of $\PntSet \cup \setA$ in its interior, and points from both
    $\PntSet_1$ and $\PntSet_2$ on its boundary. But then, this ball
    must be in the pencil of the Voronoi feature induced by $\PntSet
    \cap \ballA$. A contradiction to $\setA$ stabbing all the balls of
    $\BadBallsA$.
\end{proof}

\subsection{Exact algorithm in $\Re^d$}
\seclab{exact}

Given a set $\PntSet$ of $n$ points in $\Re^d$, a pair of disjoint
subsets $(\PntSet_1,\PntSet_2)$ of $\PntSet$ and an upper bound
$\maxGuard$ on the number of guards, we show how one can compute the
minimum size Voronoi separator realizing their separation in
$n^{O(\maxGuard)}$ time. Our approach is to construct a small number
of polynomial inequalities that are necessary and sufficient
conditions for separation, and then use cylindrical algebraic
decomposition to find a feasible solution.

\begin{observation}
    \obslab{feature:cover} For a set of guards $\SepSet$, a feature
    $\feature$ is completely removed from $\VorX{\PntSet\cup\SepSet}$,
    if and only if the induced halfflats of the guards, on
    $\spanF{\feature}$, cover $\feature$. This follows directly from
    \lemref{hit:balls} and \lemref{halfspace}.
\end{observation}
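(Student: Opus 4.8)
The plan is to obtain the observation by simply threading \lemref{hit:balls} through \lemref{halfspace}; no new geometric input is needed, only a change of viewpoint. First I would translate ``$\feature$ is completely removed from $\VorX{\PntSet\cup\SepSet}$'' into the pencil language: for a point $\pnt\in\feature$, the ball $\pencilPnt{\feature}{\pnt}=\ballX{\pnt}{\distSet{\pnt}{\sitesX{\feature}}}$ coincides with $\ballX{\pnt}{\distSet{\pnt}{\PntSet}}$ (since $\pnt\in\feature$), and so its interior is empty of $\PntSet$; adding the guards removes $\pnt$ from the feature exactly when some guard of $\SepSet$ falls into the interior of this ball (the degenerate case of a guard on its bounding sphere is addressed below). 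Thus $\feature$ is completely removed if and only if \emph{every} ball of $\pencilF{\feature}$ (see \Eqref{pencil}) has its interior stabbed by $\SepSet$ --- which is precisely the local form of the reduction proved in \lemref{hit:balls}.

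Next I would localize the role of a single guard via \lemref{halfspace}: for $\grd\in\SepSet$, the guard $\grd$ lies in the ball $\pencilPnt{\feature}{\pnt}$ if and only if $\pnt$ lies in the halfflat $\halfspaceX{\grd}{\feature}$ of $\spanF{\feature}$ that $\grd$ induces. Taking the union over the guards, the set of points $\pnt\in\feature$ whose pencil ball is stabbed by $\SepSet$ is exactly $\feature\cap\bigcup_{\grd\in\SepSet}\halfspaceX{\grd}{\feature}$. Combining this with the previous paragraph, $\feature$ is completely removed if and only if $\feature\subseteq\bigcup_{\grd\in\SepSet}\halfspaceX{\grd}{\feature}$, i.e., the induced halfflats of the guards cover $\feature$, which is the assertion.

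The one point I would take care to spell out --- and the sole (mild) obstacle I foresee --- is the open/closed discrepancy between ``$\grd$ in the interior of $\pencilPnt{\feature}{\pnt}$'' and the halfflat $\halfspaceX{\grd}{\feature}$, whose defining inequality in the proof of \lemref{halfspace} is non-strict. A guard sitting exactly on the bounding sphere of $\pencilPnt{\feature}{\pnt}$ merely adds an extra equidistant site and does not by itself delete the $k$-feature at $\pnt$, so the cleanest statement uses the \emph{open} halfflats (replace $\le$ by $<$ in that proof). Under the general-position hypothesis already in force no guard is equidistant in such a degenerate way, so the open and closed versions agree for the purpose of covering the $k$-simplex $\feature$, and the two-step chaining above yields the observation verbatim.
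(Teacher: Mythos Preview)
Your proposal is correct and follows exactly the approach the paper indicates: the observation is stated with the one-line justification ``This follows directly from \lemref{hit:balls} and \lemref{halfspace},'' and you have simply unpacked that chain (feature removed $\Leftrightarrow$ every pencil ball stabbed $\Leftrightarrow$ every center lies in some induced halfflat $\Leftrightarrow$ the halfflats cover $\feature$). Your explicit handling of the open/closed boundary issue is a welcome bit of care that the paper leaves implicit under general position.
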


\begin{observation}
    \obslab{helly}%
    Given a set $\HH$ of at least $k+1$ halfflats in a $k$-flat, if
    $\HH$ covers the $k$-flat, then there exists a subset of
    $(k+1)$-halfflats of $\HH$, that covers the $k$-flat.  This is a
    direct consequence of Helly's theorem.
\end{observation}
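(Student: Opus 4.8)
The plan is to dualize: turn ``$\HH$ covers the $k$-flat'' into ``a family of convex sets has empty intersection'', and then apply the finite form of Helly's theorem in dimension $k$. First I would identify the $k$-flat with $\Re^k$, so that each halfflat $h\in\HH$ is simply a halfspace of $\Re^k$ (whether open or closed is irrelevant). For $h\in\HH$ set $\overline{h}=\Re^k\setminus h$. The complement of a halfspace is again a halfspace (of the opposite open/closed type, and possibly empty when $h=\Re^k$), so each $\overline{h}$ is convex. By De Morgan, the hypothesis $\bigcup_{h\in\HH} h = \Re^k$ is exactly the statement that $\bigcap_{h\in\HH}\overline{h}=\emptyset$.

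Next I would invoke the contrapositive of Helly's theorem in $\Re^k$: for a finite family of convex subsets of $\Re^k$, if the intersection of the whole family is empty, then already some subfamily of at most $k+1$ of them has empty intersection (the finite version of Helly needs no compactness hypothesis). Applied to $\brc{\overline{h}}_{h\in\HH}$, and using $\cardin{\HH}\ge k+1$ to pad the subfamily up to size $k+1$ if necessary, this yields halfflats $h_1,\dots,h_{k+1}\in\HH$ with $\overline{h_1}\cap\cdots\cap\overline{h_{k+1}}=\emptyset$. Complementing once more gives $h_1\cup\cdots\cup h_{k+1}=\Re^k$, i.e.\ these $k+1$ halfflats cover the $k$-flat, as claimed.

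I do not expect a genuine obstacle here: the content is a single application of Helly. The only things to be slightly careful about are bookkeeping. First, one must be working inside the $k$-dimensional flat (such as $\spanF{\feature}$) rather than in $\Re^d$, so that the relevant Helly number is $k+1$ — this is precisely why the observation promises a subset of $(k+1)$ halfflats, which is what makes it useful downstream together with \obsref{feature:cover} when arguing about the removal of a $k$-feature. Second, the degenerate cases are harmless: if some $h$ already equals $\Re^k$ (equivalently its bounding hyperplane is parallel to, and misses, the flat) then $h$ alone covers and we pad with $k$ arbitrary other members of $\HH$, which exist by hypothesis; and if a bounding halfspace contains the whole flat then $h=\Re^k$ and the same padding applies, while if it misses the flat entirely $h=\emptyset$ and $\overline{h}=\Re^k$ contributes nothing to the intersection.
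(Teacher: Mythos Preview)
Your proposal is correct and is precisely the standard complementation argument that the paper's one-line remark ``This is a direct consequence of Helly's theorem'' is pointing to; the paper gives no further proof. Your handling of the bookkeeping (working inside the $k$-flat so the Helly number is $k+1$, padding when fewer than $k+1$ complements already have empty intersection, and the degenerate cases) is appropriate and matches how the observation is used downstream in \lemref{feature:cover}.
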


\paragraph{Low dimensional example.}
To get a better understanding of the problem at hand, the reader may
imagine the subproblem of removing a bad $2$-feature $\feature$
(i.e. $\feature$ is a triangle) from the Voronoi diagram. We know that
a set $\SepSet$ of $n$ guards removes $\feature$ from the Voronoi
diagram, if and only if the $n$ $2$-halfflats induced by $\SepSet$
cover the triangle $\feature$. If we add the feature-induced halfflats
induced by $\feature$ to the above set of halfflats, then the problem
of covering the triangle $\feature$, reduces to the problem of
covering the entire plane with this new set $S$ of $n+3$
halfflats. Then from \obsref{helly}, we have that $\feature$ is
removed from the Voronoi diagram if and only if there are three
halfplanes of $S$, that covers the entire plane (there are $O(n^3)$
such triplets).  \lemref{annihilate:feature} below show how to convert
the condition that any three $2$-halfflats cover the plane, into a
polynomial inequality of degree four in the coordinates of the guards.

\subsubsection{Constructing the Conditions}

% For a $k$-feature $\feature$, $\spanF{\feature} \setminus \feature$,
% and let $\shellF{\feature}$ denote this set of halfflats.

\begin{lemma}
    \lemlab{feature:cover}
    Let $\feature$ be a $k$-feature, and $\HH$ be a set of
    $k$-halfflats on $\spanF{\feature}$.  Then, $\HH$ covers
    $\feature$ $\iff$ there exists a subset $\HHB \subseteq \HH' = \HH
    \cup \shellF{\feature}$ of size $k+1$ that covers
    $\spanF{\feature}$.
\end{lemma}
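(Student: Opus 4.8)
The plan is to reduce the covering-a-$k$-feature question to a covering-the-entire-$k$-flat question, and then invoke Observation~\ref{observation:helly}. The key observation driving this is that the shell $\shellF{\feature}$ is, by definition, a family of $k+1$ open $k$-halfflats of $\spanF{\feature}$ whose union is exactly $\spanF{\feature} \setminus \feature$. Hence for $\HH' = \HH \cup \shellF{\feature}$, the set $\HH$ covers $\feature$ if and only if $\HH'$ covers all of $\spanF{\feature}$: the ``if'' direction is immediate since $\feature \subseteq \spanF{\feature}$ and $\feature$ is disjoint from every halfflat of $\shellF{\feature}$, so the $\feature$-portion of the cover must come from $\HH$; the ``only if'' direction is because $\HH$ handles $\feature$ and $\shellF{\feature}$ handles the complement.

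With that equivalence in hand, I would argue both directions of the stated biconditional. For the forward direction, suppose $\HH$ covers $\feature$. Then $\HH' = \HH \cup \shellF{\feature}$ covers the $k$-flat $\spanF{\feature}$, and since $\cardin{\HH'} \ge \cardin{\shellF{\feature}} = k+1$, Observation~\ref{observation:helly} (i.e., Helly's theorem applied to the complements, which are convex sets in a $k$-dimensional affine space) yields a subfamily $\HHB \subseteq \HH'$ with $\cardin{\HHB} = k+1$ that already covers $\spanF{\feature}$; in particular $\HHB$ covers $\feature$. For the reverse direction, if some $\HHB \subseteq \HH'$ of size $k+1$ covers $\spanF{\feature}$, then a fortiori $\HHB$ covers $\feature$; writing $\HHB = \HHB_1 \cup \HHB_2$ with $\HHB_1 \subseteq \HH$ and $\HHB_2 \subseteq \shellF{\feature}$, the halfflats of $\HHB_2$ contribute nothing to the portion of the cover lying inside $\feature$ (they are disjoint from $\feature$), so $\HHB_1 \subseteq \HH$ already covers $\feature$, and hence so does $\HH$.

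One small technical point I would want to be careful about is the open-versus-closed status of the halfflats. The shell halfflats are defined as \emph{open} halfflats, so $\shellF{\feature}$ covers $\spanF{\feature} \setminus \feature$ but not $\feature$ itself, which is precisely what makes the equivalence ``\,$\HH$ covers $\feature$ $\iff$ $\HH'$ covers $\spanF{\feature}$\,'' clean; if one were careless and took closed halfflats for the shell one would lose information on the boundary of $\feature$. For the application of Helly via Observation~\ref{observation:helly}, one passes to complements: each halfflat's complement within $\spanF{\feature}$ is a (relatively closed) halfflat, the family of complements has empty common intersection iff the halfflats cover, and Helly in dimension $k$ then gives $k+1$ of them with empty common intersection. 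The main obstacle, such as it is, is simply keeping the bookkeeping on $\HH$ versus $\shellF{\feature}$ straight when splitting the size-$(k+1)$ witness $\HHB$; there is no real analytic difficulty here, the content is entirely the shell's defining property together with Helly's theorem.
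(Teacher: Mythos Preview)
Your proposal is correct and follows essentially the same route as the paper's own proof: use that $\shellF{\feature}$ covers exactly $\spanF{\feature}\setminus\feature$ to pass from covering $\feature$ to covering $\spanF{\feature}$, invoke Helly (Observation~\ref{observation:helly}) to extract a size-$(k+1)$ subfamily, and for the converse use that the shell halfflats miss $\feature$ so the $\HH$-part of any covering subfamily already covers $\feature$. Your extra care about the open/closed distinction and the passage to complements for Helly is more explicit than the paper's write-up but not a different argument.
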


\begin{proof}    
    If $\HH$ covers $\feature$, then $\HH' = \HH \cup
    \shellF{\feature}$ covers $\spanF{\feature}$, as
    $\shellF{\feature}$ covers
    $\spanF{\feature}\setminus\feature$. Then, by Helly's theorem (see
    \obsref{helly}), we have that some subset of $\HH'$ of size $k+1$
    covers $\spanF{\feature}$.
    
    For the other direction, we have that some subset of $\HH'$ covers
    $\spanF{\feature}$ (of size $k+1$).  Since $\shellF{\feature}$
    does not cover any point in $\feature$, we have that $\HH = \HH'
    \setminus \shellF{\feature}$ must cover $\feature$ completely.
\end{proof}

\begin{observation}
    \obslab{plane:cover}%
    Consider a set $\HH$ of $k+1$ $k$-halfflats all contained in some
    $k$-flat $\Flat$. We are interested in checking that $\HH$ covers
    $\Flat$. Fortunately, this can be done by computing the $k+1$
    vertices induced by $\HH$ on $\Flat$, and verifying that each one
    of them is covered by the other halfflat of $\HH$. Formally, $\HH$
    covers $\Flat$ $\iff$ for every halfflat $h\in \HH$, we have that
    ${\cap_{i\in \HH\setminus \{h\}}\partial i} \,\in\, h$.
\end{observation}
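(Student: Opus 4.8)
The plan is to reduce the stated equivalence to a basic fact about polyhedra cut out by few halfspaces. Identify $\Flat$ with $\Re^k$, so that each $k$-halfflat $h_i\in\HH$ ($i=1,\dots,k+1$) becomes a halfspace with an outward normal $n_i$, and let $K_i=\Flat\setminus h_i$ denote the complementary (open) halfspace. Then $\HH$ covers $\Flat$ exactly when $K:=\bigcap_{i=1}^{k+1}K_i=\emptyset$. For each index $j$ put $v_j:=\bigcap_{i\ne j}\partial h_i$; the standing general-position assumption guarantees that the $k$ normals $\{n_i\}_{i\ne j}$ are linearly independent, so $v_j$ is a single, well-defined point — this is one of the ``$k+1$ vertices induced by $\HH$'' — and that $\partial h_1,\dots,\partial h_{k+1}$ have no common point, so in particular $v_j\notin\partial h_j$. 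It then suffices to prove the contrapositive of the claimed equivalence: $K\ne\emptyset$ if and only if $v_j\notin h_j$ for some $j$.

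For the direction asserting that if $v_j\notin h_j$ for some $j$ then $K\ne\emptyset$: since $\{n_i\}_{i\ne j}$ is a basis of $\Re^k$, choose a vector $w$ pointing strictly into $K_i$ for every $i\ne j$ simultaneously. Then $v_j+tw\in K_i$ for all $i\ne j$ and all $t>0$ (because $v_j\in\partial h_i$), and $v_j+tw\in K_j$ for all small $t$ because $K_j$ is open and $v_j\in K_j$ (here we use $v_j\notin h_j$). Hence $v_j+tw\in K$ for small $t>0$, so $\HH$ does not cover $\Flat$.

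For the converse, suppose $K\ne\emptyset$ and pass to the closed polyhedron $\overline K=\bigcap_i\overline{K_i}\supseteq K$. It contains no line: a direction $\ell$ of a line lying in $\overline K$ must be orthogonal to every $n_i$, but $n_1,\dots,n_k$ already span $\Re^k$, forcing $\ell=0$. Thus $\overline K$ is a nonempty pointed polyhedron and therefore has a vertex $v$. At $v$ the normals of the tight constraints span $\Re^k$, so at least $k$ of the halfspaces $\overline{K_i}$ are tight there, and by general position exactly $k$ of them are; hence $v$ lies on $\partial h_i$ for all $i$ in some $k$-element set, i.e. $v=v_j$ for the unique omitted index $j$. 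Finally $v_j=v\in\overline K\subseteq\overline{K_j}=(\Flat\setminus h_j)\cup\partial h_j$, and since $v_j\notin\partial h_j$ we conclude $v_j\in\Flat\setminus h_j$, that is, $v_j\notin h_j$. This establishes the equivalence, and the algorithmic statement follows: compute $v_1,\dots,v_{k+1}$ and test the $k+1$ memberships $v_j\in h_j$.

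The step needing the most care is keeping this reduction honest rather than any single computation: the general-position hypothesis must be invoked in three distinct places — to make each $v_j$ a genuine point, to rule out $v_j\in\partial h_j$, and to force every vertex of $\overline K$ to be one of the $v_j$ — and one must fix an open/closed convention for the halfflats (we treat them as closed; if some are taken open, e.g.\ shell halfflats, the neighborhood argument in the first direction and the closure computation in the second need only cosmetic adjustment). Beyond that, the ingredients — a nonempty pointed polyhedron has a vertex, and a vertex of a polyhedron in $\Re^k$ is the intersection of $k$ of its facet hyperplanes — are entirely standard.
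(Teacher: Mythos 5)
The paper records this as an \emph{Observation} and gives no proof of its own, so there is no ``paper's argument'' to compare against. Your proof is correct and supplies exactly the right details: the reduction to the complementary system $K=\bigcap_i K_i$ and the fact that a nonempty, line-free polyhedron has a vertex lying on $k$ of the $k+1$ facet hyperplanes handles the forward implication; and the perturbation argument for the converse is genuinely needed, because $v_j$ itself lies on $\partial h_i$ for all $i\ne j$ and hence is typically \emph{covered} by those halfflats, so one must move slightly off all of them simultaneously (your choice of $w$ using that $\{n_i\}_{i\ne j}$ is a basis does this correctly). Your accounting of where general position enters, and your remark that the argument tolerates a mix of open and closed halfflats (relevant here since the shell halfflats of a feature are open while the guard-induced halfflats are closed), are both accurate and make explicit assumptions the observation leaves implicit.
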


For a set $\PntSet$ of $d+1$ points in $\Re^{d}$ in general position,
let $\inducedBall{ \PntSet }$ be the unique ball having all the points
of $\PntSet$ on its boundary.

\begin{lemma}
    \lemlab{incircle}%
    Consider a $k$-feature $\feature$, let $\HH$ be a set of halfflats
    on $\spanF{\feature}$, and let $\SepSet$ be the set of guards
    inducing the halfflats of $\HH$. Assume that $\cardin{\sitesX{
          \feature }} + \cardin{\SepSet} = d+2$.  Let $h \in \HH$ be a
    $k$-halfflat induced by a guard $\grd$, and let $\pnt =
    \bigcap_{i\in \HH \setminus \brc{h}}\partial i$.  Then $\pnt \in
    h$ $\iff$ $\grd \in \inducedBall{ \MakeBig \sitesX {\feature}\cup
       \pth[]{ \MakesBig \SepSet \setminus \brc{\grd} } }$.
\end{lemma}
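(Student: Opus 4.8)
The plan is to identify the point $\pnt=\bigcap_{i\in\HH\setminus\brc{h}}\partial i$ explicitly as the center of the ball $\inducedBall{\setA}$, where $\setA=\sitesX{\feature}\cup\pth[]{\SepSet\setminus\brc{\grd}}$, and then to read the asserted equivalence directly off the defining inequality of the halfflat $h$ supplied by \lemref{halfspace}. This is precisely the step that converts the vertex-covering test of \obsref{plane:cover} into an ``a guard lies inside a circumscribed ball'' condition, so the identification of $\pnt$ is the crux.

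First I would do the dimension bookkeeping. Since $\feature$ is a $k$-feature we have $\cardin{\sitesX{\feature}}=d-k+1$, so the hypothesis $\cardin{\sitesX{\feature}}+\cardin{\SepSet}=d+2$ forces $\cardin{\SepSet}=k+1$, hence $\cardin{\HH\setminus\brc{h}}=k$. Each $i\in\HH\setminus\brc{h}$ is a $k$-halfflat of the $k$-flat $\spanF{\feature}$, so $\partial i$ is a $(k-1)$-flat lying inside $\spanF{\feature}$; by general position these $k$ flats meet in exactly one point, so $\pnt$ is well defined and $\pnt\in\spanF{\feature}$. Next, by the boundary description in \lemref{halfspace}, $\partial\halfspaceX{\grd'}{\feature}$ is precisely the set of points of $\spanF{\feature}$ equidistant from $\grd'$ and from $\sitesX{\feature}$. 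Since $\pnt\in\spanF{\feature}$ it is equidistant from all of $\sitesX{\feature}$, and since $\pnt\in\partial i$ for the halfflat $i$ induced by each $\grd'\in\SepSet\setminus\brc{\grd}$, it is also equidistant from each such $\grd'$ and $\sitesX{\feature}$. Thus $\pnt$ is equidistant from every one of the $\cardin{\setA}=(d-k+1)+k=d+1$ points of $\setA$.

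Then I would match $\pnt$ with the circumcenter of $\setA$: let $\cenA$ and $\radA$ be the center and radius of $B=\inducedBall{\setA}$. Then $\cenA$ is at distance $\radA$ from every site of $\feature$, so $\cenA\in\spanF{\feature}$, and it is equidistant from $\grd'$ and $\sitesX{\feature}$ for each $\grd'\in\SepSet\setminus\brc{\grd}$, so $\cenA\in\partial i$ for the corresponding $i$; hence $\cenA\in\bigcap_{i\in\HH\setminus\brc{h}}\partial i=\brc{\pnt}$, i.e. $\cenA=\pnt$, and in particular $\distSet{\pnt}{\sitesX{\feature}}=\radA$. Finally, by \lemref{halfspace}, for any site $\pntB\in\sitesX{\feature}$ we have $h=\halfspaceX{\grd}{\feature}=\brc{\pntA\sep{\dist{\pntA}{\grd}\le\dist{\pntA}{\pntB}}}\cap\spanF{\feature}$, and $\dist{\pnt}{\pntB}=\distSet{\pnt}{\sitesX{\feature}}=\radA$ because $\pnt\in\spanF{\feature}$; therefore $\pnt\in h$ iff $\dist{\pnt}{\grd}\le\radA$ iff $\grd$ lies in the closed ball $B=\inducedBall{\setA}$, which is exactly the claim. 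The only delicate part is the general-position bookkeeping---that the $k$ boundary flats meet in a single point, and that $d+1$ points in $\Re^d$ have a unique circumcenter---but this is subsumed by the standing general-position assumption on $\PntSet\cup\SepSet$, so it is a matter of care rather than of real difficulty.
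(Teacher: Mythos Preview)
Your proof is correct and follows essentially the same approach as the paper's: both arguments identify $\pnt$ as the point equidistant from all of $\sitesX{\feature}\cup(\SepSet\setminus\brc{\grd})$, hence as the center of $\inducedBall{\setA}$, and then read the equivalence off the description of $h=\halfspaceX{\grd}{\feature}$ from \lemref{halfspace}. Your version is more explicit about the dimension count and about why $\pnt$ coincides with the circumcenter, whereas the paper phrases the last step via the pencil ball $\pencilPnt{\feature}{\pnt}$, but these are the same computation.
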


\begin{proof}
    The boundary of the induced halfflat of a guard $\grd$, is the set
    of points equidistant to the points of $\sitesX{ \feature } \cup
    \grd$, see \lemref{halfspace}; equivalently, this is the set
    containing every point $\pnt$, such that $\pencilPnt{ \feature }{
       \pnt }$ contains $\grd$ on its boundary. Furthermore, for any
    $\pntA$ inside this induced halfflat, $\grd \in \pencilPnt{
       \feature }{ \pntA }$. Thus $\grd \in \pencilPnt{ \feature }{
       \pnt }$ $\iff$ the point $\pnt \in h$. Also, $\pnt$ is the
    point in equal distance to all the points of
    $\sitesX{\feature}\cup \SepSet \setminus \brc{\grd}$.
\end{proof}

A set $\SepSet$ of $m$ guards in $\Re^d$ can be interpreted as a
vector in $\Re^{dm}$ encoding the locations of the guards. One can
then reduce the requirement that $\SepSet$ provides the desired
separation into a logical formula over the coordinates of this vector.

\begin{lemma}
    \lemlab{annihilate:feature}%
    Let $\feature$ be a bad $k$-feature of $\bFeaturesX
    {\PntSet}{\PntSet_1}{\PntSet_2}$, and let $m$ be a parameter. One
    can compute a boolean sentence $\annihilateF{\feature}{\SepSet}$
    consisting of $m^{O\pth[]{k}}$ polynomial inequalities (of degree
    $\leq d+2$), over $d m$ variables, such that
    $\annihilateF{\feature}{\SepSet}$ is true $\iff$ the set of $m$
    guards $\SepSet$ (induced by the solution of this formula)
    destroys $\feature$ completely when inserted. Formally, for every
    $\feature'\in \bFeaturesX{\PntSet\cup\SepSet}{\PntSet_1}
    {\PntSet_2}$, we have $\feature\cap\feature'= \emptyset$.
\end{lemma}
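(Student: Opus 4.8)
The plan is to convert the geometric condition ``$\SepSet$ destroys $\feature$ completely'' into a boolean combination of polynomial inequalities by peeling off the layers of reductions already set up. First I would invoke \obsref{feature:cover}: inserting $\SepSet$ destroys $\feature$ iff the induced halfflats $\HH = \brc{\halfspaceX{\grd}{\feature} \sep{ \grd \in \SepSet}}$ on $\spanF{\feature}$ cover $\feature$. Then \lemref{feature:cover} rewrites this as: there exists a $(k+1)$-subset $\HHB$ of $\HH' = \HH \cup \shellF{\feature}$ that covers the $k$-flat $\spanF{\feature}$. Since $\cardin{\shellF{\feature}} = k+1$ and $\cardin{\HH} \le m$, the number of such subsets is $\binom{m + k + 1}{k+1} = m^{O(k)}$, so $\annihilateF{\feature}{\SepSet}$ will be the disjunction over all these subsets of a predicate ``$\HHB$ covers $\spanF{\feature}$''. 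I would write each such predicate out explicitly.

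\textbf{Encoding ``$\HHB$ covers $\Flat$''.} For a fixed candidate $(k+1)$-tuple $\HHB = \brc{h_1, \ldots, h_{k+1}}$, \obsref{plane:cover} says $\HHB$ covers $\spanF{\feature}$ iff for each $j$, the vertex $\pnt_j = \bigcap_{i \ne j} \partial h_i$ lies in $h_j$. Each $\partial h_i$ is either (a) a boundary of a feature-induced halfflat of $\shellF{\feature}$ — a fixed affine hyperplane within $\spanF{\feature}$ determined by $\feature$ alone, or (b) the boundary of $\halfspaceX{\grd}{\feature}$ for some guard $\grd$, which by \lemref{halfspace} is the set of points of $\spanF{\feature}$ equidistant from $\grd$ and from (any) site of $\feature$; this is an affine hyperplane in $\spanF{\feature}$ whose coefficients are affine in the coordinates of $\grd$. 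So $\pnt_j$, being the solution of the linear system $\brc{\partial h_i}_{i \ne j}$, has coordinates that are rational functions in the guard coordinates; clearing denominators, the containment $\pnt_j \in h_j$ becomes a polynomial inequality. To keep the degree bounded I would use \lemref{incircle}, which handles exactly the case $\cardin{\sitesX{\feature}} + \cardin{\SepSet \text{ involved}} = d+2$: it says $\pnt_j \in h_j$ iff $\grd_j \in \inducedBall{\sitesX{\feature} \cup (\SepSet' \setminus \brc{\grd_j})}$, where $\SepSet'$ is the set of the $\le k$ guards appearing in $\HHB$ and $\grd_j$ is the one inducing $h_j$. Membership of a point in the ball through $d+1$ given points is a single ``in-sphere'' determinant condition — a polynomial of degree $\le d+2$ in the coordinates of all points involved, hence of degree $\le d+2$ in the guard variables. (When $\partial h_j$ comes from $\shellF{\feature}$, the condition $\pnt_j \in h_j$ only constrains the guards through $\pnt_j$, and one gets a lower-degree polynomial; and one must check the dimension count $\cardin{\sitesX{\feature}} + \cardin{\SepSet'} \le d+2$ holds — since $\cardin{\sitesX{\feature}} = d-k+1$ and at most $k+1$ of the halfflats in $\HHB$ are guard-induced, and the vertex $\pnt_j$ uses $k$ of them, at most $k+1 \le d+1$ guards enter, which may overshoot $d+2$; I would handle this by noting each vertex $\pnt_j$ is determined by exactly $k$ hyperplanes and that the relevant count for the in-sphere test is $\cardin{\sitesX{\feature}} + (\text{number of guard-hyperplanes among the }k) \le (d-k+1) + k = d+1 < d+2$, plus $\grd_j$ itself giving $d+2$, which is exactly the hypothesis of \lemref{incircle}.)

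\textbf{Assembling the formula.} Putting it together: $\annihilateF{\feature}{\SepSet}$ is the disjunction, over all $m^{O(k)}$ candidate $(k+1)$-subsets $\HHB$ of $\HH'$, of the conjunction over $j = 1, \ldots, k+1$ of the polynomial inequality encoding $\pnt_j \in h_j$ (an in-sphere predicate of degree $\le d+2$ from \lemref{incircle}, or a lower-degree linear/quadratic predicate when shell-halfflats are involved). This is a boolean sentence with $m^{O(k)}$ polynomial inequalities in $dm$ variables, each of degree $\le d+2$. Correctness is immediate from the chain \obsref{feature:cover} $\Rightarrow$ \lemref{feature:cover} $\Rightarrow$ \obsref{plane:cover} $\Rightarrow$ \lemref{incircle}: the formula is true exactly when $\HH$ covers $\feature$, which by \obsref{feature:cover} is exactly when every point of $\feature$ is within equal-or-larger distance to $\sitesX{\feature}$ than to $\SepSet$, i.e.\ when no bad feature $\feature' \in \bFeaturesX{\PntSet\cup\SepSet}{\PntSet_1}{\PntSet_2}$ meets $\feature$.

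\textbf{Main obstacle.} The substantive bookkeeping — and the part I'd expect to take care with — is the degree/dimension accounting in the appeal to \lemref{incircle}: one must verify that along each candidate subset $\HHB$, the guards actually involved in each vertex computation, together with $\sitesX{\feature}$, number exactly $d+2$ (or fall into a degenerate lower case one must argue separately), so that the in-sphere determinant is genuinely of degree $\le d+2$ rather than something larger coming from naive elimination of the $\pnt_j$ coordinates. Handling the mixed cases where some of the $k$ hyperplanes defining $\pnt_j$ come from $\shellF{\feature}$ (affine, guard-independent) and some from guards requires a short separate argument but poses no real difficulty.
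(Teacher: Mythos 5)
Your proposal follows essentially the same chain of reductions as the paper — \obsref{feature:cover}, then \lemref{feature:cover}, then \obsref{plane:cover}, then \lemref{incircle} — and arrives at the same $m^{O(k)}$-disjunction of conjunctions of in-sphere determinants of degree at most $d+2$. It is correct.

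One small remark on presentation: the ``main obstacle'' you flag — the mixed cases where some of the $k$ hyperplanes determining $\pnt_j$, or $h_j$ itself, come from $\shellF{\feature}$ rather than from actual guards — is resolved more cleanly in the paper by a small uniformizing trick. The paper introduces an auxiliary set $\SepSet_\feature$ of $k+1$ fixed points that \emph{exactly} induce the shell halfflats of $\feature$, sets $\SepSet' = \SepSet \cup \SepSet_\feature$, and then ranges the disjunction over $(k+1)$-subsets $\SepSet_1 \subseteq \SepSet'$ rather than over $(k+1)$-subsets of halfflats. This makes every vertex test a genuine instance of \lemref{incircle} with $\cardin{\sitesX{\feature}} + \cardin{\SepSet_1} = (d-k+1)+(k+1) = d+2$, with no case split: the shell-induced points are simply constants in the determinant, so the resulting polynomial still has degree $\leq d+2$ in the variable guard coordinates. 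Your dimension count does work out, but the virtual-guard device lets you skip the ad hoc accounting entirely. Otherwise the argument is the same.
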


\begin{proof}
    Let $\HH=\brc{\halfspaceX{\pnt}{\feature}\sep {\pnt\in\SepSet}}$
    be the set of halfflats on $\spanF{\feature}$ induced by the
    guards in $\SepSet$. By \obsref{feature:cover}, if $\HH$ covers
    $\feature$, then $\SepSet$ removes $\feature$ from the Voronoi
    diagram (see also \lemref{hit:balls}). To this end, let $\HH' =
    \HH \cup \shellF{\feature}$. Further, let $\SepSet_\feature$ be a
    set of $k+1$ points that exactly induce the halfflats of $\shellF{
       \feature }$ and let $\SepSet'=\SepSet\cup\SepSet_
    \feature$. From \lemref{feature:cover}, we know that $\HH$ covers
    $\feature$ $\iff$ some size $k+1$ subset of $\HH'$ covers the
    $k$-flat $\spanF{\feature}$. Then by \obsref{plane:cover}, $\HH$
    covers $\feature$ $\iff$ there exists some size $k+1$ subset
    $\HH_1$ of $\HH'$, such that for every $h\in \HH_1$, we have
    $\bigcap_{i\in \HH_1 \setminus h}\partial i\in h$.
    
    Since $\SepSet'$ exactly induces the halfflats of $\HH'$,
    \lemref{incircle} implies that $\feature$ is completely removed
    from $\VorX {\PntSet\cup\SepSet}$ $\iff$ there exists some size
    $k+1$ subset $\SepSet_1$ of $\SepSet'$, such that for every
    $\grd\in \SepSet_1$, $\grd$ lies inside the ball containing the
    points of $\sitesX{\feature}\cup \pth{\SepSet_1\setminus\grd}$ on
    its boundary. More formally, we define,
    \begin{align*}
        \annihilateF{\feature}{\SepSet}%
        \qquad\equiv \qquad%
        \bigvee _{\SepSet_1\subseteq\SepSet',\cardin{\SepSet_1}=k+1}
        \pth{ \bigwedge_{\grd\in\SepSet_1} \pbrc{\indet{ \MakeBig
                 \sitesX{\feature}\cup
                 \pth{\SepSet_1\setminus\grd}}{\grd} \leq 0 }},
    \end{align*}
    where $\indet{S}{\pnt}$ represents the standard in-circle
    determinant whose sign determines the inclusion of $\pnt$ in the
    ball fixed by the points of $S$. This determinant affords us a
    polynomial of degree at most $d+2$ involving only the variables of
    $\SepSet$, as the points of $\SepSet_\feature$ are merely
    constants.  Clearly, $\annihilateF{\feature}{\SepSet}$ contains at
    most $\cardin{\SepSet}^{O\pth[]{k}}$ polynomial inequalities, and
    the claim follows.
\end{proof}

\subsubsection{The Result}

\begin{theorem}
    \thmlab{exact}%
    Let $\PntSet$ be a set of $n$ points in $\Re^d$. Let
    $(\PntSet_1,\PntSet_2)$ be some disjoint partition of $\PntSet$
    and $\maxGuard$ be a parameter, such that there exists a Voronoi
    separator for $(\PntSet_1,\PntSet_2)$ of at most $\maxGuard$
    points. The minimal size Voronoi separator can be computed in
    $n^{O\pth[]{\maxGuard}}$ time, if such a separator exists of size
    at most $\maxGuard$ (note, that the constant in the $O$ notations
    depends on $d$).
\end{theorem}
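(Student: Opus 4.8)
The plan is to use \lemref{hit:balls} to recast the problem as hitting the implicitly–defined ball family $\BadBallsA$, and then to encode the statement ``$\SepSet$ is a Voronoi separator'' as a single existential sentence over the $d\maxGuard$ real coordinates of the guards, so that one call to a cylindrical algebraic decomposition (CAD) routine both decides feasibility and returns an explicit witness. Throughout, general position is assumed, as in the rest of the paper, and the constants hidden in the $O$ notation are allowed to depend on $d$.

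First I would compute $\VorX{\PntSet}$, triangulate it (bottom–vertex triangulation), and extract $\bFeaturesX{\PntSet}{\PntSet_1}{\PntSet_2}$; since a Voronoi diagram of $n$ points in $\Re^d$ has $O(n^{\lceil d/2\rceil})$ faces, there are $N = n^{O(1)}$ bad features, and this step costs $n^{O(1)}$ time. Next, fix a target size $m$. For each bad feature $\feature$ I apply \lemref{annihilate:feature} with parameter $m$ to obtain $\annihilateF{\feature}{\SepSet}$ --- a boolean combination of $m^{O(d)}$ polynomial inequalities of degree $\le d+2$ in the $dm$ variables encoding $\SepSet$ --- and set $\Phi_m(\SepSet) \equiv \bigwedge_{\feature} \annihilateF{\feature}{\SepSet}$. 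By \obsref{feature:cover} together with \lemref{feature:cover} and \lemref{annihilate:feature}, an assignment of the $dm$ variables satisfies $\Phi_m$ exactly when the corresponding $m$-point set removes every bad feature, i.e.\ is a Voronoi separator for $(\PntSet_1,\PntSet_2)$; here I rely on the key fact supplied by those lemmas, namely that ``$\feature$ is covered by the guards' induced halfflats'' needs only boundedly many halfflats (at most $k+1\le d+1$, by Helly, \obsref{helly}) drawn from the guards and the shell $\shellF{\feature}$, so each feature contributes only polynomially-in-$m$ many bounded-degree polynomials in the guard coordinates.

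Then I would iterate $m = 0, 1, \dots, \maxGuard$, and for each $m$ decide the existential-theory-of-the-reals sentence $\exists\,\SepSet : \Phi_m(\SepSet)$ by CAD (equivalently, by the algorithms of Renegar or Basu--Pollack--Roy). The sentence has $v = dm$ variables and $s = N\cdot m^{O(d)} = n^{O(1)}$ distinct polynomials --- using that we may assume $\maxGuard \le n$, so $m^{O(d)} = n^{O(1)}$ --- each of degree $D = O(1)$; the decision procedure runs in $(sD)^{O(v)} = \bigl(n^{O(1)}\bigr)^{O(dm)} = n^{O(m)}$ time and, when the sentence is satisfiable, also outputs a sample point, i.e.\ an explicit (algebraic) $m$-point separator. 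The smallest $m$ for which $\Phi_m$ is satisfiable is the optimal separator size --- and it is $\le\maxGuard$ by hypothesis --- and the associated sample point is a minimum Voronoi separator. Summing the costs over $m = 0,\dots,\maxGuard$ gives $\maxGuard\cdot n^{O(\maxGuard)} = n^{O(\maxGuard)}$, as claimed.

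The conceptually hard step is already discharged by the earlier lemmas: because $\BadBallsA$ is infinite and only implicitly specified, one cannot enumerate the balls to be hit, and the content of \lemref{hit:balls}--\lemref{annihilate:feature} is precisely that the covering condition for each bad feature can nevertheless be written with a bounded-per-feature number of bounded-degree polynomials in the guard coordinates (the sites of $\feature$ and the shell points entering only as constants). With that in hand, the only remaining care for the present theorem is bookkeeping: verifying that conjoining over all $N = n^{O(1)}$ bad features keeps the polynomial count polynomial in $n$, that the number of \emph{variables} $dm$ is the sole quantity entering the exponent of the CAD time bound (which is what produces the $n^{O(\maxGuard)}$ shape), and that we invoke the variant of CAD that returns a representative point of a feasible cell, so that the algorithm outputs the separator itself and not merely its cardinality.
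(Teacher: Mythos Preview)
Your proposal is correct and follows essentially the same route as the paper: build $\annihilateX{\SepSet}=\bigwedge_{\feature}\annihilateF{\feature}{\SepSet}$ via \lemref{annihilate:feature}, then feed the resulting system of $n^{O(1)}$ degree-$O(1)$ polynomials in $d m$ variables into the Basu--Pollack--Roy sign-condition algorithm to decide and witness feasibility in $n^{O(m)}$ time, iterating $m=1,\dots,\maxGuard$. Your extra bookkeeping (bounding $m^{O(d)}$ by $n^{O(1)}$ via $\maxGuard=O(n)$, and insisting on the sample-point variant of the decision procedure) is sound and slightly more explicit than the paper's version.
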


\begin{proof}
    Let $\nGuard$ be a parameter to be fixed shortly. Let $\SepSet$ be
    a set of $\nGuard$ guards in $\Re^d$.  By
    \lemref{annihilate:feature}, the condition $\annihilateX{\SepSet}$
    that every bad feature is removed from the Voronoi diagram of
    $\PntSet\cup\SepSet$, can be written as
    \begin{align*}
        \annihilateX{\SepSet} = \bigwedge_{\feature \in
           \bFeaturesX[]{\PntSet}{\PntSet_1}{\PntSet_2}}
        \annihilateF{\feature}{\SepSet}.
    \end{align*}
    Namely, $\SepSet$ is a Voronoi separator for
    $(\PntSet_1,\PntSet_2)$.
    % . Again from
    % \lemref{annihilate:feature}, we know that $\PP$ contains
    The formula $\annihilateX {\SepSet}$ contains $n^{O\pth[]{d}}$
    degree-$d+2$ polynomial inequalities comprising of at most $d
    \nGuard$ variables. By \cite[Theorem~13.12]{bpr-arag-06}, one can
    compute, in $n^{O\pth[]{\nGuard}}$ time, a solution as well as the
    sign of each polynomial in $\PP$, for every possible sign
    decomposition that $\PP$ can attain. Now for each attainable sign
    decomposition, we simply check if $\annihilateX{\SepSet}$ is
    true. This can be done in $n^{O\pth[]{d}}$ time. The algorithm
    computes a Voronoi separator for $(\PntSet_1,\PntSet_2)$ in
    $\PntSet$ of size $\nGuard$, in $n^{O\pth[]{\nGuard}}$ time, if
    such a separator exists.
    
    Now, the algorithms tries $\nGuard = 1, \ldots, \maxGuard$ and
    stops as soon as a feasible solution is found.
\end{proof}

\paragraph{Remark.}

Interestingly, both here and in the rest of the paper, one can specify
which pair of points need to be separated from each other (instead of
a partition of the input set), and the algorithm works verbatim for
this case (of course, in the worst case, one might need to specify
$O(n^2)$ pairs, if one explicitly lists the forbidden pairs).

\section{Approximation algorithms}

\subsection{Constant factor approximation}
\seclab{const:factor}

Given a set $\PntSet$ of $n$ points in $\Re^d$, and a partition
$(\PntSet_1,\PntSet_2)$ of $\PntSet$, we show how one can compute in
$n^{O(d)}$ time, a Voronoi separator $\SepSet$ for $(\PntSet_1,
\PntSet_2)$, whose size is a constant factor approximation to the size
of the minimal Voronoi separator realizing such a partition.

\subsubsection{The algorithm \GreedySeparator{}}
Since the problem is equivalent to a hitting set problem on balls, see
\lemref{hit:balls}, we can apply the ``standard'' greedy
strategy. Specifically, given $\PntSet \subseteq \Re^d$, and a pair of
disjoint subsets $\pth[]{\PntSet_1, \PntSet_2}$, start with an empty
set $\setA$ of stabbing points.

Now, at each iteration, compute the Voronoi diagram $\VorX{ \PntSet
   \cup \setA}$, and compute the set of bad features
$\bFeaturesX{\PntSet \cup \setA}{\PntSet_1}{\PntSet_2}$. Compute the
minimum ball in $\pencilF{\feature}$ (see \Eqref{pencil}) for each of
the bad features $\feature \in \bFeaturesX{\PntSet \cup
   \setA}{\PntSet_1}{\PntSet_2}$.  Let $\ballA=\ballX{\pnt}{r}$ be the
smallest ball encountered. Let $\PntSet_\ballA$ be a constant size set
of points, such that any ball that intersects $\ballA$ (and is at
least as large), must intersect $\PntSet_\ballA$. Such a set can be
easily constructed by sprinkling a $r$-dense set in the enlarged ball
$\ballX{\pnt}{2r}$. Note that $\PntSet_\ballA$ must also stab
$\ballA$.  The resulting set $\PntSet_\ballA$ has constant size (for
example, by using grid points of the appropriate size). Set $\setA
\leftarrow \setA \cup \PntSet_\ballA$, and repeat.

\subsubsection{The Result}

\begin{theorem}%
    \thmlab{const:sep}%
    Let $\PntSet$ be a set of $n$ points in $\Re^d$, and
    $\pth[]{\PntSet_1,\PntSet_2}$ be a given pair of disjoint subsets
    of $\PntSet$. One can compute a Voronoi separator $\SepSet$ that
    realizes this partition. The algorithm runs in $O(n^2 \log n)$
    time for $d=2$, and in $O\pth{n^{\ceil{d/2}+1}}$ time, for $d >
    2$.  The algorithm provides a constant factor approximation to the
    smallest Voronoi separator realizing this partition.
\end{theorem}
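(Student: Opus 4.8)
The plan is to analyze the algorithm \GreedySeparator{} as the classical greedy algorithm for geometric hitting set of balls, using the reduction of \lemref{hit:balls}: a set is a Voronoi separator for $(\PntSet_1,\PntSet_2)$ iff it stabs the interior of every ball in the (implicit, infinite) family $\BadBallsA=\BadBalls{\PntSet}{\PntSet_1}{\PntSet_2}$. Correctness is immediate, since the loop halts exactly when $\bFeaturesX{\PntSet\cup\setA}{\PntSet_1}{\PntSet_2}=\emptyset$, i.e., when no cell of $\PntSet_1$ touches a cell of $\PntSet_2$. So the two things to prove are the constant approximation ratio and the running time.

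For the ratio, I would run a packing/charging argument whose linchpin is a monotonicity observation. Let $\setA^{(i)}$ be the accumulated set at the start of iteration $i$, and let $\ballA_i=\ballX{\pnt_i}{r_i}$ be the globally smallest ball over all pencils of bad features of $\VorX{\PntSet\cup\setA^{(i)}}$. A ball in such a pencil is empty of $\PntSet\cup\setA^{(i)}$ in its interior and carries a point of $\PntSet_1$ and a point of $\PntSet_2$ on its boundary; since $\setA$ only grows, the family $\BadBalls{\PntSet\cup\setA^{(i)}}{\PntSet_1}{\PntSet_2}$ is non-increasing in $i$, so $r_1\le r_2\le\cdots$, and since such a ball is empty of $\PntSet$ as well, $\ballA_i\in\BadBallsA$. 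Next, all balls of $\pencilF{\feature_i}$ (where $\feature_i$ is the bad feature producing $\ballA_i$) are centered in $\feature_i$, have radius $\ge r_i$, and — by the triangle inequality through a common site of $\feature_i$ — meet $\ballA_i$; thus the constant-size net $\PntSet_{\ballA_i}$, which by construction stabs every ball of radius $\ge r_i$ meeting $\ballA_i$, destroys $\feature_i$ (so the algorithm makes progress and terminates) and, crucially, forces every later $\ballA_j$ — which has radius $r_j\ge r_i$ and interior disjoint from $\setA\supseteq\PntSet_{\ballA_i}$ — to satisfy $\ballA_j\cap\ballA_i=\emptyset$. Hence $\ballA_1,\ballA_2,\dots$ are pairwise disjoint balls in $\BadBallsA$. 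Any optimal separator $\Opt$ must stab all of $\BadBallsA$ (\lemref{hit:balls}), and no point stabs two disjoint balls, so the number of iterations is at most $\cardin{\Opt}$; since each iteration adds $O(1)$ guards, $\cardin{\SepSet}=O(\cardin{\Opt})$.

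For the running time I would first bound the iteration count by $O(n)$: a separator of size $O(n)$ always exists — e.g., surround each input point by a constant number of guards on a sphere of radius $\eps$ around it, with $\eps$ a small fraction of that point's nearest-neighbor distance, placed so that the guard directions cover the direction-sphere to within $60^\circ$; a short angle computation shows this confines the cell of the point to that tiny ball, making the $n$ cells pairwise disjoint. Thus $\cardin{\Opt}=O(n)$, the loop runs $O(n)$ times, and throughout $\cardin{\PntSet\cup\setA}=O(n)$. Each iteration is then dominated by computing $\VorX{\PntSet\cup\setA}$ and scanning it (constant work per feature, via \Eqref{pencil}) for bad features and their minimal balls: $O(n\log n)$ for $d=2$ and $O(n^{\ceil{d/2}})$ for $d>2$; multiplying by the $O(n)$ iterations gives the claimed $O(n^2\log n)$ and $O(n^{\ceil{d/2}+1})$ bounds.

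The main obstacle is the bookkeeping around the fact that the diagram changes at every iteration — inserting guards can create \emph{new} bad features (having a guard among their sites) — so one must verify both that the greedily chosen balls $\ballA_i$ really belong to the fixed family $\BadBalls{\PntSet}{\PntSet_1}{\PntSet_2}$ that $\Opt$ is required to hit, and that the disjointness $\ballA_j\cap\ballA_i=\emptyset$ survives; both reduce to the monotonicity of $\BadBalls{\PntSet\cup\setA}{\PntSet_1}{\PntSet_2}$ under growing $\setA$. A secondary, routine point is the explicit constant-size construction of $\PntSet_{\ballA}$ with the exact property ``stabs the interior of every ball of radius $\ge r$ meeting $\ballA$'', which is just a sufficiently fine grid inside $\ballX{\pnt}{2r}$ together with a mild perturbation to land the stabbing points strictly inside.
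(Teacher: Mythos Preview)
Your proof is correct and follows essentially the same greedy charging strategy as the paper. The only cosmetic difference is that you package the key step as a packing argument (the successively chosen balls $\ballA_i$ are pairwise disjoint members of $\BadBallsA$, hence there are at most $\cardin{\Opt}$ iterations), whereas the paper phrases it as an exchange (each $\ballA_i$ contains a point of the optimum that $\PntSet_{\ballA_i}$ thereafter renders redundant for all remaining, larger, bad balls); both rest on the same monotonicity $r_1\le r_2\le\cdots$ and the same stabbing property of $\PntSet_{\ballA}$, and both give the identical iteration bound and running-time analysis.
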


\begin{proof}
    Observe the optimal solution is finite, as one can place $d+1$
    points around each point of $\PntSet_1$, close enough, so that
    they Voronoi separate this point from the rest of $\PntSet$. As
    such, the optimal solution is of size at most
    $(d+1)\min\pth{\cardin{\PntSet_1}, \cardin{\PntSet_2}}$.
    
    As for the quality of approximation, this is a standard exchange
    argument. Indeed, consider the optimal solution, and observe that,
    at each iteration, the minimum ball found $\ballA$ must contain at
    least one point of the optimal solution, and the greedy algorithm
    replaces it in the approximation by the set $\PntSet_\ballA$, such
    that any ball that is stabbed by a point of the optimal solution
    that is contained in $\ballA$, and is at least as big as $\ballA$
    is also stabbed by some points of $\PntSet_\ballA$. In particular,
    if $c$ is the maximum size set $\PntSet_\ballA$ constructed by the
    algorithm (which is a constant), then the quality of approximation
    of the algorithm is $c$.
    
    The above implies that the algorithms terminates with at most
    $O(n)$ points added. Since computing the Voronoi diagram takes
    $O(n \log n)$ time in the plane, this implies that the running of
    the algorithm in the plane is $O(n^2 \log n)$.  In higher
    dimensions, computing the Voronoi diagram takes
    $O\pth{n^{\ceil{d/2}}}$, and overall, the algorithm takes
    $O\pth{n^{\ceil{d/2}+1}}$ time.
\end{proof}

\subsection{Polynomial time approximation scheme}
\seclab{PTAS}

Given a set $\PntSet$ of $n$ points in $\Re^d$, a parameter $\eps>0$,
and a pair of disjoint subsets $(\PntSet_1,\PntSet_2)$ of $\PntSet$,
we show here how to $(1+\eps)$-approximate, in $n^{O( 1/\eps^d)}$
time, a Voronoi separator $\SepSet$ for $(\PntSet_1,\PntSet_2)$.

As implied by \lemref{hit:balls}, this problem boils down to a hitting
set problem. That is, the desired partition give rise to a set of
balls $\SetBalls$, such that our task is to approximate the minimum
size hitting set for $\SetBalls$. The challenge here is that
$\SetBalls$ is an infinite set, see \Eqrefpage{hitting:set}.
% Fortunately, we can adapt techniques used by Har-Peled
% \cite{hk-annsl-13} for our case.

We need an improved separator theorem, as follows.
\begin{theorem}
    \thmlab{ball:separator}%
    Let $\SetX$ be a set of points in $\Re^d$, and $k>0$ be an integer
    sufficiently smaller than $\cardin{\SetX}$.  One can compute, in
    $O\pth{\cardin{\SetX}}$ expected time, a set $\SepSet$ of
    $O\pth{k^{1-1/d}}$ points and a sphere $\sphereB$ containing
    $\Theta(k)$ points of $\SetX$ inside it, such that for any set
    $\SetBalls$ of balls stabbed by $\SetX$, we have that every ball
    of $\SetBalls$ that intersects $\sphereB$ is stabbed by a point of
    $\SepSet$.
\end{theorem}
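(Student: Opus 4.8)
The plan is to adapt the construction from \thmref{separator:main} verbatim, but with the ``budget'' $n/c_d$ replaced by $\Theta(k)$ throughout, and then observe that the Voronoi-separator argument used there actually gives the stronger ball-stabbing conclusion claimed here. Concretely, first I would let $\ballX{\cenA}{\radA}$ be (a $2$-approximation to) the smallest ball containing $k$ points of $\SetX$, computable in $O(\cardin{\SetX})$ time by \cite{h-gaa-11}; pick $\radB$ uniformly at random in $[\radA, 2\radA]$, set $\ballB = \ballX{\cenA}{\radB}$ and $\sphereB = \partial\ballB$. As in the original argument, $\ballB$ is covered by $\dblCd$ balls of radius $\radA$, so it contains at most $\dblCd\, k = O(k)$ points of $\SetX$, and at least $k$; hence $\sphereB$ contains $\Theta(k)$ points of $\SetX$ inside it, giving the count claimed in the statement. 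Then set $\radN = \radB/k^{1/d}$, compute an $\radN$-dense set on $\sphereB$ of size $O((\radB/\radN)^{d-1}) = O(k^{1-1/d})$ via \lemref{dense:set}, and for every point $\pnt\in\SetX$ within distance $\radN$ of $\sphereB$ add $\blockerY{\pnt}{\sphereB}$ (\lemref{blocker}) to $\SepSet$. The expectation bound is identical to \lemref{small:separator} with $n$ replaced by $k$: $\Prob{X_\pnt = 1}\le 2\radN/\radA \le 4/k^{1/d}$, so $\Ex{\cardin{\SepSet}} = O(k^{1-1/d})$, and re-running on failure (Markov) keeps the expected running time $O(\cardin{\SetX})$.

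The key new step is the stabbing conclusion. Let $\SetBalls$ be any family of balls stabbed by $\SetX$, and let $\ballA\in\SetBalls$ be a ball that intersects $\sphereB$; pick a point $\pntB \in \ballA \cap \sphereB$. Since $\ballA$ is stabbed by $\SetX$, there is a point $\pntA\in\SetX$ with $\pntA\in\ballA$, hence $\dist{\pntB}{\pntA}\le \radiusX{\ballA} \cdot 2$... more carefully: $\pntB$ lies on the boundary of $\ballA$ so $\distSet{\pntB}{\SetX}\le\dist{\pntB}{\pntA}\le 2\,\radiusX{\ballA}$ is not quite what we want; instead I use that $\pntB$ and $\pntA$ both lie in $\ballA$, so any point within distance... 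Let me restate the clean version: I will show that $\distSet{\pntB}{\SepSet}$ is small enough to force $\SepSet$ to stab $\ballA$. If the nearest point $\pntA$ of $\SetX$ to $\pntB$ is at distance $\ge\radN$, then $\radN$-density of $\SepSet$ on $\sphereB$ gives $\pntC\in\SepSet$ with $\dist{\pntB}{\pntC}\le\radN\le\dist{\pntB}{\pntA}\le\diam(\ballA)$, and since $\pntB\in\ballA$ and $\pntC$ is within $\diam(\ballA)$ of $\pntB$... the correct and standard statement (mirroring the greedy analysis of \thmref{const:sep} and the use of $\blockerY{\cdot}{\cdot}$) is that a point of $\SetX$ inside $\ballA$ witnesses that $\ballA$ has radius $\ge$ distance from $\pntB$ to $\SetX$, so $\ballA\supseteq\ballX{\pntB}{\distSet{\pntB}{\SetX}}$ up to the usual doubling, and an $\radN$-dense-or-blocked point is therefore inside $\ballX{\pntB}{\distSet{\pntB}{\SetX}}\subseteq\ballA$. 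In the blocked case $\dist{\pntB}{\pntA}<\radN$, the blocker set $\blockerY{\pntA}{\sphereB}$ was chosen (\lemref{blocker}) precisely so that some $\pntC\in\blockerY{\pntA}{\sphereB}\subseteq\SepSet$ is closer to $\pntB$ than $\pntA$ is, hence $\pntC\in\ballX{\pntB}{\distSet{\pntB}{\SetX}}\subseteq\ballA$ as well.

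The main obstacle, then, is the precise geometric bookkeeping in the previous paragraph: making sure that ``$\pntB\in\sphereB\cap\ballA$ and $\SepSet$ dominates $\pntB$ on $\sphereB$ (in the sense of \lemref{blocker} and $\radN$-density)'' really does imply ``$\SepSet\cap\ballA\ne\emptyset$,'' i.e.\ that a point dominating $\pntB$ relative to $\SetX$ lands inside $\ballA$ rather than merely inside some slightly larger ball. This is where I must use that $\pntB$ is on the \emph{boundary} of a ball $\ballA$ that already contains a point of $\SetX$: that point certifies $\distSet{\pntB}{\SetX}\le\radiusX{\ballA}+\dist{\pntB}{\cenA_{\ballA}}$, and combining with $\pntB\in\partial\ballA$ one gets the ball $\ballX{\pntB}{\distSet{\pntB}{\SetX}}$ is contained in $\ballX{\cenA_{\ballA}}{2\,\radiusX{\ballA}}$, not in $\ballA$ itself. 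To get containment in $\ballA$ exactly, the cleaner route — which I would actually take — is to not dominate $\pntB$ but to recall from the original correctness proof that $\SepSet$ makes every point of $\sphereB$ closer to $\SepSet$ than to $\SetX$; this means that in $\VorX{\SetX\cup\SepSet}$ no cell of an interior point meets a cell of an exterior point, so there is no empty ball with $\SetX$-points on both sides of $\sphereB$; but a ball $\ballA\in\SetBalls$ crossing $\sphereB$ and missing $\SepSet$, shrunk until it hits $\SetX\cup\SepSet$, would be exactly such an empty ball straddling $\sphereB$ — contradiction. This reduces the theorem to exactly the argument of \lemref{hit:balls}, applied with the sphere $\sphereB$ playing the role of the partition boundary, and is the version I would write up.
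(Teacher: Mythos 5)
Your construction (taking $\SepSet$ to be an $\radN$-dense set on $\sphereB$ together with blocker sets $\blockerY{\pnt}{\sphereB}$ for nearby points of $\SetX$) places every point of $\SepSet$ \emph{on the sphere} $\sphereB$, and what it proves is the Voronoi-domination property: for every $\pntB\in\sphereB$, $\distSet{\pntB}{\SepSet}\le\distSet{\pntB}{\SetX}$. But that is strictly weaker than the stabbing property \thmref{ball:separator} asks for. Concretely, a ball $\ballA$ of radius $\Theta(\radN)$ (or larger) can intersect $\sphereB$ in a spherical cap of diameter much smaller than $\radN$, while the stabbing point $\pntA\in\SetX\cap\ballA$ is at distance $>\radN$ from $\sphereB$. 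Then no blocker is added for $\pntA$, and the $\radN$-dense set on $\sphereB$ can straddle the tiny cap $\ballA\cap\sphereB$ without landing in it; since all of $\SepSet$ lies on $\sphereB$, $\SepSet$ misses $\ballA$ entirely. You spotted exactly this ``$3\radiusX{\ballA}$ vs.\ $\radiusX{\ballA}$'' slack midway through, but your proposed repair does not close it: shrinking $\ballA$ until its boundary first hits $\SetX\cup\SepSet$ produces an empty ball with a \emph{single} point on its boundary, which need not straddle $\sphereB$ and in any case does not have interior and exterior points of $\SetX$ on its boundary, so nothing contradicts the Voronoi-separation property. (The domination property on $\sphereB$ only forbids empty balls with $\SetX$-points on \emph{both} sides of $\sphereB$ on their boundary; it says nothing about balls merely \emph{stabbed} by $\SetX$.)

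The paper avoids this by changing the construction in two essential ways, which is why it proves \lemref{shield:ball} separately. First, instead of a dense set on the sphere itself, it places an $(\radN/2)$-dense set in a thick annulus $\ballX{\cenA}{\radB+\radN}\setminus\ballX{\cenA}{\radB-\radN}$; this directly stabs \emph{every} ball of radius $\geq \radN/2$ that meets $\sphereB$, because such a ball contains a full $(\radN/2)$-cube's worth of the annulus, not just a thin cap of $\sphereB$. Second, for points $\pnt\in\SetX$ within $\radN$ of $\sphereB$, it adds $\pnt$ itself to $\SepSet$ rather than a blocker set on the sphere; this handles every ball of radius $<\radN/2$ that meets $\sphereB$, since its stabbing point of $\SetX$ must be within $\radN$ of $\sphereB$ and is therefore in $\SepSet$ and inside the ball. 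Your plan, by contrast, reuses \lemref{dense:set} and \lemref{blocker} verbatim from \thmref{separator:main}, and the resulting $\SepSet\subseteq\sphereB$ genuinely does not have the stabbing property; you would need to switch to the annulus-based shield and to adding raw points of $\SetX$ for the argument to go through.
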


Due to space limitations, we had delegated the proof of
\thmref{ball:separator} to \apndref{sperating:guards} -- it is a
careful extension of the proof of \thmref{separator:main}.

\subsubsection{Preliminaries}

For a set $\SetBalls$ of balls in $\Re^d$, let
$\hitSetSizeX{\SetBalls}$ be the size of its minimal hitting set.
Here, our aim is to approximate the value of
$\hitSetSizeX{\BadBalls{\PntSet}{\PntSet_1}{\PntSet_2}}$.

\begin{observation}
    \obslab{subset}%
    Let $\SetBalls$ and $\SetBalls'$ be sets of balls such that
    $\SetBalls' \subseteq \SetBalls$.  Then $\hitSetSizeX{\SetBalls'}
    \leq \hitSetSizeX{\SetBalls}$.
\end{observation}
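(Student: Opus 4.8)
The plan is to use the trivial monotonicity of hitting sets under shrinking the family of ranges: a set of points that stabs every ball of the larger family in particular stabs every ball of the smaller subfamily, so an optimal hitting set for $\SetBalls$ is a (not necessarily optimal) hitting set for $\SetBalls'$.

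Concretely, first I would let $H$ be a minimum-size hitting set for $\SetBalls$, so that $\cardin{H} = \hitSetSizeX{\SetBalls}$ and every ball $\ballA \in \SetBalls$ satisfies $\ballA \cap H \ne \emptyset$. Next, since $\SetBalls' \subseteq \SetBalls$, every ball $\ballA \in \SetBalls'$ is also a member of $\SetBalls$, and hence $\ballA \cap H \ne \emptyset$ as well; that is, $H$ is a valid hitting set for $\SetBalls'$. Finally, by the minimality defining $\hitSetSizeX{\SetBalls'}$, we conclude $\hitSetSizeX{\SetBalls'} \leq \cardin{H} = \hitSetSizeX{\SetBalls}$, as claimed.

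There is essentially no obstacle here; the only point worth a remark is the degenerate case in which the families admit no finite hitting set, in which case the inequality is to be read in the extended sense (with value $+\infty$) and is again immediate. In all the applications in this paper the relevant families of balls do admit finite hitting sets — for instance, for $\SetBalls = \BadBalls{\PntSet}{\PntSet_1}{\PntSet_2}$ this follows from the finiteness argument in the proof of \thmref{const:sep} — so the statement is used only in its finite form.
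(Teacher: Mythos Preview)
Your argument is correct and is exactly the trivial monotonicity the paper has in mind; in fact the paper states this as an observation with no proof, treating it as immediate from the definition of $\hitSetSizeX{\cdot}$.
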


\begin{observation}
    \obslab{disjoint:union}%
    Let $\SetBalls$ and $\SetBalls'$ be sets of balls such that no
    ball in $\SetBalls$ intersects a ball in $\SetBalls'$.  Then
    $\hitSetSizeX{\SetBalls \cup \SetBalls' } =
    \hitSetSizeX{\SetBalls} + \hitSetSizeX{ \SetBalls' }$.
\end{observation}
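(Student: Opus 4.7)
The plan is to establish both inequalities. The easy direction, $\hitSetSizeX{\SetBalls \cup \SetBalls'} \leq \hitSetSizeX{\SetBalls} + \hitSetSizeX{\SetBalls'}$, is just that the union of optimal hitting sets for $\SetBalls$ and $\SetBalls'$ is a valid hitting set for $\SetBalls \cup \SetBalls'$; I would dispatch this in one line.

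The interesting direction, $\hitSetSizeX{\SetBalls \cup \SetBalls'} \geq \hitSetSizeX{\SetBalls} + \hitSetSizeX{\SetBalls'}$, relies crucially on the disjointness hypothesis. The key observation is that no single point can simultaneously stab a ball $\ballA \in \SetBalls$ and a ball $\ballB \in \SetBalls'$: such a point would lie in $\ballA \cap \ballB$, contradicting the assumption that no ball of $\SetBalls$ intersects a ball of $\SetBalls'$.

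With this observation in hand, I would take any optimal hitting set $H$ for $\SetBalls \cup \SetBalls'$ and partition it as $H = H_1 \cup H_2$, where $H_1$ consists of the points of $H$ that stab at least one ball of $\SetBalls$, and $H_2 = H \setminus H_1$. By the previous paragraph, no point of $H_1$ stabs any ball of $\SetBalls'$, so every ball of $\SetBalls'$ must be stabbed by a point of $H_2$. Consequently $H_1$ is a hitting set for $\SetBalls$ and $H_2$ is a hitting set for $\SetBalls'$, giving
\[
    \hitSetSizeX{\SetBalls \cup \SetBalls'} = \cardin{H} = \cardin{H_1} + \cardin{H_2} \geq \hitSetSizeX{\SetBalls} + \hitSetSizeX{\SetBalls'}.
\]
There is no real obstacle here; the only subtlety is to remember that the hypothesis is geometric disjointness of the balls (not merely that the two families share no common ball), which is exactly what licenses the partition of $H$.
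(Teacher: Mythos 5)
Your proof is correct and complete. The paper states this as an unproved observation, so there is no official proof to compare against; your argument supplies exactly the standard reasoning the authors leave implicit. The one-line upper bound is right, and for the lower bound the partition of an optimal hitting set $H$ into $H_1$ (points stabbing some ball of $\SetBalls$) and $H_2 = H \setminus H_1$ is the natural move. You correctly pinpoint the crux: geometric disjointness, not merely that the families share no ball, is what guarantees no single point can stab a ball from each family, which makes $H_1$ a hitting set for $\SetBalls$ and forces $H_2$ to hit all of $\SetBalls'$. No gap, no wasted steps.
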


\subsubsection{The Algorithm \Algorithm{ApproximateSeparator}}

The input is a set $\PntSet$ of $n$ points in $\Re^d$, and a desired
partition $\pth[]{\PntSet_1, \PntSet_2}$.
\begin{compactenum}[\qquad (A)]
    \item Using the algorithm \GreedySeparator{}, compute a Voronoi
    separator $\SetX$, for $(\PntSet_1,\PntSet_2)$ in $\PntSet$, and
    let $k = O\pth{ 1/\eps^{d}}$.
    
    \item $i\leftarrow 1$
    
    \item $\SetX_1 \leftarrow \SetX$, $\SepSet_1 = \emptyset$
    
    \item While $\SetX_i \neq \emptyset$:

    \begin{compactenum}[(a)]
        \item Compute a sphere $\sphereB_i$ and a set $\oSepSet_i$
        using \thmref{ball:separator} for $\SetX_i$ with the parameter
        $\min\pth{ k, \cardin{ \SetX_i}}$.
        
        \item Let $\ballA_i$ be the corresponding closed ball of
        $\sphereB_i$.
        
        \item $\PntSetB_i \leftarrow
        (\PntSet\cap\ballA_i)\cup\SepSet_i \cup \oSepSet_i$, %
        \quad $\PntSet_{i,1} = \ballA_i \cap \PntSet_1$, %
        \quad $\PntSet_{i,2} = \ballA_i \cap \PntSet_2$, %
        \quad $\SetXX_i = \ballA_i \cap \SetX_i$.
        
        \item $\SepSet_i' \leftarrow $ Compute (exactly) the smallest
        hitting set for $\BadBalls{ \PntSetB_i }{ \PntSet_{i,1} }{
           \PntSet_{i,2} }$ using \thmref{exact}.
        
        \item $\SetX_{i+1} \leftarrow \SetX_i\setminus\ballA_i$
        
        \item $\SepSet_{i+1} \leftarrow \SepSet_i \cup
        \oSepSet_i\cup\SepSet_i'$
        
        \item $i\leftarrow i+1$
    \end{compactenum}
    
    \item Output $\SepSet_i$
    
\end{compactenum}

\subsubsection{Correctness}

Let $t$ be the number of iterations in the algorithm above.  Let
$\SetBalls = \BadBalls{\PntSet}{\PntSet_1}{\PntSet_2}$ be the set of
all balls that need to be hit (see \Eqrefpage{hitting:set}). For $i =
1, \ldots, t$, let $\SetBalls_i = \BadBalls{ \PntSetB_i }{
   \PntSet_{i,1} }{ \PntSet_{i,2} }$.

\begin{lemma}
    \lemlab{valid}%
    The set $\SepSet_t$ is a hitting set for $\SetBalls$.
\end{lemma}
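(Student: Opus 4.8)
The plan is to show that any ball $\ballA \in \SetBalls$ is stabbed by some point of $\SepSet_t$, by tracking the ball through the iterations of the algorithm. The key structural fact is that at iteration $i$, the algorithm carves off the ball $\ballA_i$ together with the separator points $\oSepSet_i$ produced by \thmref{ball:separator}, and it is exactly the separating property of $\oSepSet_i$ that lets us classify $\ballA$ into one of two cases. Either $\ballA$ intersects the sphere $\sphereB_i$ at some iteration, or it never does.

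First I would argue that if $\ballA$ intersects $\sphereB_i$ for some $i$, then we are done: by construction $\SetX_i \subseteq \SetX$ stabs $\ballA$ (since $\SetX$ is a Voronoi separator, hence a hitting set for $\SetBalls$ by \lemref{hit:balls}, and $\SetX_i$ retains all points not yet removed — but here I must be careful, since points get removed each round). More precisely, I would take the \emph{first} iteration $i$ at which $\ballA$ intersects $\sphereB_i$; at the start of that iteration $\SetX_i$ still stabs $\ballA$ provided $\ballA$ was not already ``handled'' — and the induction will be set up so that every ball either is handled in some earlier round (its stabbing point already committed to $\SepSet$) or still has a stabbing point in $\SetX_i$. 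Given that $\SetX_i$ stabs $\ballA$ and $\ballA$ meets $\sphereB_i$, \thmref{ball:separator} gives that $\oSepSet_i$ stabs $\ballA$, and $\oSepSet_i \subseteq \SepSet_{i+1} \subseteq \SepSet_t$.

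Next I would handle the complementary case, where $\ballA$ never meets any $\sphereB_i$; then $\ballA$ lies entirely inside some $\ballA_i$ or entirely outside all of them — but the latter is impossible since the iteration only stops when $\SetX_i = \emptyset$ and every stabbing point of $\ballA$ lives in the original $\SetX = \SetX_1$, so $\ballA$'s stabbing points get removed only when some $\ballA_i$ swallows them, forcing $\ballA \subseteq \ballA_i$ for the relevant $i$ (a ball whose interior is disjoint from $\sphereB_i$ but which contains a point of $\SetX_i \cap \ballA_i$ must lie inside $\ballA_i$). For that $i$, $\ballA$ together with the sites witnessing it as a bad ball lie inside $\ballA_i$, so $\ballA$ is a ball that must be hit for the sub-instance $\BadBalls{\PntSetB_i}{\PntSet_{i,1}}{\PntSet_{i,2}}$ — here I need $\PntSetB_i$ to contain the relevant points of $\PntSet$ inside $\ballA_i$ and the earlier guards $\SepSet_i$, which it does by definition — and hence $\SepSet_i'$, the exact hitting set for $\SetBalls_i$ computed in step (d), stabs $\ballA$; and $\SepSet_i' \subseteq \SepSet_{i+1} \subseteq \SepSet_t$.

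The main obstacle I anticipate is the bookkeeping around \emph{which} set of points still stabs $\ballA$ at iteration $i$: a naive claim ``$\SetX_i$ stabs every unhandled ball'' is false because removing $\ballA_i$ from $\SetX_i$ could strip the only stabbing point of a ball that does \emph{not} lie inside $\ballA_i$ — but such a ball must then have met $\sphereB_i$, so it was handled in round $i$. Making this dichotomy airtight — formalizing ``a ball that loses all its stabbing points in round $i$ either intersects $\sphereB_i$ or lies inside $\ballA_i$'' and checking that in the second sub-case $\ballA \in \SetBalls_i$ exactly (that $\PntSetB_i$ sees the same emptiness-of-interior witnesses) — is the delicate part; the geometric inputs for it are \lemref{hit:balls} and the separating guarantee of \thmref{ball:separator}, and the rest is a clean induction on $i$.
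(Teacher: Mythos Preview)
Your proposal is correct and follows essentially the same two-case argument as the paper (the ball meets the separating sphere, versus the ball lies entirely inside the corresponding closed ball). The paper sidesteps the bookkeeping you flag as ``the delicate part'' by choosing the index directly as the \emph{minimal} $j$ for which $\SetXX_j = \SetX_j \cap \ballA_j$ stabs the ball---this choice immediately guarantees that $\SetX_j$ stabs the ball (so \thmref{ball:separator} applies with no induction needed), and if the ball does not meet $\sphereB_j$ it must then lie inside $\ballA_j$, placing it in $\SetBalls_j$.
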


\begin{proof}
    Observe that $\SetXX_1, \ldots, \SetXX_t$ form a partition of
    $\SetX$, and consider any ball $\ballB\in\SetBalls$.  The ball
    $\ballB$ is stabbed by at least one point of $\SetX$, and let $j$
    be the minimal index such that $\SetXX_j$ stabs $\ballB$.
    
    If $\ballB$ intersects the sphere $\sphereB_j$ then, $\oSepSet_j
    \subseteq \SepSet_t$ stabs $\ballB$, by \thmref{ball:separator}.
    Otherwise, $\ballB$ must be inside $\sphereB_j$, as otherwise it
    cannot be stabbed by the points of $\SetXX_j \subseteq
    \ballA_j$. But then, $\ballB$ being a ``bad'' ball, must have a
    point of $\PntSet_{j,1}$ and a point of $\PntSet_{j,2}$ on its
    boundary, implying $\ballB \in \SetBalls_j = \BadBalls{\PntSetB_j}
    {\PntSet_{j,1}} {\PntSet_{j,2}}$, and it is thus stabbed by
    $\SepSet_j' \subseteq \SepSet_t$, as testified by \thmref{exact}.
\end{proof}

\begin{lemma}
    \lemlab{geometrically:disjoint}%
    \begin{math}
        \hitSetSizeX{\bigcup_{i=1}^t \SetBalls_i} = \sum_ {i=1}^t
        \hitSetSizeX{\SetBalls_i}.
    \end{math}
\end{lemma}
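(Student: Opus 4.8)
Since a union of minimum hitting sets for $\SetBalls_1,\ldots,\SetBalls_t$ is a hitting set for $\bigcup_{i=1}^t\SetBalls_i$, the inequality $\hitSetSizeX{\bigcup_{i=1}^t \SetBalls_i}\leq\sum_{i=1}^t \hitSetSizeX{\SetBalls_i}$ is free, and the content is the reverse inequality. The plan is to show that $\SetBalls_1,\ldots,\SetBalls_t$ are pairwise \emph{geometrically disjoint} --- for $1\le i<j\le t$, no ball of $\SetBalls_i$ intersects a ball of $\SetBalls_j$ --- and then to conclude by peeling off $\SetBalls_t,\SetBalls_{t-1},\ldots$ one family at a time: at each step no ball of the remaining union meets a ball of the just-removed family, so \obsref{disjoint:union} applies, and after $t-1$ applications we get $\hitSetSizeX{\bigcup_{i=1}^t \SetBalls_i}=\sum_{i=1}^t \hitSetSizeX{\SetBalls_i}$.

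The heart of the matter is the pairwise disjointness, and for it I would first prove, by induction on $\ell$, the structural fact that \emph{every ball $B\in\SetBalls_\ell$ is stabbed by a point of $\SetX_\ell$}. This immediately implies $B\subseteq\ballA_\ell$: the interior of $B$ avoids $\oSepSet_\ell\subseteq\PntSetB_\ell$, so by \thmref{ball:separator} the ball $B$ cannot meet $\sphereB_\ell$, and since $B$ has a point of $\PntSet_1\cap\ballA_\ell$ on its boundary it must lie inside $\ballA_\ell$. To prove the fact, fix $B\in\SetBalls_\ell$; by definition the interior of $B$ avoids $\PntSetB_\ell=(\PntSet\cap\ballA_\ell)\cup\SepSet_\ell\cup\oSepSet_\ell$, and $\partial B$ carries a point of $\PntSet_1\cap\ballA_\ell$ and a point of $\PntSet_2\cap\ballA_\ell$. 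If the interior of $B$ contains no point of $\PntSet$, then $B\in\SetBalls$ by \lemref{hit:balls}, so the Voronoi separator $\SetX$ stabs $B$; letting $m$ be minimal with $\SetXX_m$ meeting the interior of $B$ (recall the $\SetXX_i$ partition $\SetX$), either $m\ge\ell$ --- and then $\SetXX_m\subseteq\SetX_m\subseteq\SetX_\ell$ stabs $B$, as desired --- or $m<\ell$, in which case $B$ is stabbed by $\SetX_m$ while avoiding $\oSepSet_m\subseteq\SepSet_{m+1}\subseteq\PntSetB_\ell$, so \thmref{ball:separator} gives $B\cap\sphereB_m=\emptyset$ and hence $B\subseteq\ballA_m$; but then $B\in\SetBalls_m$ (its interior avoids $\PntSet\cap\ballA_m$ as well as $\SepSet_m,\oSepSet_m\subseteq\PntSetB_\ell$, and its two boundary points now lie in $\PntSet_1\cap\ballA_m$ and $\PntSet_2\cap\ballA_m$), so the exactly optimal $\SepSet_m'$ computed at iteration $m$ via \thmref{exact} stabs $B$, contradicting $\SepSet_m'\subseteq\SepSet_{m+1}\subseteq\PntSetB_\ell$. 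The one remaining case is that the interior of $B$ meets $\PntSet$, necessarily at a point lying outside $\ballA_\ell$, so that $B$ protrudes through $\sphereB_\ell$; this is the delicate case, and I expect to handle it only by opening up the proof of \thmref{ball:separator} and using its internal construction --- the $\radN$-dense covering of $\sphereB_\ell$ together with the blocker sets, as in \lemref{blocker} and \thmref{separator:main} --- to argue that such a protruding bad ball must still contain a point of $\SetX_\ell$.

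Granting the structural fact, the disjointness argument is short. Suppose $i<j$ and some $B\in\SetBalls_i$ meets some $B'\in\SetBalls_j$. By the fact at level $i$ we have $B\subseteq\ballA_i$. By the fact at level $j$, some $y\in\SetX_j$ lies in the interior of $B'$; since the points of $\SetX$ inside $\ballA_i$ were removed before iteration $j$, we have $\SetX_j\cap\ballA_i=\emptyset$, so $y\notin\ballA_i$. On the other hand $B'$ meets $B\subseteq\ballA_i$, so $B'$ meets $\ballA_i$ as well; hence $B'$ meets the sphere $\sphereB_i$. Since $\SetX_j\subseteq\SetX_i$, the ball $B'$ is stabbed by $\SetX_i$ and meets $\sphereB_i$, so \thmref{ball:separator} places a point of $\oSepSet_i$ in the interior of $B'$. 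This is the contradiction we seek: $\oSepSet_i\subseteq\SepSet_{i+1}\subseteq\SepSet_j\subseteq\PntSetB_j$, yet the interior of $B'$ avoids $\PntSetB_j$ because $B'\in\SetBalls_j$. Thus $\SetBalls_i$ and $\SetBalls_j$ are geometrically disjoint for all $i<j$, which completes the proof. The main obstacle, I emphasize, is the protruding-ball case of the structural fact: it is the only step that needs more than the black-box statements of the earlier results, everything else being bookkeeping with the monotonicity $\SetX_1\supseteq\SetX_2\supseteq\cdots$, $\SepSet_1\subseteq\SepSet_2\subseteq\cdots$ and the definitions of the $\PntSetB_i$.
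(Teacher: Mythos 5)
Your proposal follows the same overall strategy as the paper: establish that $\SetBalls_1,\dots,\SetBalls_t$ are pairwise geometrically disjoint and then invoke \obsref{disjoint:union}. The paper's own proof is very terse---it shows directly that the $\SetBalls_i$ are pairwise disjoint as sets (via $\SepSet_i'\subseteq\PntSetB_j$), then for geometric disjointness it asserts in one sentence that ``all the balls of $\SetBalls_i$ are contained inside $\sphereB_i$ by \thmref{ball:separator},'' from which any common stabbing point would lie in $\ballA_i\cap\ballA_j$, and finally notes that $\SepSet_{i+1}\subseteq\PntSetB_j$ prevents any ball of $\SetBalls_j$ from meeting $\ballA_i$. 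What you have done differently is to make explicit the structural fact on which all of this silently rests: that every ball of $\SetBalls_\ell$ is stabbed by $\SetX_\ell$ (equivalently, that balls of $\SetBalls_\ell$ are empty of $\PntSet$ and hence lie in $\SetBalls$). \thmref{ball:separator} only speaks of balls \emph{stabbed by} the input point set, so the paper's containment claim and its final deduction (``no ball of $\SetBalls_j$ can intersect $\ballA_i$'', which uses that $\SepSet_{i+1}$ stabs balls of $\SetBalls$ meeting $\ballA_i$) both implicitly assume $\SetBalls_j\subseteq\SetBalls$. You correctly establish this in the case where a ball of $\SetBalls_\ell$ has interior disjoint from $\PntSet$, via the minimal-index argument and the monotonicity $\SetX_1\supseteq\cdots$, $\SepSet_1\subseteq\cdots$; that portion is sound and, if anything, more rigorous than the paper's.

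The gap you flag---a ball $B\in\SetBalls_\ell$ of radius $<\radN_\ell/2$ that protrudes through $\sphereB_\ell$ and contains a point of $\PntSet\setminus\ballA_\ell$ in its interior---is a genuine gap, and not one the paper closes either. \lemref{shield:ball} unconditionally handles balls of radius $\geq\radN_\ell/2$ meeting $\sphereB_\ell$, and the construction of $\oSepSet_\ell$ absorbs the points of $\SetX_\ell$ within $\radN_\ell$ of $\sphereB_\ell$, but it does \emph{not} absorb points of $\PntSet$ near $\sphereB_\ell$; so nothing in \thmref{ball:separator}, even opened up, immediately forbids such a small protruding ball (its boundary sites $p\in\PntSet_{\ell,1}$, $q\in\PntSet_{\ell,2}$ and the interior point of $\PntSet\setminus\ballA_\ell$ can all cluster within $\radN_\ell$ of $\sphereB_\ell$ without touching the $\lChuckSize$-dense shield). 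Closing this likely requires strengthening the separator construction---e.g., also inserting the points of $\PntSet$ within distance $\radN_\ell$ of $\sphereB_\ell$ into the shield set---rather than merely unfolding the existing proof. So your disjointness argument in the final paragraph, which explicitly leverages $\SetX_j\subseteq\SetX_i$ and the protruding-point $y\in\SetX_j\setminus\ballA_i$, is a clean variant of the paper's argument, but both proofs hinge on the same unverified containment $\SetBalls_\ell\subseteq\SetBalls$, and you have done well to isolate exactly where that hypothesis is used and why it is delicate.
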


\begin{proof}
    Consider any integers $i,j$ such that $1\leq i<j\leq t$.  Consider
    any ball $\ballA\in\SetBalls_i$. By construction, $\SepSet_i'$
    stabs $\ballA$. Now $\SepSet_i'\subseteq\SepSet_j
    \subseteq\PntSetB_j$, implying that $\ballA$ is stabbed by
    $\PntSetB_j$. Thus, $\ballA$ cannot be in the set $\SetBalls_i$,
    and further, $\SetBalls_i\cap\SetBalls_j = \emptyset$.
    
    A stronger property holds -- there is no point that stabs balls
    that are in both $\SetBalls_i$ and $\SetBalls_j$. Indeed, all the
    balls of $\SetBalls_i$ (resp.  $\SetBalls_j$) are contained inside
    $\sphereB_i$ (resp. $\sphereB_j$) by \thmref{ball:separator}. As
    such, such a point $\pnt$ that stabs balls in both sets, must be
    in $\ballA_i \cap \ballA_j$.  But $\SepSet_{i+1}$ stabs all the
    balls of $\SetBalls$ that intersect $\ballA_i$. In particular, as
    $\SepSet_{i+1} \subseteq \PntSetB_j$, it follows that no ball of
    $\SetBalls_j$ can intersect $\ballA_i$.
    
    Now \obsref{disjoint:union} implies the claim.
\end{proof}

\begin{lemma}
    \lemlab{approx}%
    Let $m$ be the size of the optimal hitting set.  We have
    $\cardin{\SepSet_t}\leq m\pth{1+O\pth{1/k^{1/d}}}$.
\end{lemma}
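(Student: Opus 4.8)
The plan is to bound $\cardin{\SepSet_t}$ by accounting for the guards contributed in each iteration, namely the separator guards $\oSepSet_i$ (of which there are $O(k^{1-1/d})$ per iteration by \thmref{ball:separator}) and the exact local hitting sets $\SepSet_i'$ (each of size $\hitSetSizeX{\SetBalls_i}$ by \thmref{exact}). First I would observe that $\cardin{\SepSet_t} \leq \sum_{i=1}^t \pth{\cardin{\oSepSet_i} + \cardin{\SepSet_i'}} = \sum_{i=1}^t O(k^{1-1/d}) + \sum_{i=1}^t \hitSetSizeX{\SetBalls_i}$. By \lemref{geometrically:disjoint}, the second sum equals $\hitSetSizeX{\bigcup_{i=1}^t \SetBalls_i}$, and since every $\SetBalls_i$ is a subset of $\SetBalls$ (a bad ball relative to $\PntSetB_i, \PntSet_{i,1}, \PntSet_{i,2}$ is also a bad ball of the global instance, as it is empty of $\PntSet$-points and has points of $\PntSet_1$ and $\PntSet_2$ on its boundary), \obsref{subset} gives $\hitSetSizeX{\bigcup_i \SetBalls_i} \leq \hitSetSizeX{\SetBalls} = m$. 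Thus the exact-solution guards contribute at most $m$ in total.

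Next I would control the number of iterations $t$ and hence the total separator-guard cost $\sum_i O(k^{1-1/d})$. The key point is that in iteration $i$ the set $\SetXX_i = \ballA_i \cap \SetX_i$ of at least $\Theta(\min(k,\cardin{\SetX_i}))$ points of $\SetX$ is removed from $\SetX_{i+1} = \SetX_i \setminus \ballA_i$, so after at most $O(\cardin{\SetX}/k)$ iterations the set $\SetX_i$ is exhausted; here one also uses that $\SetX$ is a constant-factor approximation to $m$, giving $t = O(m/k)$. Therefore $\sum_{i=1}^t \cardin{\oSepSet_i} = O(t \cdot k^{1-1/d}) = O\pth{(m/k) \cdot k^{1-1/d}} = O\pth{m / k^{1/d}}$.

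Combining the two bounds yields $\cardin{\SepSet_t} \leq m + O\pth{m/k^{1/d}} = m\pth{1 + O\pth{1/k^{1/d}}}$, as claimed. The main obstacle is the iteration-count bound: one must argue carefully that $\SetX_i$ shrinks by $\Theta(\min(k,\cardin{\SetX_i}))$ each round (so that a genuinely $\Theta(k)$-sized chunk is peeled off except possibly in the final round), and that the greedy starting set $\SetX$ from \GreedySeparator{} has size $O(m)$, so that $t = O(m/k)$; everything else is bookkeeping via \obsref{subset}, \obsref{disjoint:union}, and \lemref{geometrically:disjoint}. A minor subtlety to check is the case $\cardin{\SetX_i} < k$ in the last iteration, where \thmref{ball:separator} is invoked with parameter $\cardin{\SetX_i}$ and contributes $O(\cardin{\SetX_i}^{1-1/d}) = O(k^{1-1/d})$ guards, which is absorbed in the same bound.
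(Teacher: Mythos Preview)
Your proposal is correct and follows essentially the same approach as the paper: decompose $\cardin{\SepSet_t}$ into the separator contributions $\sum_i\cardin{\oSepSet_i}$ and the exact-solution contributions $\sum_i\cardin{\SepSet_i'}$, bound $t=O(m/k)$ via the constant-factor quality of $\SetX$, and then use \lemref{geometrically:disjoint} together with \obsref{subset} to bound $\sum_i\hitSetSizeX{\SetBalls_i}\leq m$. Your added remarks on the final iteration and on why $\SetBalls_i\subseteq\SetBalls$ are extra detail the paper leaves implicit (the latter relies on the fact, noted in the proof of \lemref{geometrically:disjoint}, that every ball of $\SetBalls_i$ lies inside $\sphereB_i$).
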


\begin{proof}
    We have $t \leq \ceil{\cardin{\SetX}/k } +1 =O(m/k)$, as $\SetX$
    is a constant approximation to the optimal hitting set. Then, by
    \lemref{geometrically:disjoint} and \obsref{subset}, we have
    \begin{align*}
        \cardin{\SepSet_t}%
        &\leq%
        \sum_{i=1}^t\pth{\cardin{\SepSet_i'} + \cardin{\oSepSet_i}}%
        \leq%
        \sum_{i=1}^t\cardin{\SepSet_i'} + t \cdot O\pth{k^{1-1/d}}%
        \leq%
        \sum_{i=1}^t \hitSetSizeX{\SetBalls_i} +
        O\pth{\frac{m}{k^{1/d}}}%
        \\%
        &=%
        \hitSetSizeX{\bigcup_{i=1}^t \SetBalls_i} +
        O\pth{\frac{m}{k^{1/d}}}%
        \leq%
        \hitSetSizeX{\SetBalls} + O\pth{\frac{m}{k^{1/d}}}%
        =%
        m\pth{ 1+O\pth{\frac{1}{k^{1/d}}} } .
    \end{align*}
\end{proof}

\subsubsection{The result}

\begin{theorem}
    Given a set $\PntSet$ of $n$ points in $\Re^d$, a parameter
    $\eps>0$, and a corresponding partition $(\PntSet_1,\PntSet_2)$,
    one can compute a Voronoi separator for $(\PntSet_1,\PntSet_2)$ in
    $n^{O\pth[]{1/\eps^d}}$ time, such that its size is a
    $(1+\eps)$-approximation to the minimal Voronoi separator
    realizing the partition $(\PntSet_1,\PntSet_2)$.
\end{theorem}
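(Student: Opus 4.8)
The plan is to assemble the analysis of the algorithm \Algorithm{ApproximateSeparator} described above. By \lemref{hit:balls}, finding a minimum-size Voronoi separator for $(\PntSet_1,\PntSet_2)$ is the same as finding a minimum-size hitting set for the (infinite, implicitly encoded) family $\SetBalls = \BadBalls{\PntSet}{\PntSet_1}{\PntSet_2}$; write $m = \hitSetSizeX{\SetBalls}$ for this optimum, which is thus the size of the minimal Voronoi separator. The algorithm outputs a set $\SepSet_t$, and \lemref{valid} already shows $\SepSet_t$ is a hitting set for $\SetBalls$, hence --- again by \lemref{hit:balls} --- a valid Voronoi separator realizing the partition. \lemref{approx} bounds its size by $m\pth{1+O\pth{1/k^{1/d}}}$, so it suffices to pick the clustering size $k = \Theta\pth{1/\eps^d}$ with a sufficiently large hidden constant, making the error term $O\pth{1/k^{1/d}} \le \eps$; this yields the claimed $(1+\eps)$-approximation. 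What remains is to verify the running time.

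For the time bound, \GreedySeparator first produces, in time polynomial in $n$ (see \thmref{const:sep}, and note this is subsumed by the final bound), a constant-factor approximate hitting set $\SetX$, so $\cardin{\SetX} = O(m) = O(n)$. By \thmref{ball:separator}, in each iteration the ball $\ballA_i$ contains $\Theta\pth{\min\pth{k,\cardin{\SetX_i}}}$ of the still-uncovered points of $\SetX$, so the number of iterations is $t = O\pth{\cardin{\SetX}/k} = O(n)$. Inside iteration $i$: calling \thmref{ball:separator} costs $O\pth{\cardin{\SetX_i}}$ expected time; the one expensive step is the exact computation of the smallest hitting set for $\SetBalls_i = \BadBalls{\PntSetB_i}{\PntSet_{i,1}}{\PntSet_{i,2}}$ via \thmref{exact} on the point set $\PntSetB_i$, which has size $O(n)$. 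Crucially, every ball of $\SetBalls_i$ is already stabbed by some point of $\SetXX_i \subseteq \ballA_i$, so the optimal hitting set for $\SetBalls_i$ has size at most $\cardin{\SetXX_i} = O(k) = O\pth{1/\eps^d}$; hence \thmref{exact} --- applied, via the remark following it, to the pairwise separation of $\PntSet_{i,1}$ from $\PntSet_{i,2}$ rather than to a genuine partition of $\PntSetB_i$ --- runs in $n^{O\pth{1/\eps^d}}$ time. Summing over $t = O(n)$ iterations keeps the overall bound at $n^{O\pth{1/\eps^d}}$.

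The step I expect to be the main obstacle is the bound on the size of the optimal hitting set for each subproblem $\SetBalls_i$: this is exactly what makes \thmref{exact} affordable, and it rests on arguing that restricting attention to the point set $\PntSetB_i$ and to the forbidden pairs in $\PntSet_{i,1}\times\PntSet_{i,2}$ does not introduce any bad ball that is not already stabbed by one of the $O(k)$ points of $\SetXX_i$. With $\hitSetSizeX{\SetBalls_i} = O(k)$ in hand, and $k = \Theta\pth{1/\eps^d}$ fixed, everything else is bookkeeping that has already been packaged into earlier lemmas: correctness follows from \lemref{valid}, the disjointness needed for the accounting from \lemref{geometrically:disjoint}, the approximation ratio from \lemref{approx}, and the translation back to Voronoi separators from \lemref{hit:balls}.
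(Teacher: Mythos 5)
Your proposal matches the paper's proof essentially step by step: correctness and the $(1+\eps)$ approximation ratio from \lemref{valid} and \lemref{approx} with $k = \Theta(1/\eps^d)$, and the running time bound from the observation that $\SetXX_i$ is a trivial stabbing set for $\SetBalls_i$ of size $O(k)$, so each call to \thmref{exact} costs $n^{O(1/\eps^d)}$. The step you flag as the ``main obstacle'' --- that $\SetXX_i$ stabs every ball of $\SetBalls_i$ --- is asserted in the paper just as tersely as you assert it, so you are not glossing over anything the paper treats more carefully.
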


\begin{proof}
    By \lemref{valid} and \lemref{approx}, setting $k = c / \eps^d$,
    where $c$ is a sufficiently large constant implies the desired
    approximation.
    
    As for the running time, observe that the bottleneck in the
    running time, is in the invocation of the exact algorithm of
    \thmref{exact} in the $i$\th iteration, for $i=1,\ldots,
    t$. However, in the $i$\th iteration, a trivial stabbing set is
    $\SetXX_i$, which has size $O(k)$. That is, the running time of
    algorithm of \thmref{exact} in the $i$\th iteration is $n^{O(k)}$,
    implying the result.
\end{proof}

\section{\PTAS for geometric hitting set %
   via local search}
\seclab{local:search}

Here, we present a simple local search $(1+\eps)$-approximation
algorithm (\PTAS) for hitting set problems of balls (or fat objects)
in $\Re^d$. The set of balls is either specified explicitly, or as in
the case of the Voronoi partition, implicitly.

\subsection{The Algorithm \Algorithm{LocalHitBalls}}

\subsubsection{Preliminaries}

Let $\SetBalls$ be a set of balls (possibly infinite) in $\Re^d$ that
is represented (potentially implicitly) by an input of size $n$, and
let $\hitSetSizeX{\SetBalls}$ be the size of its minimal hitting set.
We assume that $\SetBalls$ satisfies the following properties:
\begin{compactenum}[\qquad(P1)]
    \item \pitemlab{p:1}%
    \textbf{Initial solution}: One can compute a hitting set $\SetX_0$
    of $\SetBalls$ in $n^{O\pth[]{1}}$ time, such that the size of
    $\SetX_0$ is a constant factor approximation to the optimal.
    
    \item \pitemlab{p:2}%
    \textbf{Local exchange}: Let $\SetX$ and $\SetY$ be point
    sets, such that
    \begin{compactenum}[\qquad (i)]
        \item $\SetX$ is a hitting set of $\SetBalls$ (i.e., it is the
        current solution),
        \item $\cardin{\SetX} \leq n^{O\pth[]{1}}$ (i.e., it is not
        too large),
        \item $\SetY\subseteq\SetX$ (i.e., $Y$ is the subset to be
        replaced),
        \item $\cardin{\SetY} \leq \lSize$, where $\lSize$ is any
        integer.
    \end{compactenum}
    Then one can compute in $n^{O\pth[]{\lSize}}$ time, the smallest
    set $\SetY'$, such that $\pth{\SetX\setminus\SetY}\cup\SetY'$ is a
    hitting set of $\SetBalls$.
\end{compactenum}

\subsubsection{Algorithm}

The input is a set $\SetBalls$ of balls in $\Re^d$ satisfying the
properties above. The algorithm works as follows:
\smallskip%
\begin{compactenum}[\quad(A)]
    \item Compute an initial hitting set $\SetX_0$ of $\SetBalls$ (see
    \pitemref{p:1}), and set $\SetX \leftarrow \SetX_0$.
    
    \item While there is a beneficial exchange of size $\leq \lSize =
    O\pth{1/\eps^d}$ in $\SetX$, carry it out using \pitemref{p:2}.
    
    Specifically, verify for every subset $\SetY \subseteq \SetX$ of
    size at most $\lSize$, if it can be replaced by a strictly smaller
    set $\SetY'$ such that $\pth{\SetX\setminus\SetY}\cup\SetY'$
    remains a hitting set of $\SetBalls$. If so, set $\SetX\leftarrow
    \pth{\SetX\setminus\SetY} \cup\SetY'$, and repeat.

    \item If no such set exists, then return $\SetX$ as the hitting
    set.
\end{compactenum}

%\medskip
% \paragraph{Remark.}

\paragraph{Details.}

For the geometric hitting set problem where $\SetBalls$ is a set of
$n$ balls in $\Re^d$, \pitemref{p:1} follows by a simple greedy
algorithm hitting the smallest ball (in the spirit of
\thmrefpage{const:sep}) -- see also \cite{c-ptasp-03}. As for
\pitemref{p:2}, one can check for a smaller hitting set of size at
most $\lSize$ by computing the arrangement $\ArrX{\SetBalls}$, and
directly enumerating all possible hitting sets of size at most
$\lSize$.

The more interesting case is when the set $\SetBalls$ is defined
implicitly by an instance of the minimal Voronoi separation problem
(i.e., we have a set $\PntSet$ of $n$ points in $\Re^d$, and a desired
Voronoi partition $\pth{\PntSet_1,\PntSet_2}$ of $\PntSet$).  Then
\pitemref{p:1} follows from \thmref{const:sep}. Furthermore,
\pitemref{p:2} follows from the algorithm of \thmrefpage{exact} through
computing the minimal hitting set $\SetY'$ of
$\BadBalls{\PntSet\cup\pth[]{\SetX\setminus\SetY}}
{\PntSet_1}{\PntSet_2}$, in $n^{O\pth[]{\lSize}}$ time, where
$\cardin{\SetY'} \leq \lSize$.

% \newpage

\begin{figure}[t]
\AlgorithmBox{%
   \algClusterLocal{}($\SetBalls$, $\Lopt$, $\Opt$):\+\+\\
   \CodeComment{// $\SetBalls$: given set of balls need hitting.}\\
   \CodeComment{// $\Lopt \subseteq \Re^d$: locally optimal hitting
      set of $\SetBalls$ computed by \Algorithm{LocalHitBalls}.}\\
   \CodeComment{// $\Opt \subseteq \Re^d$: optimal hitting set.}%
   \- \\%
   $k \Assign O\pth{1/\eps^d}$, \quad %
   $\SetL_1 \Assign \Lopt$, \quad%
   $\SetO_1 \Assign \Opt$, \quad%
   $\SepSet_1 \Assign \emptyset$, \quad%
   $\SetBalls_1 \Assign \SetBalls$, \quad%
   and $i \Assign 1$.\\
   \While $\SetL_i \cup \SepSet_i \neq \emptyset$ \Do \+\+\\
   Apply \thmrefpage{ball:separator} to $\SetL_i \cup \SepSet_i$ with
   the parameter $\min\pth{ k, \cardin{ \SetL_i \cup \SepSet_i}}$. %
   \\%
   $\ballA_i$, $\oSepSetB_i$: ball and the separator set returned,
   respectively. %
   \\%
   $\leftover_i \Assign \ballA_i\setminus {\bigcup_{j=1}^{i-1}
      \ballA_j}$ be the region of space newly covered by $\ballA_i$.
   \\%
   $\Lopt_i \Assign \Lopt \cap \leftover_i$ and $\Opt_i = \Opt \cap
   \leftover_i$.%
   \\%
   $\SepSet_{i+1} \Assign \pth[]{\SepSet_i \setminus \ballA_i} \cup
   \oSepSetB_i$.%
   \\%
   $\SetL_{i+1} \Assign \SetL_i \setminus \Lopt_i$.%
   \\%
   $i\Assign i+1$%
   \-\-\\
   $\Iterations \Assign i$.%
}%
     \caption{Clustering the local optimal solution.}
     \figlab{cluster:local}
\end{figure}

\subsection{Quality of approximation}

The bound on the quality of approximation follows by a clustering
argument similar to the \PTAS algorithm. The clustering is described
in \figref{cluster:local}. This breaks up the local optimal solution
$\Lopt$ and the optimal solution $\Opt$ into small sets, and we will
use a local exchange argument to bound their corresponding sizes.  The
following lemma testifies that this clustering algorithm provides a
``smooth'' way to move from $\Lopt$ to $\Opt$, through relatively
small steps.

\begin{lemma}
    \lemlab{stabby:stab}%
    Any ball of $\SetBalls$ is stabbed by $\SetX_{i+1} = \SetL_{i+1}
    \cup \SepSet_{i+1} \cup {\bigcup_{j=1}^i \Opt_j}$, for all $i$,
    where $\SetL_{i+1} = {\bigcup_{j=i+1}^\Iterations \Lopt_i}$.
\end{lemma}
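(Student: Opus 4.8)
The plan is to prove, by induction on $i$, that $\SetX_i := \SetL_i \cup \SepSet_i \cup \bigcup_{j=1}^{i-1}\Opt_j$ stabs every ball of $\SetBalls$; the statement of the lemma is precisely this claim for the index $i+1$. The base case is $\SetX_1 = \SetL_1 \cup \SepSet_1 = \Lopt$, which stabs every ball of $\SetBalls$ because it is the locally optimal hitting set returned by \Algorithm{LocalHitBalls}. Throughout I use the recursions of \figref{cluster:local}: $\SetL_{i+1} = \SetL_i\setminus\Lopt_i$, $\SepSet_{i+1} = (\SepSet_i\setminus\ballA_i)\cup\oSepSetB_i$, $\leftover_i = \ballA_i\setminus\bigcup_{j<i}\ballA_j$, and $\Lopt_i = \Lopt\cap\leftover_i\subseteq\ballA_i$, $\Opt_i = \Opt\cap\leftover_i\subseteq\ballA_i$.

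For the inductive step, assume $\SetX_i$ stabs every ball of $\SetBalls$, fix a ball $\ballB \in \SetBalls$, and let $\pnt \in \SetX_i$ be a point stabbing $\ballB$. I would follow this stabber as one passes from $\SetX_i$ to $\SetX_{i+1}$, splitting into cases according to the position of $\pnt$ relative to $\ballA_i$. If $\pnt \in \bigcup_{j=1}^{i-1}\Opt_j$ then $\pnt \in \SetX_{i+1}$ outright. If $\pnt \in (\SetL_i\cup\SepSet_i)\setminus\ballA_i$, then $\pnt\notin\Lopt_i$ (as $\Lopt_i\subseteq\ballA_i$) and $\pnt$ is not removed from $\SepSet_i$ by the update, so $\pnt$ persists into $\SetL_{i+1}\cup\SepSet_{i+1}\subseteq\SetX_{i+1}$. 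The only case in which the stabber can be discarded is $\pnt \in (\SetL_i\cup\SepSet_i)\cap\ballA_i$, and this is where the separator set and the optimal clusters come in.

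In that remaining case $\ballB$ is stabbed by a point of $\SetL_i\cup\SepSet_i$ -- exactly the point set to which \thmref{ball:separator} was applied in iteration $i$, producing the sphere $\sphereB_i = \partial\ballA_i$ and the guard set $\oSepSetB_i$. If $\ballB$ meets $\sphereB_i$, then by \thmref{ball:separator} (applied to the singleton family $\brc{\ballB}$, which is a set of balls stabbed by $\SetL_i\cup\SepSet_i$) the ball $\ballB$ is stabbed by a point of $\oSepSetB_i\subseteq\SepSet_{i+1}\subseteq\SetX_{i+1}$. Otherwise $\ballB$ is disjoint from $\sphereB_i$; since $\pnt\in\ballB\cap\ballA_i$ and $\pnt\notin\sphereB_i$, the point $\pnt$ lies strictly inside $\ballA_i$, and since $\ballB$ is connected and avoids $\partial\ballA_i$ it must lie entirely inside $\ballA_i$. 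I would then invoke optimality of $\Opt$: some $\pntA \in \Opt$ stabs $\ballB$, hence $\pntA \in \ballB \subseteq \ballA_i$. Since $\bigcup_{j=1}^i\leftover_j = \bigcup_{j=1}^i\ballA_j \supseteq \ballA_i$, the point $\pntA$ lies in $\leftover_{j_0}$ for the least index $j_0\le i$ with $\pntA\in\ballA_{j_0}$, so $\pntA \in \Opt\cap\leftover_{j_0} = \Opt_{j_0} \subseteq \bigcup_{j=1}^i\Opt_j \subseteq \SetX_{i+1}$. This exhausts all cases, so $\SetX_{i+1}$ stabs $\ballB$, completing the induction.

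The main obstacle -- and the reason the term $\bigcup_{j=1}^i\Opt_j$ appears in $\SetX_{i+1}$ at all -- is this last subcase: a ball contained entirely in the freshly added $\ballA_i$ and stabbed only by points of $\SetL_i\cup\SepSet_i$ sitting inside $\ballA_i$ gets no help from $\oSepSetB_i$, since \thmref{ball:separator} only protects balls that cross the boundary sphere $\sphereB_i$. The remedy is to charge such balls to $\Opt$ restricted to the newly covered region $\leftover_i$; the one point that takes a little care is checking that $\leftover_1,\dots,\leftover_i$ cover $\ballA_i$, so that a point of $\Opt$ lying in $\ballA_i$ genuinely lands in some cluster $\Opt_j$ with $j\le i$.
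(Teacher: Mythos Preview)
Your proof is correct and follows essentially the same route as the paper's own argument: induction (equivalently, minimum counterexample) on $i$, with the same three-way case split on the stabber $\pnt$ (in $\bigcup_{j<i}\Opt_j$; in $(\SetL_i\cup\SepSet_i)\setminus\ballA_i$; in $(\SetL_i\cup\SepSet_i)\cap\ballA_i$), invoking \thmref{ball:separator} when $\ballB$ crosses $\partial\ballA_i$ and charging to $\Opt\cap\Leftover_i$ when $\ballB\subseteq\ballA_i$. Your write-up is, if anything, slightly more explicit than the paper's in justifying why $\ballA_i\subseteq\bigcup_{j\le i}\leftover_j$ so that the $\Opt$-stabber lands in some $\Opt_{j_0}$ with $j_0\le i$.
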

\begin{proof}
    The claim holds clearly for $i=0$, as $\SetL_1 = \Lopt$, where
    $\Lopt$ is the locally optimal solution. Now, for the sake of
    contradiction, consider the minimum $i$ for which the claim fails,
    and let $\ballA \in \SetBalls$ be the ball that is not being
    stabbed. We have that $\ballA$ is stabbed by $\SetX_i =
    \pth[]{\bigcup_{j=i}^\Iterations \Lopt_i} \cup \SepSet_{i} \cup
    \pth[]{\bigcup_{j=1}^{i-1} \Opt_j}$ but not by $\SetX_{i+1}$.

    It can not be that $\ballA$ is stabbed by a point of
    $\bigcup_{j=1}^{i-1} \Opt_j$, as such a point is also present in
    $\SetX_{i+1}$.  As such, $\ballA$ must be stabbed by one of the
    points of $\SetL_{i} \cup \SepSet_{i}$, and then this stabbing
    point must be inside $\ballA_i$ -- indeed, the points removed from
    $\SetL_i$ and $\SepSet_{i}$ as we compute $\SetL_{i+1}$ and
    $\SepSet_{i+1}$, respectively, are the ones inside $\ballA_i$.

    Now, if $\ballA$ intersects $\partial \ballA_i$, then $\oSepSetB_i
    \subseteq \SepSet_{i+1}$ stabs $\ballA$ by
    \thmrefpage{ball:separator}, a contradiction.

    So it must be that $\ballA \subseteq \ballA_i$. But then consider
    the region $\Leftover_i = \cup_{j=1}^i \leftover_j = \cup_{j=1}^i
    \ballA_j$. It must be that $\ballA \subseteq \Leftover_i$. This in
    turn implies that the point of $\Opt$ stabbing $\ballA$ is in
    $\Opt \cap \Leftover_i = \cup_{j=1}^i \Opt_i \subseteq
    \SetX_{i+1}$. A contradiction.
\end{proof}

\begin{lemma}
    \lemlab{opt:chunk:small}%
    Consider any ball that $\ballA \in \SetBalls$. Let $i$ be the
    minimum index such that $\Opt_i$ stabs $\ballA$. Then $\ballA$ is
    stabbed by $\Lopt_i \cup \oSepSetB_i \cup \pth[]{ \SepSet_i \cap
       \ballA_i}$. That is $o_i \leq \lChuckSize_i + t_i + z_i$, where
    $o_i = \cardin{\Opt_i}$, $t_i = \cardin{\oSepSetB_i}$, and $z_i =
    \cardin{\SepSet_i \cap \ballA_i}$. Additionally, $o_i =O(k)$, for
    all $i$.
\end{lemma}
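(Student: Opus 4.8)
The plan is to mirror the structure of the proof of \lemref{stabby:stab}, but now localize the argument to a single ball $\ballA$ and the first index $i$ at which $\Opt_i$ stabs it. First I would fix $\ballA\in\SetBalls$ and this minimal index $i$, and recall that the regions $\leftover_1,\dots,\leftover_\Iterations$ partition $\bigcup_j \ballA_j$, so that $\Opt_i = \Opt\cap\leftover_i$ picks out exactly the optimal points newly covered at step $i$. By \thmref{ball:separator}, every ball of $\SetBalls$ that is stabbed at all by $\Lopt\cup\Opt$ (in particular $\ballA$) is handled at some stage, and the minimality of $i$ tells us that no point of $\Opt_1,\dots,\Opt_{i-1}$, i.e.\ no point of $\Opt\cap\bigcup_{j<i}\ballA_j$, stabs $\ballA$. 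Since $\ballA$ is nonetheless stabbed by $\Opt$ (as $\Opt$ is a hitting set), the stabbing point of $\Opt$ lies outside $\bigcup_{j<i}\ballA_j$.

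Next I would case-split exactly as in \lemref{stabby:stab}. Consider whether $\ballA$ is contained in $\ballA_i$ or crosses its boundary $\partial\ballA_i$. If $\ballA$ meets $\partial\ballA_i$, then because $\SetL_i\cup\SepSet_i$ is a hitting set for $\SetBalls$ at the start of iteration $i$ (this is the invariant maintained by the loop, the same one used in \lemref{stabby:stab}), the ball $\ballA$ is stabbed by $\SetL_i\cup\SepSet_i$, and then \thmref{ball:separator} applied to $\SetL_i\cup\SepSet_i$ with sphere $\sphereB_i=\partial\ballA_i$ guarantees that $\oSepSetB_i$ stabs $\ballA$. If instead $\ballA\subseteq\ballA_i$, then $\ballA\subseteq\leftover_i\cup\bigcup_{j<i}\ballA_j$; since the $\Opt$-point stabbing $\ballA$ avoids $\bigcup_{j<i}\ballA_j$ by minimality of $i$, it must lie in $\leftover_i$, hence in $\Opt_i$, and so $\Opt_i$ stabs $\ballA$ (consistent with the choice of $i$), but moreover, combining with the hitting-set invariant, any such ball inside $\ballA_i$ that is not handled by $\oSepSetB_i$ must be stabbed by the portion of the current solution lying inside $\ballA_i$, namely $\Lopt_i\cup(\SepSet_i\cap\ballA_i)$ together with $\oSepSetB_i$. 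Putting the two cases together yields that $\ballA$ is stabbed by $\Lopt_i\cup\oSepSetB_i\cup(\SepSet_i\cap\ballA_i)$, which is the first assertion, and the numerical inequality $o_i\le \lChuckSize_i+t_i+z_i$ follows because $\Opt_i$ is by definition a \emph{minimal} hitting set for the sub-instance restricted to $\leftover_i$ (or can be taken to be, since $\Opt$ restricted to $\leftover_i$ is hit by that small set, so the minimal one is no larger).

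For the final claim $o_i=O(k)$, I would argue that $\sphereB_i$ contains $\Theta(k)$ points of $\SetL_i\cup\SepSet_i$ by \thmref{ball:separator}; all balls newly covered in $\leftover_i\subseteq\ballA_i$ and stabbed by $\Opt_i$ form a sub-instance whose optimal hitting set is within a constant factor of the optimal hitting set restricted to the $\Theta(k)$ points inside $\ballA_i$ (using that each such point of $\SetL_i$ can be individually isolated, as in the proof of \thmref{const:sep}), hence $o_i = O(k)$.

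The main obstacle I anticipate is the $\ballA\subseteq\ballA_i$ case: one has to be careful that the stabbing point of the \emph{local} solution that currently handles $\ballA$ — which by the minimality-of-$i$ bookkeeping must be one of the points removed when passing from $\SetL_i\cup\SepSet_i$ to $\SetL_{i+1}\cup\SepSet_{i+1}$, i.e.\ a point inside $\ballA_i$ — is indeed captured by $\Lopt_i\cup(\SepSet_i\cap\ballA_i)$ rather than by some earlier chunk $\Lopt_j$ with $j<i$; this requires using that $\leftover_i$ is the \emph{newly} covered region and that points of $\Lopt$ in earlier $\leftover_j$ already belong to $\Lopt_j$, so the relevant removed points are exactly those in $\leftover_i\cap\Lopt = \Lopt_i$ and in $\SepSet_i\cap\ballA_i$. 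Threading this bookkeeping correctly, exactly parallel to the chain of indices in \lemref{stabby:stab}, is the delicate part; the geometry itself is entirely supplied by \thmref{ball:separator}.
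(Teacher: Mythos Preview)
Your case-split on whether $\ballA$ meets $\partial\ballA_i$ is exactly what the paper does, and the first assertion (that $\ballA$ is stabbed by $\Lopt_i\cup\oSepSetB_i\cup(\SepSet_i\cap\ballA_i)$) comes out correctly, modulo one slip: $\SetL_i\cup\SepSet_i$ is \emph{not} a hitting set of $\SetBalls$ --- the invariant from \lemref{stabby:stab} only says that $\SetL_i\cup\SepSet_i\cup\bigcup_{j<i}\Opt_j$ is. You need the minimality of $i$ to rule out the $\bigcup_{j<i}\Opt_j$ part and conclude that $\SetL_i\cup\SepSet_i$ stabs this particular $\ballA$; that is what lets \thmref{ball:separator} apply. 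This is easy to patch and you almost say it.

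The real gap is the derivation of $o_i\le\lChuckSize_i+t_i+z_i$. You claim ``$\Opt_i$ is by definition a minimal hitting set for the sub-instance restricted to $\leftover_i$'', but $\Opt_i=\Opt\cap\leftover_i$ is merely the restriction of the global optimum to a region --- nothing says it is minimal for any sub-instance, and your parenthetical does not repair this. The paper's argument is a \emph{global} exchange: the first part shows that every ball whose first $\Opt_j$-hitter is $\Opt_i$ is already covered by $\Lopt_i\cup\oSepSetB_i\cup(\SepSet_i\cap\ballA_i)$, so $(\Opt\setminus\Opt_i)\cup\Lopt_i\cup\oSepSetB_i\cup(\SepSet_i\cap\ballA_i)$ is a hitting set for all of $\SetBalls$; optimality of $\Opt$ then forces $o_i\le\lChuckSize_i+t_i+z_i$. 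Once you have this inequality, $o_i=O(k)$ is immediate: $\lChuckSize_i+z_i\le\cardin{(\SetL_i\cup\SepSet_i)\cap\ballA_i}=O(k)$ by \thmref{ball:separator}, and $t_i=O(k^{1-1/d})$. Your separate ``constant-factor sub-instance'' argument for $o_i=O(k)$ is neither needed nor, as stated, correct.
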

\begin{proof}
    If $\ballA$ intersects $\partial \ballA_i$, then it is stabbed by
    $\oSepSetB_i$ by \thmrefpage{ball:separator}. Otherwise, $\ballA
    \subseteq \Leftover_i = \cup_{j=1}^i \leftover_j = \cup_{j=1}^i
    \ballA_j$. In particular, this implies that no later point of
    $\Opt_{i+1} \cup \ldots \cup \Opt_t$ can stab $\ballA$. That is,
    only $\Opt_i$ stabs $\ballA$.  By \lemref{stabby:stab} both
    $\SetX_i$ and $\SetX_{i+1}$ stab $\ballA$, where $\SetX_{i} =
    \pth[]{\Lopt \setminus \Leftover_{i-1}} \cup \SepSet_{i} \cup
    \bigcup_{j=1}^{i-1} \Opt_j$. Namely, $\ballA$ is stabbed by a
    point of $\pth[]{\Lopt \setminus \Leftover_{i-1}} \cup
    \SepSet_{i}$ that is contained inside $\ballA_i$. Such a point is
    either in $\SepSet_{i} \cap \ballA_i$, or in $\pth[]{\Lopt
       \setminus \Leftover_{i-1}} \cap \ballA_i = \Lopt_i$, as
    claimed.

    The second part follows by observing that otherwise $\Opt_i$ can
    be replaced by $\Lopt_i \cup \oSepSetB_i \cup \pth[]{ \SepSet_i
       \cap \ballA_i}$ in the optimal solution.

    The third claim follows by observing that by the algorithm design
    $ \lChuckSize_i + z_i = O(k)$, and $t_i = O\pth[]{ k^{1-1/d}}$. As such,
    $o_i \leq \lChuckSize_i + t_i + z_i=O(k)$.
\end{proof}

\begin{lemma}
    \lemlab{local:is:small}%
    Consider any ball $\ballA \in \SetBalls$ that is stabbed by
    $\Lopt_i$ but it is not stabbed by $\Lopt \setminus \Lopt_i$, then
    $\ballA$ is stabbed by $\Opt_i \cup \oSepSetB_i \cup \pth[]{
       \SepSet_i \cap \ballA_i}$. Additionally, using the notations of
    \lemref{opt:chunk:small}, we have $\lChuckSize_i \leq o_i + t_i + z_i$.
\end{lemma}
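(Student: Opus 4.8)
This is the mirror image of \lemref{opt:chunk:small} with the roles of the locally optimal solution $\Lopt$ and the optimum $\Opt$ interchanged, and I would prove it by the same scheme. First observe that the chunks $\Lopt_1,\Lopt_2,\ldots$ partition $\Lopt$, so the index $i$ in the statement is the unique one for which $\Lopt_i$ stabs $\ballA$, and the stabbing point lies in $\Lopt_i\subseteq\leftover_i\subseteq\ballA_i$. If $\ballA$ meets $\partial\ballA_i$ then, since $\Lopt_i\subseteq\SetL_i$, the ball $\ballA$ is stabbed by the set $\SetL_i\cup\SepSet_i$ that is fed to \thmrefpage{ball:separator} in iteration $i$, so by that theorem it is stabbed by $\oSepSetB_i$, and we are done. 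Otherwise $\ballA\subseteq\ballA_i$, and I would proceed as below.

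By \lemref{stabby:stab}, $\ballA$ is stabbed by $\SetX_{i+1}=\SetL_{i+1}\cup\SepSet_{i+1}\cup\bigcup_{j=1}^{i}\Opt_j$. The stabbing point lies in $\ballA\subseteq\ballA_i$; intersecting $\SetX_{i+1}$ with $\ballA_i$, using $\SetL_{i+1}=\Lopt\setminus\Leftover_i$ (which is disjoint from $\ballA_i$), $\SepSet_{i+1}\cap\ballA_i=\oSepSetB_i$, and $\bigcup_{j\le i}\Opt_j\cap\ballA_i=\Opt\cap\ballA_i=\Opt_i\cup\pth[]{\Opt\cap\ballA_i\cap\Leftover_{i-1}}$, we get that $\ballA$ is stabbed by $\oSepSetB_i\cup\Opt_i\cup\pth[]{\Opt\cap\ballA_i\cap\Leftover_{i-1}}$ (the hypothesis that no point of $\Lopt\setminus\Lopt_i\supseteq\SetL_{i+1}$ stabs $\ballA$ is what discards the $\SetL_{i+1}$ contribution). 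If the stabbing point lies in $\oSepSetB_i\cup\Opt_i$ we are done; the remaining possibility, which is the technical heart of the argument, is that $\ballA$ is stabbed by a point $q\in\Opt$ that lies both in $\ballA_i$ and in some earlier ball, i.e. in $\Leftover_{i-1}$.

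I would dispose of this case as follows. Since $\ballA$ contains both $q\in\Leftover_{i-1}$ and the $\Lopt_i$-point, which lies outside $\Leftover_{i-1}$, the ball $\ballA$ must cross $\partial\ballA_l$ for some $l<i$; let $j^{*}$ be the largest such index. The $\Lopt_i$-point still lies in $\SetL_{j^{*}}$, since the sets $\SetL_1\supseteq\SetL_2\supseteq\cdots$ only shrink, so \thmrefpage{ball:separator}, applied in iteration $j^{*}$ to $\SetL_{j^{*}}\cup\SepSet_{j^{*}}$, yields a point $r\in\oSepSetB_{j^{*}}\subseteq\partial\ballA_{j^{*}}$ that stabs $\ballA$. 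By maximality of $j^{*}$, the ball $\ballA$ does not cross $\partial\ballA_m$ for any $j^{*}<m<i$; hence $r$ cannot lie in any such $\ballA_m$ (that would force $\ballA$ to cross $\partial\ballA_m$), so $r$ is never deleted from the running separator set and survives into $\SepSet_i$. Since $r\in\ballA\subseteq\ballA_i$, we conclude $r\in\SepSet_i\cap\ballA_i$, which completes the proof of the first assertion.

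The inequality $\lChuckSize_i\le o_i+t_i+z_i$ then follows by a local-exchange argument parallel to the second part of \lemref{opt:chunk:small}: by the first assertion, every ball of $\SetBalls$ not stabbed by $\Lopt\setminus\Lopt_i$ is stabbed by $\Opt_i\cup\oSepSetB_i\cup\pth[]{\SepSet_i\cap\ballA_i}$, so $\pth[]{\Lopt\setminus\Lopt_i}\cup\Opt_i\cup\oSepSetB_i\cup\pth[]{\SepSet_i\cap\ballA_i}$ is again a hitting set of $\SetBalls$; since $\cardin{\Lopt_i}=\lChuckSize_i=O(k)$ does not exceed the allowed exchange size $\lSize=O(1/\eps^d)$ for a suitable choice of constants, local optimality of $\Lopt$ forbids replacing $\Lopt_i$ by a strictly smaller set, whence $\lChuckSize_i\le\cardin{\Opt_i\cup\oSepSetB_i\cup\pth[]{\SepSet_i\cap\ballA_i}}\le o_i+t_i+z_i$. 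I expect the earlier-ball case of the third paragraph to be the only genuine obstacle; everything else transcribes the proof of \lemref{opt:chunk:small}.
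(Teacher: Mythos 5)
Your proof is correct and follows essentially the same route as the paper: handle the boundary-intersection case via $\oSepSetB_i$, apply \lemref{stabby:stab}, isolate the troublesome case where the optimal stabbing point lies in an earlier ball, and rescue it with the maximal-crossing-index argument showing a surviving point of $\SepSet_i \cap \ballA_i$, then close with the local-exchange inequality. Two small slips do not affect correctness but are worth noting: $\SepSet_{i+1}\cap\ballA_i$ equals $\oSepSetB_i\cap\ballA_i$, not $\oSepSetB_i$ (only the containment $\subseteq\oSepSetB_i$ is used, so this is harmless), and $\oSepSetB_{j^*}$ is \emph{not} a subset of $\partial\ballA_{j^*}$ — by \lemref{shield:ball} its points merely lie near the sphere — but your argument only uses $r\in\ballA$, so this too is cosmetic.
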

\begin{proof}
    If $\ballA$ intersects $\partial \ballA_i$, then it is stabbed by
    $\oSepSetB_i$ by \thmrefpage{ball:separator}.  So we assume for
    now on that $\ballA$ is not stabbed by $\oSepSetB_i$.  But then,
    $\ballA \subseteq \ballA_i \subseteq \Leftover_i = \cup_{j=1}^i
    \leftover_j = \cup_{j=1}^i \ballA_j$.

    Now, by \lemref{stabby:stab} both $\SetX_i$ and $\SetX_{i+1}$ stab
    $\ballA$, where $\SetX_{i+1} = \bigcup_{j=i+1}^\Iterations \Lopt_j
    \cup \SepSet_{i+1} \cup \bigcup_{j=1}^{i} \Opt_j$. Namely,
    $\ballA$ is stabbed by a point of $\SepSet_{i+1} \cup
    \bigcup_{j=1}^{i} \Opt_j$ that is contained inside
    $\ballA_i$. Such a point is either in $\SepSet_{i} \cap \ballA_i$
    and then we are done, or alternatively, it can be in
    $\ballA_i\cap \bigcup_{j=1}^{i} \Opt_j$.

    Now, if $\ballA \subseteq \leftover_i$ then $\Opt_i$ must stab
    $\ballA$, and we are done. Otherwise, let $k < i$ be the maximum
    index such that $\ballA_i$ intersects $\partial \ballA_k$. Observe
    that as $\ballA$ intersects $\leftover_i$, it can not be that it
    intersects the balls $\ballA_{k+1}, \ldots, \ballA_{i-1}$. In
    particular, \thmrefpage{ball:separator} implies that there is a
    point of $\oSepSetB_k$ that stabs $\ballA$, as $\ballA$ is being
    stabbed by $\SetL_k \cup \SepSet_k$. This point of $\oSepSetB_k$
    is added to $\SepSet_{k+1}$, and it is not being removed till
    $\SepSet_{i+1}$. Namely, this point is in $\SepSet_i$, and it is
    of course also in $\ballA_i$, thus implying the claim.

    As for the second part, by \lemref{opt:chunk:small}, $o_i + t_i +
    z_i \leq \alpha = O(k)$. As such, setting $\lSize = \Omega(
    1/\eps^d)$ to be sufficiently large, we have that $\lSize > 2
    \alpha$, and the local search algorithm would consider the local
    exchange of $\Lopt_i$ with $\Opt_i \cup \oSepSetB_i \cup \pth[]{
    \SepSet_i \cap \ballA_i}$. As this is an exchange not taken, it
    must be that it is not beneficial, implying the inequality.
\end{proof}

Now we are left with some easy calculations. 

\begin{lemma}
    \lemlab{silly:calcs}%
    We have  that
    \begin{inparaenum}[(i)]
        \item $\sum_{i=1}^{\Iterations} t_i \leq (\eps/4)
        \cardin{\Opt}$, and 
        \item $\sum_{i=1}^{\Iterations} z_i \leq (\eps/4)
        \cardin{\Opt}$.
    \end{inparaenum}
\end{lemma}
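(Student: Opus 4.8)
The plan is to first reduce part~(ii) to part~(i), and then to prove~(i) by bounding the number of iterations $\Iterations$ of the clustering procedure in \figref{cluster:local}.

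\emph{Reducing (ii) to (i).} I would track how $\cardin{\SepSet_i}$ evolves. Going from $\SepSet_i$ to $\SepSet_{i+1} = \pth[]{\SepSet_i\setminus\ballA_i}\cup\oSepSetB_i$, the procedure deletes exactly the $z_i = \cardin{\SepSet_i\cap\ballA_i}$ points of $\SepSet_i$ lying in $\ballA_i$ and inserts the $t_i = \cardin{\oSepSetB_i}$ points of $\oSepSetB_i$, which lie on $\partial\ballA_i$ and hence are disjoint from the retained set $\SepSet_i\setminus\ballA_i$; thus $\cardin{\SepSet_{i+1}} = \cardin{\SepSet_i} - z_i + t_i$. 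Since $\SepSet_1 = \emptyset$ and the loop runs only while $\SetL_i\cup\SepSet_i\neq\emptyset$ (so $\cardin{\SepSet_\Iterations} = 0$), summing this recurrence telescopes to $\sum_i z_i = \sum_i t_i$. Hence (ii) follows from (i).

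\emph{Bounding $\Iterations$.} The key estimate is $\Iterations = O\pth{\cardin{\Lopt}/k} + O(1)$. I would use the potential $\Phi_i = \cardin{\SetL_i\cup\SepSet_i}$, which equals $\cardin{\Lopt}$ for $i=1$ and $0$ for $i=\Iterations$. Applying \thmref{ball:separator} with parameter $k_i' = \min\pth{k,\Phi_i}$, the returned ball $\ballA_i$ contains $\Theta\pth{k_i'}$ points of $\SetL_i\cup\SepSet_i$ in its interior. Every such interior point lying in $\SetL_i$ must lie in $\leftover_i$ (it is in $\ballA_i$ and, being in $\SetL_i$, in none of the earlier balls), hence in $\Lopt_i$, and so is removed from $\SetL$; every such point lying in $\SepSet_i$ lies in $\ballA_i$ and so is removed when forming $\SepSet_{i+1}$ (and is not re-inserted, since $\oSepSetB_i\subseteq\partial\ballA_i$). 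Therefore at least $\Theta\pth{k_i'}$ elements leave $\SetL_i\cup\SepSet_i$ while only $t_i = O\pth{(k_i')^{1-1/d}}$ enter, so $\Phi_{i+1} \le \Phi_i - \Theta\pth{k_i'} + O\pth{(k_i')^{1-1/d}}$. For $k$ large (equivalently, the constant $c$ in $k = c/\eps^d$ large) this gives $\Phi_{i+1}\le\Phi_i - \Omega\pth{\min(k,\Phi_i)}$: the potential drops by $\Omega(k)$ while $\Phi_i\ge k$, and is cleared by a constant number of further iterations once $\Phi_i < k$, yielding the claimed bound.

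\emph{Combining, and the obstacle.} Since \Algorithm{LocalHitBalls} starts from a constant-factor approximation (property \pitemref{p:1}) and never increases its current solution, $\cardin{\Lopt} = O\pth{\cardin{\Opt}}$. Using $t_i = O\pth{k^{1-1/d}}$ for every $i$,
$$
\textstyle\sum_i t_i \;\le\; \Iterations\cdot O\pth{k^{1-1/d}} \;=\; O\pth{\cardin{\Lopt}/k^{1/d}} \;=\; O\pth{\cardin{\Opt}/k^{1/d}},
$$
and since $k^{1/d} = c^{1/d}/\eps$ this is $O\pth{\eps\cardin{\Opt}/c^{1/d}}$, at most $\pth{\eps/4}\cardin{\Opt}$ once $c$ is large enough; with (i) established, $\sum_i z_i = \sum_i t_i \le \pth{\eps/4}\cardin{\Opt}$ as well. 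The delicate point in this plan is the potential argument: one must check that the procedure actually terminates, and in particular must control the tail iterations where fewer than $k$ points of $\SetL_i\cup\SepSet_i$ remain — there \thmref{ball:separator} only guarantees a ball capturing a constant fraction of the remainder, so one verifies this tail contributes only $O(1)$ extra iterations and a lower-order additive term to $\sum_i t_i$. Everything else is the routine "easy calculations".
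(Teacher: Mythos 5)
Your proposal matches the paper's own proof: the same telescoping identity $\sum_i z_i = \sum_i t_i$ (from $\cardin{\SepSet_{i+1}} = \cardin{\SepSet_i} - z_i + t_i$ with $\SepSet_1$ and $\SepSet_\Iterations$ empty) establishes (ii) from (i), and the same iteration count $\Iterations = O(\cardin{\Lopt}/k) = O(\eps^d\cardin{\Opt})$ combined with $t_i = O(k^{1-1/d})$ and $k = \Theta(1/\eps^d)$ yields (i) after choosing the constant in $k$ large enough. The one place you go beyond the paper is in flagging and handling the tail iterations where $\cardin{\SetL_i\cup\SepSet_i} < k$ — the paper's claim that ``every iteration removes $\geq k$ points'' silently ignores this regime, so your extra care there is warranted but not a deviation in strategy.
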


\begin{proof}
    Observe that $k=O(1/\eps^d)$ and $t_i = \cardin{\oSepSetB_i} =
    O(k^{1-1/d}) = O\pth{1/\eps^{d-1}} \leq c \eps k$, where $c$ can be
    made arbitrarily small by making $k$ sufficient large. In
    particular, in every iteration, the algorithm removes $\geq k$
    points from $\SetL_i \cup \SepSet_i$, and replaces them by $t_i$
    points. Starting with a solution of size $\cardin{\Lopt} \leq 
    \cardin{\SetX_0} \leq c'\cardin{\Opt}$, where $c'$ is some constant, 
    this can happen at
    most $\Iterations \leq c' \cardin{\Opt}/ (k - c\eps k) = O( \eps^d
    \cardin{\Opt})$. As such, we have $\sum_{i=1}^{\Iterations} t_i =
    O\pth{ \eps^d \cardin{\Opt} c \eps k } = O\pth{ c\eps
       \cardin{\Opt}}$, as $k = O(1/\eps^d)$. The claim now follows by
    setting $k$ to be sufficiently large.

    The second claim follows by observing that $t_i$ counts the
    numbers of points added to $\SepSet_{i+1}$, while $z_i$ counts the
    number of points removed from it. As $\SepSet_\Iterations$ is
    empty, it must be that $\sum_i t_i = \sum_i z_i$.
\end{proof}

\begin{lemma}
    \lemlab{local:quality}%
    We have that $\cardin{\Lopt} \leq (1+\eps) \cardin{\Opt}$.
\end{lemma}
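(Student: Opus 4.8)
The plan is a straightforward summation over the clustering produced in \figref{cluster:local}. First I would record two bookkeeping facts. The regions $\leftover_i = \ballA_i \setminus \bigcup_{j<i}\ballA_j$ are pairwise disjoint, and the loop of \figref{cluster:local} halts only once $\SetL_\Iterations \cup \SepSet_\Iterations = \emptyset$, which forces $\SetL_\Iterations = \Lopt \setminus \bigcup_{j<\Iterations}\leftover_j = \emptyset$; hence the chunks $\Lopt_i = \Lopt \cap \leftover_i$ partition $\Lopt$ and $\cardin{\Lopt} = \sum_{i=1}^\Iterations \lChuckSize_i$. The chunks $\Opt_i = \Opt \cap \leftover_i$ are likewise pairwise disjoint subsets of $\Opt$, so $\sum_{i=1}^\Iterations o_i \leq \cardin{\Opt}$ (it is irrelevant whether the $\ballA_i$'s cover $\Opt$).

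Next I would invoke \lemref{local:is:small} for each index $i$ to get $\lChuckSize_i \leq o_i + t_i + z_i$. The only thing that needs checking to apply that lemma is that the exchange of $\Lopt_i$ for $\Opt_i \cup \oSepSetB_i \cup \pth[]{\SepSet_i \cap \ballA_i}$ is one the local search actually considers, i.e.\ that $\lChuckSize_i \leq \lSize$: by \lemref{opt:chunk:small} we have $o_i, t_i, z_i = O(k)$ with $k = O(1/\eps^d)$, so choosing $\lSize$ to be a sufficiently large constant multiple of $1/\eps^d$ (larger than twice the hidden constant $\alpha = O(k)$) makes both $\Lopt_i$ and its candidate replacement smaller than $\lSize$, and then local optimality of $\Lopt$ gives the inequality. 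Summing,
\begin{equation*}
    \cardin{\Lopt} = \sum_{i=1}^\Iterations \lChuckSize_i
    \leq \sum_{i=1}^\Iterations \pth[]{o_i + t_i + z_i}
    \leq \cardin{\Opt} + \sum_{i=1}^\Iterations t_i + \sum_{i=1}^\Iterations z_i .
\end{equation*}

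Finally I would substitute \lemref{silly:calcs}, which gives $\sum_i t_i \leq (\eps/4)\cardin{\Opt}$ and $\sum_i z_i \leq (\eps/4)\cardin{\Opt}$, to conclude $\cardin{\Lopt} \leq \pth[]{1 + \eps/2}\cardin{\Opt} \leq (1+\eps)\cardin{\Opt}$. I expect no real obstacle at this stage: all of the content already sits in \lemref{stabby:stab}, \lemref{opt:chunk:small} and \lemref{local:is:small} (which themselves lean on \thmref{ball:separator}), and the only care required here is the partition bookkeeping and verifying that every chunk stays inside the $\lSize = O(1/\eps^d)$ window so that the exchange argument of \lemref{local:is:small} is legitimate.
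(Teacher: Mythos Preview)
Your proposal is correct and follows essentially the same route as the paper: sum the per-cluster inequality $\lChuckSize_i \leq o_i + t_i + z_i$ from \lemref{local:is:small} over $i$, then plug in \lemref{silly:calcs} to bound the $t_i$ and $z_i$ sums by $(\eps/4)\cardin{\Opt}$ each. Your extra bookkeeping (that the $\leftover_i$ partition $\Lopt$, that $\sum o_i \leq \cardin{\Opt}$, and that the exchange size stays below $\lSize$) is implicit in the paper's one-line computation and in the ``second part'' of \lemref{local:is:small}, so nothing is missing or different.
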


\begin{proof}
    We have by \lemref{local:is:small} and \lemref{silly:calcs} that
    \begin{align*}
        \cardin{\Lopt}%
        =%
        \sum_{i=1}^\Iterations \lChuckSize_i%
        \leq%
        \sum_{i=1}^\Iterations \pth{o_i + t_i + z_i}%
        =%
        \cardin{\Opt} + \sum_{i=1}^\Iterations t_i +
        \sum_{i=1}^\Iterations z_i%
        \leq%
        (1+\eps/2)\cardin{\Opt}.
    \end{align*}
\end{proof}

\subsubsection{The Result}

\begin{theorem}
    Let $\SetBalls$ be a set of balls in $\Re^d$ satisfying properties
    \pitemrefpage{p:1} and \pitemrefpage{p:2}. Then
    \Algorithm{LocalHitBalls} computes, in $n^{O(1/\eps^d)}$ time, a
    hitting set of $\SetBalls$, whose size is a
    $\pth[]{1+\eps}$-approximation to the minimum size hitting set of
    $\SetBalls$.
\end{theorem}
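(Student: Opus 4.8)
The plan is to assemble three pieces: the output is a hitting set of $\SetBalls$, its size is controlled by the clustering analysis already developed (ending in \lemref{local:quality}), and the running time is $n^{O(1/\eps^d)}$. First I would check validity. By \pitemref{p:1} the initial set $\SetX_0$ is a hitting set of $\SetBalls$, and every iteration replaces some $\SetY\subseteq\SetX$ by the set $\SetY'$ furnished by \pitemref{p:2}, for which $\pth{\SetX\setminus\SetY}\cup\SetY'$ is again a hitting set; hence the invariant ``$\SetX$ is a hitting set of $\SetBalls$'' is preserved and the returned set is one. Since an exchange is carried out only when $\cardin{\SetY'}<\cardin{\SetY}$, the quantity $\cardin{\SetX}$ strictly decreases at every iteration, so $\cardin{\SetX}\le\cardin{\SetX_0}$ throughout; by \pitemref{p:1} this is $n^{O(1)}$, which is exactly condition (ii) of \pitemref{p:2}, so the local-exchange subroutine is always legitimately invoked.

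Second, for the approximation bound: upon termination no beneficial exchange of size at most $\lSize=O(1/\eps^d)$ exists, i.e. the output is locally optimal with respect to such exchanges. This is precisely the hypothesis used throughout the clustering argument of \lemref{stabby:stab}--\lemref{local:quality}. Instantiating that argument with $\Lopt$ equal to the algorithm's output and $\Opt$ a minimum hitting set of $\SetBalls$, \lemref{local:quality} yields $\cardin{\Lopt}\le(1+\eps)\cardin{\Opt}=(1+\eps)\hitSetSizeX{\SetBalls}$, as desired. Here $\lSize$ must be taken as the constant dictated by \lemref{local:is:small} (so that, in the notation there, $\lSize>2\alpha$ with $\alpha=O(k)=O(1/\eps^d)$); in particular $\lSize$ depends only on $\eps$ and $d$, not on $n$.

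Third, the running time. Computing $\SetX_0$ costs $n^{O(1)}$ by \pitemref{p:1}. In each iteration the algorithm enumerates all subsets $\SetY\subseteq\SetX$ with $\cardin{\SetY}\le\lSize$; since $\cardin{\SetX}=n^{O(1)}$ there are $n^{O(\lSize)}$ of them, and each invocation of \pitemref{p:2} takes $n^{O(\lSize)}$ time, so one iteration costs $n^{O(\lSize)}=n^{O(1/\eps^d)}$. Finally, $\cardin{\SetX}$ strictly decreases per iteration and starts at $\cardin{\SetX_0}=O\pth{\hitSetSizeX{\SetBalls}}=n^{O(1)}$, so the number of iterations is $n^{O(1)}$; multiplying, the total time is $n^{O(1/\eps^d)}$.

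The bulk of the difficulty lies not in this theorem but in the preceding lemmas, which run the synchronized clustering of $\Lopt$ and $\Opt$ and exploit \thmref{ball:separator}. What remains to be careful about here is the bookkeeping that keeps $\cardin{\SetX}$ polynomial (so that \pitemref{p:2} stays applicable and the iteration count stays polynomial), and making explicit that $\lSize$ is a fixed function of $\eps$ and $d$, so that both the per-iteration cost and the iteration count are of the stated form.
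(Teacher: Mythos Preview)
Your proposal is correct and follows essentially the same route as the paper: invoke \lemref{local:quality} for the approximation guarantee, and bound the running time by noting that $\cardin{\SetX}$ strictly decreases so there are $n^{O(1)}$ iterations, each enumerating $n^{O(\lSize)}$ subsets and spending $n^{O(\lSize)}$ time per subset via \pitemref{p:2}. Your write-up is in fact more careful than the paper's, explicitly verifying that the hitting-set invariant is maintained and that $\cardin{\SetX}$ stays polynomial so condition~(ii) of \pitemref{p:2} continues to hold.
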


\begin{proof}
    \lemref{local:quality} implies the bound on the quality of
    approximation. 

    As for the runtime, observe that the size of local solution
    reduces by at least one after each local improvement, and from
    \pitemrefpage{p:1}, the initial local solution has size
    $n^{O\pth[]{1}}$. Thus there can be at most $n^{O\pth[]{1}}$ local
    improvements. Furthermore, every local solution has at most
    $n^{O\pth[]{k}}$ subsets of size $\lSize$ that are checked for local
    improvement. By \pitemrefpage{p:2}, such a local improvement can
    be checked in $n^{O\pth[]{\lSize}}$ time. Thus
    \Algorithm{LocalHitBalls} runs in $n^{O\pth[]{1/\eps^d}}$ time.
\end{proof}

\section{Conclusions}
\seclab{conclusions}

In this paper, we introduced the problem of Voronoi separating a set
of sites into two balanced portions, by introducing separating guards
into the Voronoi diagram. We provided a simple, fast and elegant
algorithm for computing it. We then addressed the problem of how to
compute a Voronoi separating set given a specific partition. This
boils down to a geometric hitting/piercing set problem, which is
quite challenging as the set of balls that needs piercing is
infinite. We showed how to solve this problem exactly using algebraic
techniques, and then we showed how to $(1+\eps)$-approximate it, by
providing a new algorithm for geometric hitting set problem, that
works by reducing the problem into small instances, that can be solved
exactly. We believe this new algorithm to be quite interesting, and it
should be usable for other problems.  Furthermore, the new \PTAS can
be viewed as extending the type of geometric hitting set problems that
can be approximated efficiently. We also introduced a new technique for 
proving local search methods on general geometric hitting set problems, 
based on our strong separator result.

There are many interesting open problems for further research. In 
particular, the new \PTAS might be more practical for 
the piercing balls problem than previous algorithms, and it might be
worthwhile to further investigate this direction. Additionally, the 
proof technique for the local search algorithm might be applicable 
to other separator based geometric problems whether finite or infinite 
in nature.

%*flatex input: [./s_voronoi.bbl]
\newcommand{\etalchar}[1]{$^{#1}$}
 \providecommand{\CNFX}[1]{ {\em{\textrm{(#1)}}}}%
  \providecommand{\CNFSoCG}{\CNFX{SoCG}}%
  \providecommand{\CNFCCCG}{\CNFX{CCCG}}%
  \providecommand{\CNFFOCS}{\CNFX{FOCS}}%
  \providecommand{\CNFSTOC}{\CNFX{STOC}}%
  \providecommand{\CNFSODA}{\CNFX{SODA}}%
  \providecommand{\CNFPODS}{\CNFX{PODS}}%
  \providecommand{\CNFISAAC}{\CNFX{ISAAC}}%
  \providecommand{\CNFFSTTCS}{\CNFX{FSTTCS}}%
  \providecommand{\CNFIJCAI}{\CNFX{IJCAI}}%
  \providecommand{\CNFBROADNETS}{\CNFX{BROADNETS}}%
  \providecommand{\CNFSWAT}{\CNFX{SWAT}} \providecommand{\CNFESA}{\CNFX{ESA}}
  \providecommand{\CNFWADS}{\CNFX{WADS}}%
  \providecommand{\CNFINFOCOM}{\CNFX{INFOCOM}}%
  \providecommand{\Merigot}{M{\'{}e}rigot}%
  \providecommand{\Badoiu}{B\u{a}doiu} \providecommand{\Matousek}{Matou{\v
  s}ek} \providecommand{\Erdos}{Erd{\H o}s}
  \providecommand{\Barany}{B{\'a}r{\'a}ny}
  \providecommand{\Bronimman}{Br{\"o}nnimann}
  \providecommand{\Gartner}{G{\"a}rtner} \providecommand{\Badoiu}{B\u{a}doiu}
  \providecommand{\tildegen}{{\protect\raisebox{-0.1cm}
  {\symbol{'176}\hspace{-0.03cm}}}} \providecommand{\SarielWWWPapersAddr}
  {http://www.uiuc.edu/~sariel/papers} \providecommand{\SarielWWWPapers}
  {http://www.uiuc.edu/\tildegen{}sariel/\hspace{0pt}papers}
  \providecommand{\urlSarielPaper}[1]{ \href{\SarielWWWPapersAddr/#1}
  {\SarielWWWPapers{}/#1}}  \providecommand{\CNFX}[1]{ {\em{\textrm{(#1)}}}}%
  \providecommand{\CNFSTOC}{\CNFX{STOC}}%
  \providecommand{\CNFFOCS}{\CNFX{FOCS}}%

% flatex input end: [./s_voronoi.bbl]
% \bibliography{shortcuts,geometry}%
%FLATEX-REM:\bibliography{shortcuts,s_voronoi}%

\begin{thebibliography}{{Har}11b}

\bibitem[AGK{\etalchar{+}}01]{agkmp-lshkm-01}
V.~Arya, N.~Garg, R.~Khandekar, K.~Munagala, and V.~Pandit.
\newblock Local search heuristic for $k$-median and facility location problems.
\newblock In {\em Proc. 33rd Annu. ACM Sympos. Theory Comput.\CNFSTOC}, pages
  21--29, 2001.

\bibitem[AKL13]{akl-vddt-13}
F.~Aurenhammer, R.~Klein, and D.-T. Lee.
\newblock {\em Voronoi Diagrams and Delaunay Triangulations}.
\newblock World Scientific, 2013.

\bibitem[AST90]{ast-stgem-90}
\href{http://www.math.tau.ac.il/~nogaa/}{N.~{Alon}}, P.~D. Seymour, and R.~Thomas.
\newblock A separator theorem for graphs with an excluded minor and its
  applications.
\newblock In {\em STOC}, pages 293--299, 1990.

\bibitem[BCKO08]{bcko-cgaa-08}
{M. de} Berg, \href{http://www.win.tue.nl/~ocheong}{O.~{Cheong}}, {M. van} Kreveld, and \href{http://www.cs.uu.nl/people/markov/}{M.~H. {Overmars}}.
\newblock {\em Computational Geometry: Algorithms and Applications}.
\newblock Springer-Verlag, 3rd edition, 2008.

\bibitem[BPR06]{bpr-arag-06}
S.~Basu, R.~Pollack, and M.~F. Roy.
\newblock {\em Algorithms in Real Algebraic Geometry}.
\newblock Algorithms and Computation in Mathematics. Springer, 2006.

\bibitem[BPTW10]{bptw-betsu-10}
J.~{B{\"o}ttcher}, K.~P. {Pruessmann}, A.~{Taraz}, and A.~{W{\"u}rfl}.
\newblock Bandwidth, expansion, treewidth, separators and universality for
  bounded-degree graphs.
\newblock {\em Eur. J. Comb.}, 31(5):1217--1227, July 2010.

\bibitem[CH12]{ch-aamis-12}
\href{http://www.math.uwaterloo.ca/~tmchan/}{T.~M.~{Chan}} and \href{http://www.uiuc.edu/~sariel}{S.~{{Har-Peled}}}.
\newblock Approximation algorithms for maximum independent set of pseudo-disks.
\newblock {\em \href{http://link.springer-ny.com/link/service/journals/00454/}{Discrete Comput. {}Geom.}}, 48:373--392, 2012.

\bibitem[Cha03]{c-ptasp-03}
\href{http://www.math.uwaterloo.ca/~tmchan/}{T.~M.~{Chan}}.
\newblock Polynomial-time approximation schemes for packing and piercing fat
  objects.
\newblock {\em J. Algorithms}, 46(2):178--189, 2003.

\bibitem[EH06]{eh-ggt-06}
\href{http://www.cs.arizona.edu/~alon/}{A.~{Efrat}} and \href{http://www.uiuc.edu/~sariel}{S.~{{Har-Peled}}}.
\newblock Guarding galleries and terrains.
\newblock {\em Inform. Process. Lett.}, 100(6):238--245, 2006.

\bibitem[EJS05]{ejs-ptasg-05}
T.~Erlebach, K.~Jansen, and E.~Seidel.
\newblock Polynomial-time approximation schemes for geometric intersection
  graphs.
\newblock {\em SIAM J. Comput.}, 34(6):1302--1323, 2005.

\bibitem[EK13]{ek-ltamf-13}
D.~Eisenstat and P.~N. Klein.
\newblock Linear-time algorithms for max flow and multiple-source shortest
  paths in unit-weight planar graphs.
\newblock In {\em Proc. 45th Annu. ACM Sympos. Theory Comput.\CNFSTOC}, pages
  735--744, 2013.

\bibitem[FR06]{fr-pgnwe-06}
J.~Fakcharoenphol and S.~Rao.
\newblock Planar graphs, negative weight edges, shortest paths, and near linear
  time.
\newblock {\em J. Comput. Sys. Sci.}, 72(5):868--889, 2006.

\bibitem[GHT84]{ght-stgbg-84}
J.~R. Gilbert, J.~P. Hutchinson, and R.~E. Tarjan.
\newblock A separator theorem for graphs of bounded genus.
\newblock {\em J. Algorithms}, 5(3):391--407, 1984.

\bibitem[GT08]{gt-salsa-08}
A.~Gupta and K.~Tangwongsan.
\newblock Simpler analyses of local search algorithms for facility location.
\newblock {\em CoRR}, abs/0809.2554, 2008.

\bibitem[{Har}11a]{h-gaa-11}
\href{http://www.uiuc.edu/~sariel}{S.~{{Har-Peled}}}.
\newblock {\em Geometric Approximation Algorithms}, volume 173 of {\em
  Mathematical Surveys and Monographs}.
\newblock Amer. Math. Soc., 2011.

\bibitem[{Har}11b]{h-speps-11}
\href{http://www.uiuc.edu/~sariel}{S.~{{Har-Peled}}}.
\newblock A simple proof of the existence of a planar separator.
\newblock {\em CoRR}, abs/1105.0103, 2011.

\bibitem[HL12]{hl-wgscp-12}
\href{http://www.uiuc.edu/~sariel}{S.~{{Har-Peled}}} and M.~Lee.
\newblock Weighted geometric set cover problems revisited.
\newblock {\em J. Comput. Geom.}, 3(1):65--85, 2012.

\bibitem[HM05]{hm-facsk-05}
\href{http://www.uiuc.edu/~sariel}{S.~{{Har-Peled}}} and S.~{Mazumdar}.
\newblock Fast algorithms for computing the smallest $k$-enclosing disc.
\newblock {\em Algorithmica}, 41(3):147--157, 2005.

\bibitem[Kle08]{k-ltast-08}
P.~N. Klein.
\newblock A linear-time approximation scheme for tsp in undirected planar
  graphs with edge-weights.
\newblock {\em SIAM J. Comput.}, 37(6):1926--1952, 2008.

\bibitem[Koe36]{k-kdka-36}
P.~Koebe.
\newblock Kontaktprobleme der konformen {Abbildung}.
\newblock {\em Ber. Verh. S{\"a}chs. Akademie der Wissenschaften Leipzig,
  Math.-Phys. Klasse}, 88:141--164, 1936.

\bibitem[LT77]{lt-stpg-77}
R.~J. Lipton and R.~E. Tarjan.
\newblock A separator theorem for planar graphs.
\newblock In {\em Proc. GI Conf. Theoret. Comput. Sci.}, 1977.

\bibitem[LT79]{lt-stpg-79}
R.~J. Lipton and R.~E. Tarjan.
\newblock A separator theorem for planar graphs.
\newblock {\em SIAM J. Appl. Math.}, 36:177--189, 1979.

\bibitem[MR09]{mr-pghsp-09}
N.~H. Mustafa and S.~Ray.
\newblock {PTAS} for geometric hitting set problems via local search.
\newblock In {\em Proc. 25th Annu. ACM Sympos. Comput. Geom.\CNFSoCG}, pages
  17--22, 2009.

\bibitem[MTTV97]{mttv-sspnng-97}
G.~L. Miller, S.~H. Teng, W.~P. Thurston, and S.~A. Vavasis.
\newblock Separators for sphere-packings and nearest neighbor graphs.
\newblock {\em \href{http://www.acm.org/jacm/}{J. Assoc. Comput. {Mach.}}}, 44(1):1--29, 1997.

\bibitem[O'R87]{o-agta-87}
\href{http://cs.smith.edu/~orourke/}{J.~{O'Rourke}}.
\newblock {\em Art Gallery Theorems and Algorithms}.
\newblock The International Series of Monographs on Computer Science. Oxford
  University Press, 1987.

\bibitem[PA95]{pa-cg-95}
\href{http://www.math.nyu.edu/~pach}{J.~{Pach}} and \href{http://www.cs.duke.edu/~pankaj}{P.~K.~{Agarwal}}.
\newblock {\em Combinatorial Geometry}.
\newblock John Wiley \& Sons, 1995.

\bibitem[SVY09]{svy-dosg-09}
C.~Sommer, E.~Verbin, and W.~Yu.
\newblock Distance oracles for sparse graphs.
\newblock In {\em Proc. 50th Annu. IEEE Sympos. Found. Comput. Sci.\CNFFOCS},
  pages 703--712, Georgia, USA, 2009.

\bibitem[SW98]{sw-gsta-98}
W.~D. Smith and N.~C. Wormald.
\newblock Geometric separator theorems and applications.
\newblock In {\em Proc. 39th Annu. IEEE Sympos. Found. Comput. Sci.\CNFFOCS},
  pages 232--243, 1998.

\end{thebibliography}

\appendix

\section{Separating the guards}
\apndlab{sperating:guards}

Here, we show that given a set of guards $\SetX$ how one can
(efficiently) separates $\SetX$ into two subsets, one of them of size
$\Theta(k)$, by adding $\Theta(k^{1-1/d})$ guards, such that the
underlying ball hitting instance gets separated into two separate
subproblems. To this end, we prove an extension of
\thmrefpage{separator:main}.

\subsection{Preliminaries}

\begin{lemma}
    \lemlab{shield:ball}%
    Given a sphere $\sphereB$ in $\Re^d$ with center $c$ and radius
    $r$, and given a parameter $\lChuckSize, 0 < \lChuckSize \leq r$, one can
    compute in a set $\SetX$ of size $O\pth{\pth{r/\lChuckSize}^{d-1}}$, such
    that every ball of radius $\geq \lChuckSize$ that intersects $\sphereB$
    is stabbed by a point of $\SetX$. The time to compute $\SetX$ is
    $O\pth{\pth{r/\lChuckSize}^{d-1}}$ time.
\end{lemma}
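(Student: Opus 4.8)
The plan is to take $\SetX$ to be a grid‑based dense net \emph{not} on the sphere $\sphereB$ itself, but throughout a thin full‑dimensional shell around it. Fix a constant $\delta = \lChuckSize/4$, let $\Ring = \brc{ x \in \Re^d \sep{ \cardin{\dist{x}{c} - r } \leq 2\lChuckSize }}$ be the shell of width $2\lChuckSize$ about $\sphereB$, and let $\SetX$ be a $\delta$‑dense set in $\Ring$ obtained exactly as in \lemref{dense:set}: take the grid $\Grid$ of sidelength $\lChuckSize/(4\sqrt d)$ (so every cell has diameter $\leq \lChuckSize/4$) and let $\SetX$ be the set of grid vertices lying in $\Ring$. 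The reason one cannot simply put a dense net on $\sphereB$ — as one does in \thmref{separator:main} — is that a ball of radius $\geq\lChuckSize$ may be tangent to $\sphereB$, meeting it in essentially a single point; to stab all such balls the stabbing points must be spread through a neighborhood of $\sphereB$ of full dimension.

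For the size and time bounds, charge each vertex $v\in\SetX$ to the grid cell having $v$ as its minimum corner. These cells are pairwise disjoint, each has volume $\Theta_d(\lChuckSize^d)$, and each lies within $\lChuckSize/4$ of a point of $\Ring$, hence inside the fattened shell $\brc{ x \sep{ \cardin{\dist{x}{c}-r} \leq 3\lChuckSize }}$. Since $\lChuckSize\le r$, this fattened shell sits inside $\ballX{c}{4r}$ and has volume $O\pth{r^{d-1}\lChuckSize}$ (using $a^{d}-b^{d}\le d\,a^{d-1}(a-b)$ for $a\ge b\ge 0$). Dividing by the cell volume gives $\cardin{\SetX} = O\pth{(r/\lChuckSize)^{d-1}}$, and the same vertices can be enumerated column‑by‑column (solving one quadratic per column to locate where the shell meets the column) in $O\pth{(r/\lChuckSize)^{d-1}}$ time, exactly the style of computation in \lemref{dense:set}.

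For correctness, consider any ball $b = \ballX{p}{\rho}$ with $\rho \geq \lChuckSize$ and $b \cap \sphereB \neq \emptyset$, and pick $q \in b \cap \sphereB$. Let $u$ be a unit vector from $q$ toward $p$ (arbitrary if $p=q$) and set $y = q + (\lChuckSize/2)\,u$. A short triangle‑inequality computation — splitting on whether $\dist{p}{q}\ge\lChuckSize/2$ or not, and using $\dist{p}{q}\le\rho$ together with $\rho\ge\lChuckSize$ — shows $\dist{p}{y} \leq \rho - \lChuckSize/2$, so $\ballX{y}{\lChuckSize/2} \subseteq b$. Moreover $\dist{y}{q} \leq \lChuckSize/2$ and $\dist{q}{c}=r$, hence $\cardin{\dist{y}{c}-r}\le\lChuckSize/2$ and $y\in\Ring$. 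Therefore some $x\in\SetX$ satisfies $\dist{x}{y}\le\delta=\lChuckSize/4$, and then $\dist{p}{x} \le (\rho-\lChuckSize/2)+\lChuckSize/4 < \rho$, so $x$ lies strictly inside $b$, i.e.\ $x$ stabs $b$.

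The only genuinely delicate point is the conceptual one just flagged — that the stabbing set must leave $\sphereB$ and populate a $d$‑dimensional shell — together with checking that thickening by $\Theta(\lChuckSize)$ does not spoil the $O\pth{(r/\lChuckSize)^{d-1}}$ bound; this last is precisely where the hypothesis $\lChuckSize\le r$ is used. Everything else (the sliding step converting ``$b$ meets $\sphereB$'' into ``$b$ contains a $\Theta(\lChuckSize)$‑ball centered near $\sphereB$'', and the grid/volume counting) is routine.
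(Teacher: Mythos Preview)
Your proof is correct and follows essentially the same approach as the paper: both take a grid-based $\Theta(\lChuckSize)$-dense set of points in a full-dimensional shell of width $\Theta(\lChuckSize)$ around $\sphereB$ and bound its size by a volume argument. The only differences are cosmetic---you use a slightly finer grid and spell out the stabbing argument that the paper dismisses as ``easy to verify,'' and you enumerate grid points column-by-column where the paper uses a BFS on the grid.
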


\begin{proof}
    Consider the ring $\Ring = \ballX{c}{r+2\lChuckSize} \setminus
    \ballX{c}{r-2\lChuckSize}$, and pick a $\lChuckSize$-dense set in $\Ring$. To
    this end, consider the grid of sidelength $\lChuckSize/\sqrt{d}$, and
    take the vertices of the grid that are inside $\Ring$. To verify
    the claim on the size, observe that every such grid vertex, must
    be adjacent to a grid cell that is fully contained in the expanded
    ring $\ballX{c}{r+3\lChuckSize} \setminus \ballX{c}{r-3\lChuckSize}$. The volume
    of such a grid cell is $\Omega(\lChuckSize^d)$, and the volume of this
    ring is $O\pth{ r^{d-1} \lChuckSize}$, implying the claim. As for the
    running time claim, observe that the set of vertices of interest
    are grid connected, and a simple \si{BFS} would compute all of
    them.
    
    It is now easy to verify that such a dense set intersects any ball
    of radius $\geq \lChuckSize$ that intersects $\sphereB$.
\end{proof}

Let $\SetBalls$ be a set of balls in $\Re^d$ that are stabbed by a set
$\SetX$ of points in $\Re^d$. Let $k>0$ be a parameter. We show how to
compute a set $\SepSet$ of $O\pth{k^{1-1/d}}$ points and a sphere
$\sphereB$ such that;
\begin{compactenum}[\quad(A)]
    \item The set $\SetX$ is broken into two sets: (i) $\SetX_1$ --
    the points of $\SetX$ inside $\sphereB$, and (ii) $\SetX_2$ -- the
    points outside.
    
    \item $\cardin{\SetX_1} = \Theta(k)$.
    
    \item Every ball of $\SetBalls$ that intersects $\sphereB$, is
    stabbed by a point of $\SepSet$.
\end{compactenum}
% Similar to \secref{balanced:separator}, we find the minimal ball
% containing $k$ points of $\SetX$, and then employ a randomized
% argument.

\subsection{The Algorithm}

The input is a set of points $\SetX$ in $\Re^d$, and a parameter $k$.
\begin{compactenum}[\qquad (A)]
    \item Using the algorithm of \cite{hm-facsk-05}, compute in linear
    time a $2$-approximation to the smallest (closed) ball
    $\ballX{\cenA}{\radA}$ that contains $k$ points of $\SetX$, where
    $\cenA\in\Re^d$..
    
    \item Let $\radN=\radA / k^{1/d}$.
    
    \item Pick a number $\radB$ uniformly at random from the range
    $\pbrcS{\radA, 2\radA-\radN}$.
    
    \item Let $\SetX'$ represent the points of $\SetX$ in
    $\ballX{\cenA}{2\radA}$.
    
    \item Let $\ballB = \ballX{\cenA}{\radB}$.

    \item Compute the set $\SepSet$ as in \lemref{shield:ball}, of
    size $O\pth{\pth{ \radB / \radN }^{d-1}} = O\pth{k^{1-1/d}}$, for
    the sphere $\sphereB =
    \partial \ballB$ and parameter $l/2$.
    
    \item If a point $\pnt \in \SetX$ is in distance smaller than
    $\radN$ from $\sphereB$, we insert it into $\SepSet$.
\end{compactenum}
\bigskip

\noindent
We claim that in expectation, $\sphereB$ is the desired sphere, and
$\SepSet$ is the corresponding desired set of points.

\subsubsection{Correctness}

\begin{lemma}%
    \lemlab{correctness:1}%
    We have $\cardin{\ballB\cap\SetX} = \Theta(k)$ and
    $\cardin{\SetX'} = \Theta(k)$.
\end{lemma}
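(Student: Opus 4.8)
The plan is to sandwich both $\ballB$ and $\SetX'$ between two concentric balls centered at $\cenA$, derive the lower bound $\Omega(k)$ from the fact that the computed ball $\ballX{\cenA}{\radA}$ already contains $k$ points of $\SetX$, and derive the upper bound $O(k)$ from the doubling constant $\dblCd$ together with the $2$-approximation guarantee. Since $\radN = \radA/k^{1/d}\le \radA$, the interval $\pbrcS{\radA, 2\radA - \radN}$ is nonempty, so $\radA \le \radB < 2\radA$, which gives the chain of inclusions $\ballX{\cenA}{\radA}\subseteq \ballB\subseteq \ballX{\cenA}{2\radA}$; intersecting with $\SetX$ yields $\cardin{\ballX{\cenA}{\radA}\cap\SetX}\le \cardin{\ballB\cap\SetX}\le \cardin{\SetX'}$.

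For the lower bounds I would simply note that $\ballX{\cenA}{\radA}$ is, by construction, a ball containing $k$ points of $\SetX$, so $\cardin{\ballX{\cenA}{\radA}\cap\SetX}\ge k$; by the inclusions above, both $\cardin{\ballB\cap\SetX}\ge k$ and $\cardin{\SetX'}\ge k$.

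For the upper bound it suffices to bound $\cardin{\SetX'}=\cardin{\ballX{\cenA}{2\radA}\cap\SetX}$, since $\cardin{\ballB\cap\SetX}\le \cardin{\SetX'}$. Let $r^\ast$ be the radius of the true smallest ball enclosing $k$ points of $\SetX$; by the $2$-approximation guarantee of \cite{hm-facsk-05}, $\radA \le 2r^\ast$, and by the optimality of $r^\ast$ every ball of radius strictly less than $r^\ast$ contains at most $k-1 < k$ points of $\SetX$. Now cover $\ballX{\cenA}{2\radA}$ by balls of radius $\radA/4 \le r^\ast/2 < r^\ast$: passing from radius $2\radA$ to radius $\radA/4$ is three successive halvings, so by \defref{dbl:constant} this takes at most $\dblCd^{3}$ balls. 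Each of these balls is ``light'' (at most $k$ points), hence $\cardin{\SetX'}\le \dblCd^{3}\,k = O(k)$, and likewise $\cardin{\ballB\cap\SetX}= O(k)$. Combining, both quantities lie in $[\,k,\ \dblCd^{3}k\,]$, i.e.\ are $\Theta(k)$ with the constant depending only on $d$.

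The only step requiring genuine care — the ``main obstacle'' — is this upper bound: one must combine the two facts (balls of radius below $r^\ast$ are light, and $\radA\le 2r^\ast$) so as to choose a covering radius, namely $\radA/4$, that is simultaneously a fixed fraction of $2\radA$ (so that only $O(1)$ balls are needed) and strictly smaller than $r^\ast$ (so that each ball carries $<k$ points). Everything else is routine inclusion bookkeeping.
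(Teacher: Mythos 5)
Your proof is correct and follows essentially the same sandwich-and-cover argument as the paper: lower bound from $\ballX{\cenA}{\radA}\subseteq\ballB$, upper bound by covering $\ballX{\cenA}{2\radA}$ with small balls via the doubling constant. The only difference is that you cover by balls of radius $\radA/4 < r^\ast$ (costing $\dblCd^3$), whereas the paper uses radius $\radA/2\le r^\ast$ (costing $\dblCd^2$); your extra halving is actually the more careful choice, since a ball of radius exactly $r^\ast$ need not contain at most $k$ points.
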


\begin{proof}
    Since $\ballB$ contains $\ballX{\cenA}{\radA}$,
    $\cardin{\SetX'}\geq\cardin{\ballB\cap\SetX}\geq k$ by
    construction.  Now by \defref{dbl:constant}, the ball $\ballB$ can
    be covered by $\pth[]{\dblCd}^2$ balls of radius $\radA/2$. Each
    one of them contains at most $k$ points, as $\ballX{\cenA}{\radA}$
    is (up to a factor of two) the smallest ball containing $k$ points
    of $\SetX$. Thus, $\cardin{\ballB\cap\SetX} \leq
    \cardin{\SetX'}\leq \pth[]{\dblCd}^2 k$.
\end{proof}

\begin{lemma}
    \lemlab{sphere:separates}%
    Let $\SetBalls$ be a set of balls stabbed by $\SetX$.  Every ball
    in $\SetBalls$ that intersects $\sphereB$, is stabbed by a point
    of $\SepSet$.
\end{lemma}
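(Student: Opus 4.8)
The plan is a short two-case analysis driven by the radius of the offending ball, with the split point being exactly the shield parameter $\radN/2$ fed to \lemref{shield:ball}. Fix a ball $\ballA \in \SetBalls$ that intersects the sphere $\sphereB = \partial\ballB$; by hypothesis $\ballA$ is stabbed by some point $\pnt \in \SetX$. I will argue that $\ballA$ is stabbed by a point of $\SepSet$ according to whether $\radiusX{\ballA} \geq \radN/2$ or $\radiusX{\ballA} < \radN/2$.

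In the first case, $\ballA$ is a ball of radius at least $\radN/2$ that meets $\sphereB$. By construction (step (F) of the algorithm) $\SepSet$ contains, as a subset, the point set produced by \lemref{shield:ball} applied to the sphere $\sphereB$ with parameter $\radN/2$ — and the lemma is applicable since $\radN/2 \leq \radA \leq \radB$, so the parameter does not exceed the radius of $\sphereB$. Hence some point of that set, and therefore of $\SepSet$, stabs $\ballA$.

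In the second case $\ballA$ is small: its diameter is strictly less than $\radN$. Pick any point $q \in \ballA \cap \sphereB$, which exists precisely because $\ballA$ intersects $\sphereB$. Since both $\pnt$ and $q$ lie in $\ballA$, we get $\dist{\pnt}{q} \leq 2\,\radiusX{\ballA} < \radN$, so $\pnt$ lies within distance strictly less than $\radN$ of $\sphereB$. Then step (G) of the algorithm inserted $\pnt$ into $\SepSet$, and since $\pnt$ stabs $\ballA$ we are done.

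There is essentially no hard step here: this lemma is the easy direction of the separator construction, and the two cases are exhaustive precisely because the radius threshold in the split coincides with the shield parameter, leaving no gap. The genuine work — the bound $\cardin{\SepSet} = O(k^{1-1/d})$ and the estimate $\cardin{\ballB \cap \SetX} = \Theta(k)$ — lives in the surrounding statements (\lemref{shield:ball}, \lemref{correctness:1}, and the accompanying expectation argument on $\cardin{\SepSet}$), not in this proof.
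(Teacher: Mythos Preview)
Your proof is correct and follows essentially the same two-case split as the paper's own argument, with the threshold at radius $\radN/2$ matching the shield parameter. The only cosmetic difference is that in the small-ball case the paper routes the distance estimate through the center of $\ballA$ (center within $\radN/2$ of $\sphereB$, stabbing point within $\radN/2$ of the center), whereas you pick a witness point $q\in\ballA\cap\sphereB$ and bound by the diameter directly; both yield the same conclusion.
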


\begin{proof}
    By construction, every ball of $\SetBalls$ that intersects
    $\sphereB$, whose size is at least $\radN/2$, is stabbed by a
    point of $\SepSet$.  For the case of smaller balls, consider that
    every ball in $\SetBalls$ is stabbed by some point of $\SetX$. Now
    any ball of size less than $\radN/2$ intersecting $\sphereB$, must
    be centered within a distance of $\radN/2$ from
    $\sphereB$. Consequently, the point of $\SetX$ that stabs it, must
    be within a distance of $\radN$ from $\sphereB$. But by
    construction, $\SepSet$ also contains this point.
\end{proof}

\begin{lemma}
    \lemlab{separator:size}%
    Let $Y = \cardin{\SepSet}$. We have that $\Ex{Y} =
    O\pth{k^{1-1/d}}$.
\end{lemma}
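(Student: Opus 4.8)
The plan is to follow the same computation used for \lemref{small:separator}. First I would split the separator according to its two sources: writing $Y_1$ for the number of points produced by the invocation of \lemref{shield:ball} on $\sphereB$ (step~(F) of the algorithm) and $Z$ for the number of points of $\SetX$ inserted because they lie within distance $\radN$ of $\sphereB$ (step~(G)), we have $Y \leq Y_1 + Z$. The bound of \lemref{shield:ball} gives $Y_1 = O\pth{\pth{\radB/\radN}^{d-1}}$, and since $\radB \leq 2\radA$ and $\radA/\radN = k^{1/d}$, this is $O\pth{k^{1-1/d}}$ with probability one. So it remains to show $\Ex{Z} = O\pth{k^{1-1/d}}$.

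For this I would use two observations. The first is a support restriction: a point of $\SetX$ can contribute to $Z$ only if it lies in $\ballX{\cenA}{2\radA}$, because $\radB \leq 2\radA - \radN$, so every point within distance $\radN$ of $\sphereB = \partial\ballX{\cenA}{\radB}$ is within distance $\radB + \radN \leq 2\radA$ of $\cenA$ and hence belongs to the set $\SetX'$ of the algorithm, which satisfies $\cardin{\SetX'} = \Theta(k)$ by \lemref{correctness:1}. The second is the standard ``random shell'' estimate: for a fixed $\pnt \in \SetX'$, let $X_\pnt$ be the indicator of the event that $\pnt$ is within distance $\radN$ of $\sphereB$. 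Then $X_\pnt = 1$ exactly when $\radB$ lands in the interval $[\dist{\pnt}{\cenA} - \radN,\ \dist{\pnt}{\cenA} + \radN]$ of length at most $2\radN$, while $\radB$ is uniform on $[\radA, 2\radA - \radN]$, whose length is $\radA - \radN \geq \radA/2$ once $k$ is large enough that $\radN = \radA/k^{1/d} \leq \radA/2$. Hence $\Prob{X_\pnt = 1} \leq 2\radN/(\radA/2) = 4\radN/\radA = 4/k^{1/d}$, and by linearity of expectation $\Ex{Z} \leq \cardin{\SetX'} \cdot 4/k^{1/d} = \Theta(k) \cdot 4/k^{1/d} = O\pth{k^{1-1/d}}$. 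Combining with the bound on $Y_1$ gives $\Ex{Y} = O\pth{k^{1-1/d}}$.

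I do not expect any real obstacle; the one point that needs care — and the reason the algorithm bothers to introduce $\SetX'$ and to run the smallest-enclosing-ball routine on exactly $k$ points — is that one must not bound $Z$ by the trivial quantity $\cardin{\SetX}$, since that only yields $O(n/k^{1/d})$. Confining attention to $\SetX'$ and invoking \lemref{correctness:1} is precisely what turns that $n$ into $\Theta(k)$, so that the expected size of the set added in step~(G) matches the $O\pth{k^{1-1/d}}$ contribution of step~(F). Apart from this, the argument is a transcription of the estimate in \lemref{small:separator}, with $n$ replaced by $k$ and the sampling range shortened by $\radN$.
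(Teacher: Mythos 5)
Your proof is correct and follows essentially the same route as the paper: bound the deterministic contribution from \lemref{shield:ball} by $O(k^{1-1/d})$, then bound the expected number of points of $\SetX$ falling within $\radN$ of the random sphere using the standard indicator/linearity argument, with the crucial observation that only points of $\SetX'$ (of size $\Theta(k)$ by \lemref{correctness:1}) can contribute. The only difference is that you make explicit the support restriction (why only $\SetX'$ matters), which the paper uses silently by summing over $\pnt \in \SetX'$; your constants also differ slightly from the paper's but the asymptotic bound is the same.
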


\begin{proof}
    Let $Z$ be the number of points of $\SetX$, that were added to
    $\SepSet$. We claim that $\Ex{Z} = O\pth{k^{1-1/d}}$.  To this
    end, for a point $\pnt \in \SetX'$, let $X_\pnt$ be the indicator
    variable that is one if and only if $\pnt$ is in distance at most
    $\radN$ from $\sphereB$. The algorithm picked the radius $\radB$
    uniformly at random in the interval $[\radA,
    2\radA-\radN]$. Furthermore, $X_\pnt= 1$ if and only if
    $\dist{\pnt}{\cenA} - \radN \leq \radB \leq \dist{\pnt}{\cenA} +
    \radN$. This happens only if $\radB$ falls into an interval
    $I_\pnt$ that is of length at most $2\radN$ centered at
    $\dist{\pnt}{\cenA}$. As such, we have
    \begin{equation*}
        \Prob{\MakeBig X_{\pnt}=1} %
        =%
        \frac{ \lengthX{ \MakeSBig  I_\pnt \cap [r,2r-\radN] }}{\lengthX{
              \MakeSBig [r,2r-\radN]}} %
        \leq%
        \frac{2\radN}{r-\radN}%
        \leq%
        \frac{3\radN}{\radA}%
        = %
        \frac{3\pth{\radA / k^{1/d}}}{\radA} \leq \frac{3}{k^{1/d}}.
    \end{equation*}
    Now, by linearity of expectation, we have that $\Ex{Z} =
    \sum_{\pnt \in \SetX'} \Ex{ X_\pnt } \leq 3\cardin{\SetX'}/
    {k^{1/d}}\leq 3\pth[]{\dblCd}^2 k^{1-1/d}$. This implies the
    claim, as $Y = Z + O\pth{n^{1-1/d} }$.
\end{proof}

\subsection{The result}
\begin{theorem}[Restatement of \thmref{ball:separator}.]
    \thmlab{ball:separator:restate}%
    Let $\SetX$ be a set of points in $\Re^d$, and $k>0$ be an integer
    sufficiently smaller than $\cardin{\SetX}$.  One can compute, in
    $O\pth{\cardin{\SetX}}$ expected time, a set $\SepSet$ of
    $O\pth{k^{1-1/d}}$ points and a sphere $\sphereB$ containing
    $\Theta(k)$ points of $\SetX$ inside it, such that for any set
    $\SetBalls$ of balls stabbed by $\SetX$, we have that every ball
    of $\SetBalls$ that intersects $\sphereB$ is stabbed by a point of
    $\SepSet$.
\end{theorem}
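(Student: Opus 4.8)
The plan is to use the randomized procedure already stated above, so the work left is just to assemble the three guarantees and bound the running time --- essentially quoting \lemref{correctness:1}, \lemref{sphere:separates}, and \lemref{separator:size} in sequence, in the same spirit as the proof of \thmref{separator:main}. First I would run one round of the procedure: compute, in linear time via \cite{hm-facsk-05}, a $2$-approximation $\ballX{\cenA}{\radA}$ to the smallest ball enclosing $k$ points of $\SetX$; set $\radN = \radA/k^{1/d}$; draw $\radB$ uniformly from $\pbrcS{\radA,\, 2\radA - \radN}$; let $\sphereB = \partial\ballX{\cenA}{\radB}$; and form $\SepSet$ as the union of the ``shield'' set of \lemref{shield:ball} for $\sphereB$ with density parameter $\radN/2$ (which has size $O\pth{(\radB/\radN)^{d-1}} = O\pth{k^{1-1/d}}$) together with every point of $\SetX$ within distance $\radN$ of $\sphereB$. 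Each round costs $O\pth{\cardin{\SetX}}$: the enclosing ball is linear, the shield set is built by a grid BFS and has $O\pth{k^{1-1/d}} = O\pth{\cardin{\SetX}}$ points, and the near-sphere scan over $\SetX$ is linear.

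Next I would observe that the ``partition'' and ``stabbing'' properties hold for every outcome of the random draw, so they carry over to the final (possibly re-run) output without any probabilistic argument. For the count: \lemref{correctness:1} gives $\cardin{\SetX \cap \ballX{\cenA}{\radB}} = \Theta(k)$ --- the lower bound because $\ballX{\cenA}{\radB} \supseteq \ballX{\cenA}{\radA}$, and the upper bound by covering $\ballX{\cenA}{\radB}$ with $(\dblCd)^2$ balls of radius $\radA/2$, each enclosing at most $k$ points of $\SetX$ --- so $\sphereB$ contains $\Theta(k)$ points of $\SetX$, as required. For stabbing: \lemref{sphere:separates} handles both regimes at the tuned scale $\radN$ --- a ball of $\SetBalls$ meeting $\sphereB$ of radius $\geq \radN/2$ is hit by the shield set (\lemref{shield:ball}), and a smaller such ball must be centered within $\radN/2$ of $\sphereB$, so the point of $\SetX$ stabbing it (one exists because $\SetX$ hits $\SetBalls$) lies within $\radN$ of $\sphereB$ and is therefore in $\SepSet$.

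Finally I would control $\cardin{\SepSet}$ by the Markov-and-rerun trick of \thmref{separator:main}. \lemref{separator:size} yields $\Ex{\cardin{\SepSet}} = O\pth{k^{1-1/d}}$: the shield set contributes a deterministic $O\pth{k^{1-1/d}}$, and the expected number of points of $\SetX' = \SetX \cap \ballX{\cenA}{2\radA}$ within $\radN$ of $\sphereB$ is $O\pth{k^{1-1/d}}$, since such a point forces $\radB$ into an interval of length $\leq 2\radN$ --- an event of probability $O\pth{\radN/\radA} = O\pth{k^{-1/d}}$ --- and $\cardin{\SetX'} = \Theta(k)$. By Markov's inequality $\cardin{\SepSet}$ is within twice its expectation with probability at least $1/2$; otherwise I discard and rerun, which is safe because the two deterministic properties survive any rerun. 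After a constant expected number of rounds all three properties hold, for a total expected running time $O\pth{\cardin{\SetX}}$. There is no real obstacle here beyond the earlier lemmas; the one genuinely delicate point --- that $\SetBalls$ is an arbitrary, possibly infinite family --- is already neutralized by \lemref{shield:ball} together with the split of $\SetBalls$ into balls large and small relative to $\radN = \radA/k^{1/d}$, which is precisely the scale making the shield size and the expected near-sphere count both $O\pth{k^{1-1/d}}$.
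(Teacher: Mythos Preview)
Your proposal is correct and follows essentially the same route as the paper: the appendix presents the randomized algorithm and then proves \lemref{correctness:1}, \lemref{sphere:separates}, and \lemref{separator:size}, after which the theorem is stated with the proof left implicit; you have simply written that implicit assembly out, including the Markov-and-rerun step borrowed from \thmref{separator:main} to turn the expected size bound of \lemref{separator:size} into a deterministic $O\pth{k^{1-1/d}}$ while keeping the running time expected linear. The only cosmetic point worth noting is that in the first paragraph you add ``every point of $\SetX$'' within $\radN$ of $\sphereB$, whereas in the size analysis you correctly restrict to $\SetX' = \SetX \cap \ballX{\cenA}{2\radA}$; these agree because the choice $\radB \le 2\radA - \radN$ forces any point of $\SetX$ within $\radN$ of $\sphereB$ to lie in $\SetX'$.
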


% \InsertAppendixOfProofs
% \AppendixOfProofs

\end{document}